\title{Two Choices are Enough for P-LCPs, USOs, and Colorful Tangents}
\author{Michaela Borzechowski}{Department of Mathematics and Computer Science, Freie Universität Berlin, Germany}{michaela.borzechowski@fu-berlin.de}{}{DFG within GRK~2434 \emph{Facets of Complexity}.}
\author{John Fearnley}{Department of Computer Science, University of Liverpool, UK}{john.fearnley@liverpool.ac.uk}{https://orcid.org/0000-0003-0791-4342}{EPSRC Grant EP/W014750/1.}
\author{Spencer Gordon}{Department of Computer Science, University of Liverpool, UK}{spencer.gordon@liverpool.ac.uk}{}{EPSRC Grant EP/W014750/1.}
\author{Rahul Savani}{The Alan Turing Institute; and Department of Computer Science, University of Liverpool, UK}{rahul.savani@liverpool.ac.uk}{https://orcid.org/0000-0003-1262-7831}{EPSRC Grant EP/W014750/1.}
\author{Patrick Schnider}{Department of Computer Science, ETH Zürich, Switzerland}{patrick.schnider@inf.ethz.ch}{https://orcid.org/0000-0002-2172-9285}{}
\author{Simon Weber}{Department of Computer Science, ETH Zürich, Switzerland}{simon.weber@inf.ethz.ch}{https://orcid.org/0000-0003-1901-3621}{Swiss National Science Foundation under project no. 204320.}
\authorrunning{M. Borzechowski, J. Fearnley, S. Gordon, R. Savani, P. Schnider, S. Weber} 
\keywords{P-LCP, Unique Sink Orientation, \texorpdfstring{$\alpha$}{alpha}-Ham Sandwich, search complexity, TFNP, UEOPL} 
\newcommand{\abs}[1]{\lvert #1 \rvert}
\newcommand{\f}[1]{\relax\ifmmode#1\else{$#1$}\fi}
\newcommand{\dimension}{\f{d}\xspace}
\newcommand{\Reals}{\mathbb{R}\xspace}
\newcommand{\Naturals}{\mathbb{N}\xspace}
\newcommand{\R}{\Reals}
\newcommand{\xor}{\f{\oplus}\xspace}
\newcommand{\dimensionA}{\f{d}\xspace}
\newcommand{\dimensionB}{\f{n}\xspace}
\newcommand{\simplex}{\f{\Delta}\xspace}
\newcommand{\orientationCube}{\f{O}\xspace}
\newcommand{\orientationGrid}{\f{\sigma}\xspace}
\newcommand{\szabo}{Szab{\'o}}
\newcommand{\SWC}{\szabo-Welzl condition\xspace}
\newcommand{\Comp}[1]{\textsf{#1}\xspace}
\newcommand{\Problem}[1]{\textsc{#1}\xspace}
\newcommand{\lcp}{\Problem{LCP}}
\newcommand{\glcp}{\Problem{GLCP}}
\newcommand{\plcp}{\Problem{P\nobreakdashes-LCP}}
\newcommand{\pglcp}{\Problem{P\nobreakdashes-GLCP}}
\newcommand{\gfixp}{\Problem{Lin-Bellman}}
\newcommand{\pfixp}{\Problem{P\nobreakdashes-Lin-Bellman}}
\newcommand{\swsHS}{\Problem{SWS-Colorful-Tangent}}
\newcommand{\swsTwoHS}{\Problem{SWS-2P-Colorful-Tangent}}
\newcommand{\aHS}{\Problem{$\alpha$\nobreakdashes-Ham-Sandwich}}
\newcommand{\deftab}{0.6cm}
\newcommand{\reals}{\mathbb R}
\DeclareMathOperator{\aff}{aff}
\let\dim\undefined
\DeclareMathOperator{\dim}{dim}
\newcommand{\CubeUSO}{\Problem{Cube-USO}}
\newcommand{\GridUSO}{\Problem{Grid-USO}}
\newcommand{\Grid}{\f{\Gamma}\xspace}
\newcommand{\partitionLength}{\f{n}\xspace}
\newcommand{\vertex}[1]{\f{(#1)}}
\newcommand{\myparagraph}[1]{\medskip \noindent \textbf{#1}}
\begin{document}

\maketitle

\begin{abstract}
We provide polynomial-time reductions between three search problems
from three distinct areas: the P-matrix linear complementarity problem~(P-LCP), finding
the sink of a unique sink orientation~(USO), and a variant of the
$\alpha$-Ham Sandwich problem. For all three settings, we show that ``two
choices are enough'', meaning that the general non-binary version of the problem can be
reduced in polynomial time to the binary version. This specifically means that generalized P-LCPs
are equivalent to P-LCPs, and grid USOs are equivalent to cube USOs.
These results are obtained by showing that both the P-LCP and our $\alpha$-Ham Sandwich variant are equivalent to a new problem we introduce, \pfixp. This problem can be seen as a new tool for formulating problems as P-LCPs.
\end{abstract}

\newpage
\section{Introduction}

\begin{figure}[htb]
    \centering
    \pgfdeclarelayer{bg}    
    \pgfsetlayers{bg,main}  

\begin{tikzpicture}
    \begin{pgfonlayer}{bg}
    \fill[rounded corners, lightgray, opacity=0.5] (1, -1.5) rectangle (3, 4.5) {};
    \fill[rounded corners, lightgray, opacity=0.5] (-7.75, -1.5) rectangle (-2.25, 4.5) {};
    \fill[rounded corners, lightgray, opacity=0.5] (3.25, -1.5) rectangle (5.75, 4.5) {};
    \end{pgfonlayer}
    
    \node (PFixP) at (-0.625,2) {\pfixp};

    \node at (2,4) {\small Algebra};
    \node (PGLCP) at (2,1) {\pglcp};
    \node (PLCP) at (2,3) {\plcp};

    \node at (-5,4) {\small Geometry};
    \node (SWSaHS) at (-5,-1) {\aHS};
    \node (SWS-Lower-Tangent) at (-5,1) {\swsHS};
    \node (SWS-2P-Lower-Tangent) at (-5,3) {\swsTwoHS};

    \node at (4.5,4) {\small Combinatorics};
    \node (aGridUSO) at (4.5,-1) {$\alpha$-\GridUSO};
    \node (GridUSO) at (4.5,1) {\GridUSO};
    \node (CubeUSO) at (4.5,3) {\CubeUSO};

    \path[->, very thick] (PLCP) edge (PGLCP); 
    \path[->, very thick] (CubeUSO) edge[] (GridUSO); 
    \path[->, very thick] (GridUSO) edge (aGridUSO); 
    \path[->, very thick] (PLCP) edge node[midway,below]{\cite{stickney1978digraph}} (CubeUSO); 
    \path[->, very thick] (PGLCP) edge node[midway,above]{\cite{gaertner2008grids}} (GridUSO);
    \path[->, very thick] (SWS-2P-Lower-Tangent) edge (SWS-Lower-Tangent);
    \path[->, very thick] (SWS-Lower-Tangent) edge (SWSaHS);
    
    \path[->, very thick, purple] (GridUSO) edge[bend right] node[sloped, midway, below]{ Thm. \ref{thm:generalizationIsUSO}} (CubeUSO); 

    \path[->, very thick, purple] (SWSaHS) edge node[pos=0.465, below]{\Cref{rem:alphaHam2alphaGrid}
    } (aGridUSO); 

    \path[<->, very thick, purple] (PLCP) edge node[sloped,midway,above]{Thm. \ref{thm:PFixP=PLCP}} (PFixP); 

    \path[->, very thick, purple] (PGLCP) edge[bend left=10] node[sloped,midway, below]{Lem. \ref{lem:PGLCP2PFixP}} (PFixP); 

    \path[->, very thick, purple] (PFixP) edge node[sloped, midway, below]{\Cref{thm:PFixP-to-SWS2PaHS}} (SWS-2P-Lower-Tangent); 

    \path[->, very thick, purple] (SWS-2P-Lower-Tangent) edge[bend left=15] node[midway, above]{\Cref{lem:swsTwoHs2cubeUSO}} (CubeUSO); 

    \path[->, very thick, purple] (SWS-Lower-Tangent) edge[bend right=10] node[sloped,midway, below]{Thm. \ref{thm:SWSaHS2PFixP}} (PFixP); 
    
    \path[->, very thick, purple] (SWS-Lower-Tangent) edge[bend right=20] node[midway, below]{\Cref{lem:swsTwoHs2cubeUSO}} (GridUSO); 

    \end{tikzpicture}
    \caption{Red: Reductions we show in this paper. Black: Existing reductions and trivial inclusions.}
    \label{fig:overview}
\end{figure}
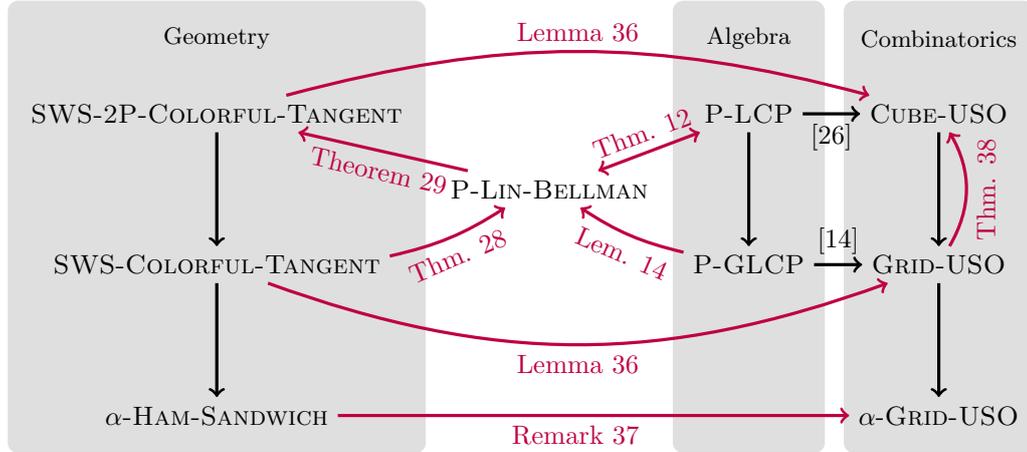

In this paper we study three search problems from three distinct areas: the problem of
solving a P-matrix linear complementarity problem (P-LCP), an algebraic problem, the
problem of finding the sink of a unique sink orientation (USO), a combinatorial
problem, and a variant of the $\alpha$-Ham Sandwich problem, a geometric
problem. Our results are a suite of reductions between these problems, which
are shown in \cref{fig:overview}. 

There are two main themes for these reductions.
\begin{itemize}
\item \textbf{Two choices are enough.} For all three settings, the problems can
be restricted to variants in which all choices are binary. In all three
cases we show that the general non-binary problem can be reduced in
polynomial time to the binary version. 

\item \textbf{A new tool for working with P-LCPs.} We introduce the \pfixp
problem, which serves as a crucial intermediate problem for our reductions with
P-LCPs, and provides a new tool for showing equivalence to the P-LCP problem.
\end{itemize}
We now describe each of the three problems and our results.

\myparagraph{P-Matrix LCPs.}
In the \emph{Linear Complementarity Problem} (LCP), we are given an $n \times n$ 
matrix $M$ and an $n$-dimensional vector $q$, and we are asked to find two
$n$-dimensional vectors $w, z$ such
that $w = Mz + q$, both $w$ and $z$ are non-negative, and $w$ and $z$ satisfy
the \emph{complementarity} condition of $w^Tz = 0$. In general, there is no
guarantee that an LCP has a solution. However, if $M$ is promised to be a P-matrix, that is, 
all its principal minors are positive, then
the LCP problem always has a unique solution for every possible
$q$~\cite{murty1972pmatrix}, and we call this the \plcp problem.

The \plcp problem is important because many optimization problems reduce to it,
for example, \emph{Linear Programming}~\cite{gaertner2006lpuso,HI13} and \emph{Strictly Convex Quadratic Programming}~\cite{CPS09,murty1988linear}, and solving a
\emph{Simple Stochastic Game (SSG)}~\cite{gaertner2005stochasticgames,SvenssonV06}. However, the
complexity status of the \plcp remains a major open question. The \plcp problem is not
known to be polynomial-time solvable, but \Comp{NP}-hardness (in the sense that an oracle for it could be used to solve
\Problem{SAT} in polynomial time) would imply
$\Comp{NP}=\Comp{co-NP}$~\cite{megiddo1988note}.

The \plcp problem naturally encodes problems that have \emph{two choices}. This
property arises from the complementarity condition, where for each index $i$,
one must choose either $w_i = 0$ or $z_i = 0$. 
Thus the problem can be seen as making one of two choices for each of the $n$ possible dimensions of $w$ and $z$. 
For example, this means that the direct encoding of a Simple Stochastic Game as a
\plcp~\cite{HI13} only works for \emph{binary} games, in which each vertex has
exactly two outgoing edges, and for each vertex the choice between the two
outgoing edges is encoded by choosing between $w_i$ and $z_i$.  

To directly encode a non-binary SSG one must instead turn to \emph{generalized}
LCPs, which allow for more than two choices in the complementarity condition~\cite{SvenssonV06}.
Generalized LCPs, which were defined by Habetler and Szanc~\cite{PGLCPUniqueness}, 
also have a P-matrix version, which we will refer to as \pglcp.

\myparagraph{Two choices are enough for P-LCPs.}
Our first main result is to show that \pglcp and \plcp are polynomial-time
equivalent problems. Every \plcp is a \pglcp by definition, so our contribution
is to give a polynomial-time reduction from \pglcp to \plcp, meaning that any
problem that can be formulated as a \pglcp can also be efficiently recast as a \plcp. Such a result was already claimed in 1995, but later a counterexample to a crucial step in the proof was found by the same author~\cite{pglcpProofClaim,pglcpProofMistake}.

To show this result we draw inspiration from the world of infinite games. SSGs
are a special case of \emph{stochastic discounted games} (SDGs), which are
known to be reducible in polynomial-time to the \plcp problem~\cite{HI13,JS08}.
This reduction first writes down a system of \emph{Bellman equations} for the
game, which are a system of linear equations using $\min$ and $\max$
operations. These equations are then formulated as a \plcp using the
complementarity condition to encode the $\min$ and $\max$ operations. 

We introduce a generalization of the Bellman equations for SDGs, which we call
\pfixp. We show that the existing reduction from SDGs to \plcp continues to
work for \pfixp, and we show that we can polynomial-time reduce \plcp to \pfixp. Thus, we
obtain a Bellman-equation type problem that is equivalent to \plcp.

Then we use \pfixp as a tool to reduce \pglcp to \plcp. Specifically we show
that a \pglcp can be reduced to \pfixp. Here our use of \pfixp as an
intermediary shows its usefulness: while it is not at all clear how one can
reduce \pglcp to \plcp directly, when both problems are formulated as \pfixp
instances,
their equivalence essentially boils down to the fact that $\max(a, b, c) =
\max(a, \max(b, c))$. 

\myparagraph{Unique Sink Orientations.}
A Unique Sink Orientation (USO) is an orientation of the $n$-dimensional
hypercube such that every subcube contains a unique sink. An example of a USO can be found in \Cref{fig:USO}. The goal is to find
the unique sink of the entire hypercube. 
USOs were introduced
by Stickney and Watson~\cite{stickney1978digraph} as a combinatorial
abstraction of the candidate solutions of the \plcp and have been studied ever
since \szabo{} and Welzl formally defined them in 2001~\cite{szabo2001usos}. 

\begin{figure}[hbt]
    \centering
    \includegraphics[scale=0.8]{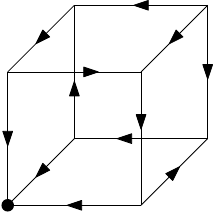}
    \caption{A 3-dimensional USO. The marked vertex denotes its unique global sink.}
    \label{fig:USO}
\end{figure}

USOs have received much attention because many problems are known to reduce to
them. This includes linear programming and more generally convex quadratic
programming, simple stochastic games, and finding the smallest enclosing ball
of given points. 

As with P-LCPs, USOs naturally capture problems in which there are two choices.
Each dimension of the cube allows us to pick between one of two faces, and thus
each vertex of the cube is the result of making $n$ binary choices. To remove this
restriction, prior work has studied unique sink orientations of products of
complete graphs~\cite{gaertner2008grids}, which the literature somewhat
confusingly calls \emph{grid USOs}. For example, as with \plcp, the direct reduction
from SSGs to USOs only works for binary games, while a direct reduction from
non-binary games yields a grid USO.

\myparagraph{Two choices are enough for USOs.}
Our second main result is to show that grid USOs and cube USOs are
polynomial-time equivalent problems. Every cube USO is a grid USO by
definition, so our contribution is to provide a polynomial-time reduction from
grid USOs to cube USOs. 
Despite many researchers suspecting that grid USOs are a strict generalization of cube USOs, our result shows that at least in the promise setting, the two problems are computationally equivalent. For the reduction we embed a $k$-regular grid into a (also $k$-regular) $k$-cube. The main challenge to overcome is that the $k$-cube contains many more vertices and significantly more edges. These edges need to be oriented in such a way to be consistent with the orientation of the grid, and this orientation also needs to be computable locally, i.e., without looking at the whole orientation of the grid.

It should be noted that while \plcp is known to reduce to cube USOs, and
\pglcp is known to reduce to grid USOs, neither of our ``two choices is enough'' results imply each
other. This is because there is no known reduction from a USO-type problem to an LCP-type
problem.

\myparagraph{Equivalence between P-LCP and Colorful Tangent problems.}
The Ham Sandwich theorem is a classical theorem in computational geometry:
given $d$ sets of points in $\R^d$, we can find a hyperplane that
simultaneously \emph{bisects} all of the sets. Steiger and
Zhao~\cite{alphaHamSandwich} have shown that in a restricted input setting we can not only bisect, but cut off arbitrary fractions of each point set. This is the so-called $\alpha$-Ham Sandwich
theorem: if the point sets are \emph{well-separated}, then for each vector 
$\alpha$ there exists a unique choice of one point per set, such that the hyperplane spanned by these points has exactly $\alpha_i$ of the points from set $i$ above. The $\alpha$-Ham Sandwich problem asks to determine this unique choice of one point per set.

In this paper, we consider a restricted variant of the $\alpha$-Ham Sandwich
problem. Firstly, we slightly strengthen the assumption of well-separation to
\emph{strong} well-separation. Secondly we restrict the input vector $\alpha$ such that each entry takes either be the minimum or maximum possible value. This means 
that for every point set, either all points of the set must lie above, or all
below the desired $\alpha$-cut. Thus, the $\alpha$-cuts we search for are
tangents to all the point sets. Combining the assumption of strong
well-separation and the solutions being tangents, we call this problem \swsHS.
We also consider the binary variant of the problem, which we call \swsTwoHS,
where we restrict the size of each point set to $2$; finding an $\alpha$-cut then corresponds to a series of binary choices, one per set.

Here our contribution is to show that \swsHS is polynomial-time equivalent to the \plcp
problem. Our new intermediate problem \pfixp plays a crucial role in this
result: we give a polynomial-time reduction from \swsHS to \pfixp, and a
polynomial-time reduction from \pfixp to \swsTwoHS.
This also shows
that two choices are enough for this problem as well. 

For these reductions, we consider the point-hyperplane dual of the colorful tangent problems.
In the dual, every input point becomes a hyperplane, and the solution we search for is a point lying on one hyperplane of each color, and lying either above or below all other hyperplanes. This can be encoded in the $\min$ and $\max$ operations of the \pfixp problem.
This demonstrates the usefulness of \pfixp as an intermediate problem, since it has
allowed us to show what is -- to the best of our knowledge -- the first polynomial-time 
equivalence between P-LCP and another problem. 
    
\myparagraph{Related work.}
All problems we consider in this paper are promise search problems which lie in the complexity class \Comp{PromiseUEOPL}~\cite{borzechowski2022unique,chiu_et_al-ComplexityAlphaHamSandwich}, which is the promise setting analogue of the class \Comp{UEOPL} (short for Unique End of Potential Line)~\cite{fearnley2020ueopl}. The latter has recently attracted the attention of many researchers trying to further the understanding of the hierarchy of total search problem classes, with a breakthrough result separating \Comp{UEOPL} from \Comp{EOPL}~\cite{separations}. There is only one known natural\footnote{By natural we mean a problem that is not a variant of the Unique End of Potential Line problem which naturally characterizes the class.} complete problem for \Comp{UEOPL}, called One Permutation Discrete Contraction (\Problem{OPDC}). \Problem{OPDC} also admits a natural binary variant, and it has already been shown implicitly that the binary variant is polynomial-time equivalent to the general variant~\cite{fearnley2020ueopl}. Our reductions may help in finding another natural \Comp{UEOPL}-complete problem, even though our results for now only hold in the promise setting.

\myparagraph{Paper overview.} We introduce our new \pfixp problem in \Cref{sec:pfixp}. Then, we show the equivalence of \plcp and \pfixp as well as the reduction from \pglcp to \plcp in \Cref{sec:plcp}. In \Cref{sec:alphaHam} we show the reductions between the colorful tangent problems and \pfixp. Finally, we show the reductions from the \plcp and colorful tangent problems to \CubeUSO and \GridUSO, as well as the reduction from \GridUSO to \CubeUSO in \Cref{sec:gridtocube}.

\section{A New Intermediate Problem: \texorpdfstring{\pfixp}{P-Lin-Bellman}}\label{sec:pfixp}


Our new \pfixp problem is motivated by 
discounted games. A stochastic discounted game (SDG) is defined by a set
of states $S$, which are partitioned into $S_\text{Max}$ and $S_\text{Min}$, a set of actions $A$, a transition function $p : S \times A
\times S \rightarrow \reals$, a reward function $r : S\times A \rightarrow \reals$, and
a discount factor $\lambda$. 
The value of a SDG is known to be the solution to the following system of
\emph{Bellman equations}. For each state $s$ we have an equation
\begin{align*}
x_s &= \begin{cases}
\max_{a \in A} \left( r(s,a) + \lambda \cdot \sum_{s' \in S} p(s, a, s') \cdot x_{s'} \right) &
\text{if $s \in S_{\text{Max}},$} \\
\min_{a \in A} \left( r(s,a) + \lambda \cdot \sum_{s' \in S} p(s, a, s') \cdot x_{s'} \right) &
\text{if $s \in S_{\text{Min}}.$} 
\end{cases}
\end{align*}
Prior work has shown that, if the game is binary, meaning that $|A| = 2$, then
these Bellman equations can be formulated as a \plcp~\cite{JS08,HI13}.

For our purposes, we are interested in the format of these equations. Note that
each equation is a max or min over affine functions of the other variables.
We capture this idea in the following generalized definition.

\begin{definition}
A $\gfixp$ system $G = (L, R, q, S)$ is defined by 
two matrices $L, R \in \reals^{n \times n}$, a vector $q \in \reals^n$, and a
set $S \subseteq \{1, 2, \dots, n\}$. 
These inputs define the following system of equations over a vector of variables $x \in
\reals^n$: 
\begin{equation}
\label{eqn:maxminlin}
x_i = \begin{cases} 
\max( \sum_{1 \le j \le n} L_{ij} x_j + q_{i}, \; \sum_{1 \le j \le n}
R_{ij} x_j) & \text{if $i \in S$,} \\
\min( \sum_{1 \le j \le n} L_{ij} x_j + q_{i}, \; \sum_{1 \le j \le n}
R_{ij} x_j) & \text{if $i \not\in S$.} 
\end{cases}
\end{equation}
\end{definition}

Observe that this definition captures systems of equations in which a min or
max operation is taken over two affine expressions in the other variables. 
Note that here we have included the additive $q$ term only in the first of the
two expressions (the second is thus linear), whereas the SDG equations have additive terms in all of the
affine expressions. This is because, for our reductions, we do not need the
second additive term.  Otherwise, this definition captures the style of Bellman
equations that are used in SDGs and other infinite games. 

We say that $x \in \reals^n$ is a solution of a $\gfixp$ system $G$ if
Equation~\eqref{eqn:maxminlin} is satisfied for all $i$. In general, however,
such an $x$ may not exist or may not be unique. To get a problem that is
equivalent to P-LCP, we need a restriction that ensures that the problem always
has a unique solution, like the P-matrix restriction on the LCP problem.

To make sure that a unique solution exists, we introduce a promise to yield the
promise problem \pfixp. As the promise we guarantee the same property that is
implied by the promise of the \plcp: For every $q'\in\reals^n$, the given
\gfixp system shall have a unique solution. We use the following restriction.

\begin{definition}\label{def:pfixp}
$\pfixp$
\begin{description}[labelindent=\deftab]
\item[Input:] A $\gfixp$ system $G = (L, R, q, S)$.
\item[Promise:] The $\gfixp$ system $(L, R, q', S)$ has a unique solution for
every $q' \in \reals^n$. 
\item[Output:] A solution $x \in \reals^n$ of $G$.
\end{description}
\end{definition}

This promise insists that the linear equations not only have a unique solution
for the given vector $q$, but that they also have a unique solution no matter
what vector $q$ is given. This is analogous to a property from SDGs:
an SDG has a unique solution no matter the rewards for the actions,
and this corresponds to the additive $q$ vector in the problem above. 

Similarly to the \lcp, where unique solutions for any right-hand side $q$ imply that the matrix $M$ is a P-matrix (and vice versa), this promise of \pfixp implies something about the involved matrices. However, for \pfixp we do not have a full characterization of the systems $(L,R,\cdot,S)$ that fulfill the promise. We only have the following rather weak necessary condition, which is however useful for several of our reductions.

\begin{lemma}\label{lem:fixpinvertible}
    In any \pfixp instance, $R-I$ is invertible.
\end{lemma}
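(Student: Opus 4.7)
My plan is a proof by contrapositive: I will assume $R-I$ is singular and construct a single right-hand side $q'\in\reals^n$ for which the system $(L,R,q',S)$ admits two distinct solutions, directly contradicting the uniqueness promise of \pfixp.

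First, I would pick any nonzero $v\in\ker(R-I)$, so that $Rv=v$. The key observation is that both candidate vectors $x=0$ and $x=v$ automatically satisfy $R_i x = x_i$ at every coordinate $i$; in the terminology of equation~\eqref{eqn:maxminlin}, they are both fixed points of the ``pure-$R$'' policy. Whether they are genuine \gfixp solutions then depends only on whether the $L$-expression $L_i x + q'_i$ is suitably dominated by $R_i x = x_i$: it should lie below it for $i\in S$ (so the max is attained by the $R$-side) and above it for $i\notin S$ (so the min is attained by the $R$-side).

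Next, I would choose $q'$ so that both candidates enjoy this domination simultaneously. For $i\in S$ this amounts to requiring $q'_i \le x_i - L_i x$ for both $x=0$ and $x=v$, i.e.\ $q'_i \le \min(0,\,v_i - L_i v)$; symmetrically, for $i\notin S$ we need $q'_i \ge \max(0,\,v_i - L_i v)$. Each coordinate yields a one-sided constraint, so a valid $q'$ certainly exists (taking the boundary value in each coordinate is the cleanest choice). A short substitution then verifies that $x=0$ and $x=v$ both solve $(L,R,q',S)$, and since $v\neq 0$ this contradicts the promise. The only thing to watch will be the sign bookkeeping between the $S$ and $\{1,\dots,n\}\setminus S$ branches, which is routine; conceptually, the lemma just says that a nontrivial kernel of $R-I$ produces two fixed points of the pure-$R$ policy that can always be jointly dominated by the $L$-side, which the \pfixp promise forbids.
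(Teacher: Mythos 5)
Your proof is correct and follows essentially the same strategy as the paper's: choose the sign of $q_i$ according to whether $i\in S$ (so that the $L$-branch is dominated by the $R$-branch), making any $x$ with $Rx=x$ an automatic solution, and then invoke the uniqueness promise. The paper packages this slightly differently — it fixes $q$ with a single large uniform constant $c$, chosen so that $(L-I)x+q$ has a fixed sign for all $x$ in a box $[-1,1]^d$, and then notes that the kernel of $R-I$ would otherwise meet this box in infinitely many points — whereas you tailor $q'$ coordinate-by-coordinate so that exactly the two candidates $x=0$ and $x=v$ both work. Your version is a bit more explicit in naming the two colliding solutions, but conceptually it is the same argument.
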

\begin{proof}
    Given any $L,R,S$, we can pick $q$ such that $q_i=c$ iff $i\not\in S$ and $q_i=-c$ otherwise, for some very large $c>0$ dependent only on $L$. This constant is picked large enough such that any coordinate of $(L-I)x+q$ has the same sign as $q$ for all $x\in [-a,a]^d$ for some $a>0$, say $a=1$.

    For any $x\in [-a,a]^d$ we can see that if $(R-I)x=0$, then $x$ must be a solution to the \gfixp system $(L,R,q,S)$. By the promise of \pfixp, this system must have a unique solution, and thus there is at most one solution to $(R-I)x=0$ with $x\in [-a,a]^d$. We conclude that $(R-I)$ must be invertible.
\end{proof}

Since \pfixp is so similar to the \plcp, we will prove the equivalence of \plcp and \pfixp first, in the next section.

\section{Linear Complementarity Problems}\label{sec:plcp}

We begin by giving the definitions of the (binary) linear complementarity problems.

\begin{definition}
$\lcp(M, q)$
\begin{description}[labelindent=\deftab]
\item[Input:] An $n \times n$ matrix $M$ and a vector $q \in \reals^n$.

\item[Output:] Two vectors $w, z \in \reals^n$ such that
\begin{itemize}
\item $w = Mz + q$,
\item $w, z \ge 0$, and
\item $w^T \cdot z = 0$.
\end{itemize}
\end{description}
\end{definition}

\noindent We are particularly interested in the case where the input matrix $M$ is a P-matrix.

\begin{definition}
\label{def:pm}
An $n \times n$ matrix $M$ is a P-matrix if every principal minor of $M$ is
positive.
\end{definition}

One particularly interesting feature of the P-matrix Linear Complementarity
Problem is that it always has a unique solution.

\begin{theorem}[\cite{CPS09}]
\label{thm:plcp}
$M$ is a P-matrix if and only if $\lcp(M, q)$ has a unique solution for every 
$q \in \reals^n$.
\end{theorem}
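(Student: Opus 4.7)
The plan is to prove both directions separately, relying on the classical sign-reversal characterization of P-matrices due to Fiedler and Pt\'ak: $M$ is a P-matrix if and only if for every nonzero $x \in \reals^n$ there is an index $i$ with $x_i(Mx)_i > 0$. I would either invoke this as a prerequisite or establish it first via a standard Schur complement induction on $n$. With this characterization in hand, both directions become tractable.

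For the forward direction, \textbf{P-matrix implies unique solvability for every $q$}, I would handle uniqueness and existence separately. For \emph{uniqueness}, suppose $(w^1, z^1)$ and $(w^2, z^2)$ are two solutions of $\lcp(M, q)$, and set $z := z^1 - z^2$, $w := w^1 - w^2$. Then $w = Mz$, and expanding coordinate-wise gives $z_i w_i = -z^1_i w^2_i - z^2_i w^1_i \le 0$, using non-negativity and complementarity of both solutions. If $z \ne 0$, the sign-reversal characterization produces some index $i$ with $z_i w_i > 0$, contradicting this bound; hence $z = 0$ and consequently $w = 0$. \emph{Existence} is where I expect the main obstacle: I would argue by piecewise-linear path-following. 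Pick $q_0 \in \reals^n_{>0}$ so that $(q_0, 0)$ is the trivial solution of $\lcp(M, q_0)$, consider the parametric family $q(t) = (1-t)q_0 + tq$ for $t \in [0,1]$, and track the corresponding solutions through generic pivot events. The key analytic ingredient is a uniform bound $\|z(t)\| \le C\|q(t)\|$, which follows from the P-matrix property by applying the sign-reversal characterization to sequences of putative unbounded solutions. Boundedness together with the uniqueness already established guarantees that the path remains well-defined across pivot events and reaches $t = 1$; this is essentially the correctness and termination of Lemke's algorithm specialized to P-matrix LCPs.

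For the reverse direction, \textbf{unique solvability for every $q$ implies P-matrix}, I would use the complementary cone perspective of Samelson--Thrall--Wesler. Rewriting $w = Mz + q$ as $-q = \sum_{i \in \alpha} z_i M_{\cdot i} + \sum_{i \notin \alpha} w_i(-e_i)$, solutions of $\lcp(M, q)$ with complementary support $\alpha \subseteq \{1,\dots,n\}$ correspond bijectively to expressing $-q$ as a non-negative combination in the complementary cone $C_\alpha$. Unique solvability for every $q$ is therefore equivalent to the $2^n$ complementary cones forming a partition of $\reals^n$ with pairwise disjoint interiors. Contrapositively, assume $M$ is not a P-matrix. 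Then the sign-reversal characterization fails, producing a nonzero $x$ with $x_i(Mx)_i \le 0$ for every $i$; after a small generic perturbation (justified by openness of the failure condition) we may take the inequality to be strict. Splitting $x$ into positive and negative parts and analyzing the sign pattern of $Mx$ yields two distinct complementary cones whose interiors meet, from which one reads off a vector $q$ for which $\lcp(M, q)$ has two distinct solutions, contradicting unique solvability.

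The main obstacle is the existence half of the forward direction, which is genuinely analytic; uniqueness and the reverse direction reduce to bookkeeping once the sign-reversal characterization and the complementary cone picture are in place.
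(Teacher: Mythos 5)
The paper does not prove \Cref{thm:plcp}; it is stated with a citation to Cottle, Pang, and Stone~\cite{CPS09} and treated as a known black-box result (it is essentially Murty's classical characterization of P-matrices). So there is no in-paper argument to compare against, and your proposal should be judged as a reconstruction of that classical proof. Your uniqueness argument is exactly the standard one: the identity $z_i w_i = -z_i^1 w_i^2 - z_i^2 w_i^1 \le 0$, combined with the Fiedler--Pt\'ak sign-reversal characterization, is precisely how the textbook handles this, and it is correct.

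Two points need repair. In the reverse direction, the claimed ``small generic perturbation'' of $x$ to make every $x_i(Mx)_i$ strictly negative is not available in general: for $M = 0$ one has $x_i(Mx)_i = 0$ identically, and no perturbation of $x$ alone achieves strictness. Fortunately you do not need it — set $z^1 = x^+$, $z^2 = x^-$, put $w^1_i = 0$ where $x_i>0$, $w^2_i = 0$ where $x_i<0$, $w^1_i = \max((Mx)_i,0)$ and $w^2_i = \max(-(Mx)_i,0)$ where $x_i=0$, and $q = w^1 - Mz^1$; the weak inequalities $x_i(Mx)_i\le 0$ already force $w^1,w^2\ge 0$, giving two distinct solutions of $\lcp(M,q)$. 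The existence half of the forward direction is the genuine obstruction, and your homotopy sketch leaves it largely open: you must argue that the P-matrix property makes every principal pivot transform nonsingular so the piecewise-linear path along $q(t)=(1-t)q_0+tq$ is well defined, handle potential degeneracies along the path (not merely wave at ``generic'' events), and actually prove the a priori bound on $\|z(t)\|$ — the normalization argument you gesture at (if $\|z^k\|\to\infty$ with $q^k$ bounded, then $z^k/\|z^k\|$ subconverges to a nonzero solution of $\lcp(M,0)$, impossible by sign-reversal) works, but it is the homotopy bookkeeping that is missing. Standard proofs avoid this by computing the topological degree of the piecewise-linear normal map $z\mapsto Mz^+ - z^- - q$, or by showing directly that the $2^n$ complementary cones of a P-matrix tile $\reals^n$, either of which gives existence without tracking pivots.
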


There are two problems associated with P-matrix LCPs. The first
problem is a total search problem in which we must either solve the \lcp, or show that the input matrix $M$ is not a P-matrix by producing a non-positive principal minor of $M$.
The second problem (the one we consider here) is a promise problem, where the promise is that the
input matrix is a P-matrix.

\begin{definition} 
$\plcp(M, q)$
\begin{description}[labelindent=\deftab]
\item[Input:] An $n \times n$ matrix $M$ and a vector $q \in \reals^n$.
\item[Promise:] $M$ is a P-matrix.
\item[Output:] Two vectors $w, z \in \reals^n$ that are solutions of $\lcp(M, q)$.
\end{description}
\end{definition}

In the next two sections, we will show that \pfixp and \plcp are equivalent under polynomial-time many-one reductions.

\subsection{\texorpdfstring{\plcp to \pfixp}{P-LCP to P-Lin-Bellman}}
\label{sec:lcp2fixp}

Let $M \in \reals^{n \times n}$ and $q \in \reals^n$ be an instance of \plcp.
We will build a $\pfixp$ instance $G(M, q)$ in the following way. For each $i$ in the
range $1 \le i \le n$, we set
\begin{equation}
\label{eqn:plcptopfixp}
x_i = \min((Mx + x)_i + q_i, 2x_i).
\end{equation}
In other words, we set $L=M+I$, $R=2I$, and $S=\emptyset$.
We first show that every solution of $G(M, q)$ corresponds to a solution of the \lcp. We
do not need to use 
any properties of the P-matrix in this part of the proof.

\begin{lemma}
\label{lem:fp}
A vector $z \in \reals^n$ is a solution of the $\gfixp$ system $G(M, q)$ 
if and only if $z$ and $w = Mz + q$ are
a solution of $\lcp(M, q)$.
\end{lemma}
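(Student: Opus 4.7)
The plan is a straightforward two-direction verification that unpacks the min. Set $w := Mz+q$ and rewrite equation~\eqref{eqn:plcptopfixp} using $(Mx+x)_i + q_i = w_i + x_i$, so the defining equation of $G(M,q)$ becomes $x_i = \min(w_i + x_i,\, 2x_i)$ for each $i$, which is equivalent to $0 = \min(w_i, x_i)$ after subtracting $x_i$ from both sides. So the whole claim essentially reduces to the observation that $\min(w_i,x_i)=0$ is precisely the componentwise form of the LCP conditions $w_i,x_i\ge 0$ and $w_ix_i=0$.

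For the forward direction, assume $z$ solves $G(M,q)$, and set $w=Mz+q$. For each coordinate $i$, the equality $0 = \min(w_i, z_i)$ forces one of the two arguments to be $0$ and the other to be nonnegative: either $w_i=0$ and $z_i\ge 0$, or $z_i=0$ and $w_i\ge 0$. In both subcases, $w_i,z_i\ge 0$ and $w_iz_i=0$. Combined with $w=Mz+q$ this is exactly what is needed for $(w,z)$ to solve $\lcp(M,q)$.

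For the reverse direction, assume $(w,z)$ solves $\lcp(M,q)$, i.e. $w=Mz+q$, $w,z\ge 0$, and $w^Tz=0$. Together with nonnegativity, the complementarity condition implies the componentwise statement $w_iz_i=0$ for each $i$, so at least one of $w_i,z_i$ is zero. Since the other is nonnegative, we get $\min(w_i,z_i)=0$ for every $i$. Adding $z_i$ back and rewriting gives $z_i=\min(w_i+z_i, 2z_i)=\min((Mz+z)_i + q_i,\, 2z_i)$, which is the defining equation of $G(M,q)$.

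The argument is essentially a one-line observation once one rewrites the min, so there is no real obstacle; the only care needed is the case split to justify that the min condition $0=\min(w_i,z_i)$ is equivalent to $w_i,z_i\ge 0$ together with $w_iz_i=0$. Note, as the lemma statement already signals, that P-matrix hypotheses are not used here — this is a purely algebraic equivalence between the two systems of equations.
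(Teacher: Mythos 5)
Your proof is correct and follows essentially the same route as the paper's: rewrite the defining equation as $\min(w_i, z_i) = 0$ and observe that this componentwise condition is exactly equivalent to the LCP conditions $w_i, z_i \ge 0$ and $w_i z_i = 0$. The only cosmetic difference is that you spell out the case split explicitly, whereas the paper derives nonnegativity via the inequalities $z \ge \min(Mz+q, z)$ and $w \ge \min(Mz+q, z)$; both are immediate.
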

\begin{proof}
If $z$ solves \Cref{eqn:plcptopfixp} then we have
$\min(Mz + z + q, 2z) - z = 0$, and hence
that $\min(Mz + q, z) = 0$. This implies that both $z$ and $w$ are non-negative:
\begin{align*}
z &\ge \min(Mz + q, z) = 0, \\
w = Mz + q &\ge \min(Mz + q, z) = 0.
\end{align*}
Moreover, since $\min(Mz + q, z) = 0$, the complementarity condition also holds.
Hence $w$ and $z$ are a solution of the \lcp.

In the other direction, if $w$ and $z$ are solutions of the \lcp, then we argue
that $z$ must be a solution of $G(M, q)$. This is because any solution of the \lcp
satisfies $\min(Mz + q, z) = 0$ due to the complementarity condition, and so we
must have that $z$ solves \Cref{eqn:plcptopfixp}.
\end{proof}

Next we show that the promise of \pfixp is also satisfied. 

\begin{lemma}
If $M$ is a P-matrix, then $G(M, q)$ satisfies the promise of $\pfixp$. 
\end{lemma}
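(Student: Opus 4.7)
The plan is to combine \Cref{lem:fp} with \Cref{thm:plcp}, observing that \Cref{lem:fp} is really a statement about an arbitrary right-hand side vector, not just the specific $q$ fixed in the input.

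First, I would unfold the definition of the promise. For the instance $G(M,q)$ we have $L = M+I$, $R = 2I$, and $S = \emptyset$. The promise of \pfixp requires that the \gfixp system $(L,R,q',S)$ admit a unique solution for \emph{every} $q' \in \reals^n$. Plugging in these matrices, the system $(L,R,q',S)$ is exactly the system whose $i$-th equation reads $x_i = \min((Mx)_i + x_i + q'_i,\ 2x_i)$, i.e. it is $G(M,q')$.

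Next I would note that the proof of \Cref{lem:fp} never uses any structural property of $q$; it only manipulates the equation $\min(Mz + q, z) = 0$ algebraically. Thus the same argument, applied verbatim with $q'$ in place of $q$, shows that the solutions of $G(M,q')$ are in bijection with the solutions of $\lcp(M,q')$ (via $w = Mz + q'$). In particular, the number of solutions of $G(M,q')$ equals the number of solutions of $\lcp(M,q')$.

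Finally, since $M$ is a P-matrix, \Cref{thm:plcp} guarantees that $\lcp(M,q')$ has a unique solution for every $q' \in \reals^n$. Combining this with the bijection from the previous step shows that $G(M,q')$ has a unique solution for every $q'$, which is precisely the promise of \pfixp. I do not anticipate a real obstacle here — the only thing to be careful about is explicitly reusing \Cref{lem:fp} for arbitrary right-hand sides rather than only for the fixed input vector $q$, which is legitimate because its proof is uniform in $q$.
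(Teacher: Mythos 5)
Your proof is correct and takes essentially the same route as the paper: apply \Cref{lem:fp} with an arbitrary $q'$ in place of $q$ to get a bijection between solutions of $G(M,q')$ and of $\lcp(M,q')$, then invoke \Cref{thm:plcp}. The paper's proof is just a terser statement of the same argument; your added remark that \Cref{lem:fp} is uniform in the right-hand side is implicit there.
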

\begin{proof}
We must show that the $\gfixp$ system $G(M, q')$ has a unique solution
for every $q'$. This follows from \Cref{lem:fp}, which shows that $G(M, q')$ has a unique solution 
if and only if the \lcp defined by $M$ and $q'$ has a unique solution, which
follows from the fact that $M$ is a P-matrix.
\end{proof}

\subsection{\texorpdfstring{\pfixp to \plcp}{P-Lin-Bellman to P-LCP}}
\label{sec:fixp2lcp}

For this direction, we follow and generalize the approach used by Jurdzi\'nski
and Savani~\cite{JS08}. Suppose that we have a $\pfixp$ instance defined by $G
= (L, R, q, S)$. 

The reduction is based on the idea of simulating the operation $x = \max(a, b)$
by introducing two non-negative slack variables $w$ and $z$:
\begin{align*}
x - w &= a, & x - z &= b,  & w, z &\ge 0, & w \cdot z &= 0.
\end{align*}
Since $w$ and $z$ are both non-negative, any solution to the system of
equations above must satisfy $x \ge \max(a, b)$. Moreover, since the
complementarity condition insists that either $w = 0$ or $z = 0$, we have that
$x = \max(a, b)$.

An operation $x = \min(a, b)$ can likewise be simulated by
\begin{align*}
x + w &= a, & x  + z &= b,  &w, z &\ge 0, & w \cdot z &= 0.
\end{align*}

We use this technique to rewrite the system of equations given in
\eqref{eqn:maxminlin} in the following way. For each $i$ in $1 \le i \le n$ we
have the following.
\begin{align}
\label{eqn:start}
x_i + w_i &= \sum_{1 \le j \le n} L_{ij} \cdot x_j + q_{i}, &\text{if $i
\not\in S$} \\
\label{eqn:diff}
x_i - w_i &= \sum_{1 \le j \le n} L_{ij} \cdot x_j + q_{i}, &\text{if $i \in
 S$} \\
 \label{eqn:wzsum}
x_i + z_i &= \sum_{1 \le j \le n} R_{ij} \cdot x_j, &\text{if $i \not\in S$} \\
\label{eqn:wzend}
x_i - z_i &= \sum_{1 \le j \le n} R_{ij} \cdot x_j, &\text{if $i \in S$} \\
\label{eqn:pos}
w_i, z_i &\ge 0 \\
\label{eqn:cc}
w_i \cdot z_i &= 0.
\end{align}

We have already argued that this will correctly simulate the $\max$ and $\min$
operations in \Cref{eqn:maxminlin}, and so we have the following lemma. 

\begin{lemma}
\label{lem:slacksim}
$x \in \reals^n$ is a solution of $G$ if and only if $x$ is a solution of
the system defined by \Cref{eqn:start,eqn:diff,eqn:wzsum,eqn:wzend,eqn:pos,eqn:cc}.
\end{lemma}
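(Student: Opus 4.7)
My plan is to prove both directions of the equivalence by checking, coordinate by coordinate, that the slack-variable encoding for each index $i$ faithfully implements the corresponding $\max$ or $\min$. Since the whole system \eqref{eqn:maxminlin} is just a conjunction of such per-coordinate equations, and the slack system \eqref{eqn:start}--\eqref{eqn:cc} is likewise a conjunction of groups indexed by $i$, it suffices to prove the equivalence one index at a time.

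For the forward direction, suppose $x$ solves $G$. Fix $i$ and abbreviate $a_i := \sum_j L_{ij} x_j + q_i$ and $b_i := \sum_j R_{ij} x_j$. If $i \in S$, then $x_i = \max(a_i, b_i)$; define $w_i := x_i - a_i$ and $z_i := x_i - b_i$. Both are nonnegative (since $x_i$ is at least each of $a_i, b_i$), and at least one is zero (because $x_i$ equals one of them), so $w_i z_i = 0$; these are exactly \eqref{eqn:diff}, \eqref{eqn:wzend}, \eqref{eqn:pos}, \eqref{eqn:cc}. If $i \notin S$, then $x_i = \min(a_i, b_i)$; set $w_i := a_i - x_i$ and $z_i := b_i - x_i$, and the same reasoning yields \eqref{eqn:start}, \eqref{eqn:wzsum}, \eqref{eqn:pos}, \eqref{eqn:cc}.

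For the reverse direction, suppose $x$ (together with some $w, z$) solves the slack system. Fix $i$ and again use $a_i, b_i$ as above. If $i \in S$, then \eqref{eqn:diff} and \eqref{eqn:wzend} give $x_i = a_i + w_i$ and $x_i = b_i + z_i$ with $w_i, z_i \ge 0$, so $x_i \ge \max(a_i, b_i)$; the complementarity condition $w_i z_i = 0$ then forces $w_i = 0$ or $z_i = 0$, so $x_i = a_i$ or $x_i = b_i$, and combined with $x_i \ge \max(a_i, b_i)$ this gives $x_i = \max(a_i, b_i)$, matching \eqref{eqn:maxminlin}. The $i \notin S$ case is symmetric: \eqref{eqn:start} and \eqref{eqn:wzsum} yield $x_i \le \min(a_i, b_i)$, and complementarity forces equality.

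There is no significant obstacle here; the lemma is essentially a verification that the standard LCP trick for encoding $\max$ and $\min$ via slack variables and a complementarity condition works as intended, and the text preceding the lemma has already sketched exactly this argument. The only care needed is to make sure the forward direction actually constructs valid $w, z$ (which it does, because a $\max$/$\min$ of two values coincides with one of them, forcing one slack to be zero).
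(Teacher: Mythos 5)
Your proof is correct and takes essentially the same approach as the paper, which just points back to the preceding slack-variable discussion; you merely make that argument explicit per coordinate and verify both directions. The sign conventions in your definitions of $w_i, z_i$ match \eqref{eqn:start}--\eqref{eqn:wzend} in both the $i \in S$ and $i \notin S$ cases, so the verification is sound.
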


We shall now reformulate \Cref{eqn:start,eqn:diff,eqn:wzsum,eqn:wzend,eqn:pos,eqn:cc} as
an \lcp. To achieve this we first introduce some helpful notation:
For every $n \times n$ matrix $A$, we define $\widehat{A}$ in the
following way. For each $i, j \in \{1, 2, \dots, n\}$ we let
\begin{equation*}
\widehat{A}_{ij} := \begin{cases}
\hphantom{-}A_{ij} & \text{if $i \in S$,} \\
-A_{ij} & \text{if $i \not\in S$.} 
\end{cases}
\end{equation*}
\Cref{eqn:start,eqn:diff,eqn:wzsum,eqn:wzend} can be
rewritten as the following matrix equations:
\begin{align*}
\widehat{I}x &= w + \widehat{L}x + \widehat{I}q, \\
\widehat{I}x &= z + \widehat{R}x, 
\end{align*}

Eliminating $x$ yields $(\widehat{I} - \widehat{L})(\widehat{I} - \widehat{R})^{-1}z = w +
\widehat{I}q$, which is equivalent to $w = Mz + q'$ for
\begin{align*}
M = (\widehat{I} - \widehat{L})(\widehat{I} - \widehat{R})^{-1},\;
q' = - \widehat{I}q.
\end{align*}
We rely here on the fact that by \Cref{lem:fixpinvertible}, $R-I$ and thus also 
$(\widehat{I} - \widehat{R})$ is invertible.
We have so far shown the following lemma.

\begin{lemma}
\label{lem:ftop}
A vector $x \in \reals^n$ is a solution of $G$ if and only if there are vectors $w,
z$ such that $w,z$ are a solution of $\lcp(M, q)$ and $x$, $w$, and $z$ are a solution of 
\Cref{eqn:start,eqn:diff,eqn:wzsum,eqn:wzend,eqn:pos,eqn:cc}.
\end{lemma}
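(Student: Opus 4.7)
The plan is to combine Lemma~\ref{lem:slacksim} with the algebraic manipulations already sketched in the text preceding the statement. By Lemma~\ref{lem:slacksim}, the condition ``$x$ solves $G$'' is equivalent to the existence of $w,z$ such that $(x,w,z)$ jointly satisfies \Cref{eqn:start,eqn:diff,eqn:wzsum,eqn:wzend,eqn:pos,eqn:cc}. Since \Cref{eqn:pos,eqn:cc} match the non-negativity and complementarity conditions of $\lcp(M, q')$ verbatim, the work reduces to showing that, in the presence of these, the four families of linear equalities \Cref{eqn:start,eqn:diff,eqn:wzsum,eqn:wzend} are equivalent to the single matrix identity $w = Mz + q'$ obtained by eliminating $x$.

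For the forward direction, I would first bundle \Cref{eqn:start,eqn:diff} into the matrix equation $\widehat{I}x = w + \widehat{L}x + \widehat{I}q$, and similarly bundle \Cref{eqn:wzsum,eqn:wzend} into $\widehat{I}x = z + \widehat{R}x$. One has to verify that the sign flips built into the $\widehat{\cdot}$ notation for indices $i\notin S$ correctly unify the ``$+w_i$/$+z_i$'' rows with the ``$-w_i$/$-z_i$'' rows into a single matrix form. By Lemma~\ref{lem:fixpinvertible}, $R-I$ is invertible and hence so is $\widehat{I}-\widehat{R}$, so the second matrix equation yields $x = (\widehat{I}-\widehat{R})^{-1} z$. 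Substituting into the first gives $w = (\widehat{I}-\widehat{L})(\widehat{I}-\widehat{R})^{-1} z - \widehat{I}q = Mz + q'$, as required.

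For the backward direction, given any $(w,z)$ solving the LCP, I would \emph{define} $x := (\widehat{I}-\widehat{R})^{-1} z$. By construction $(\widehat{I}-\widehat{R})x = z$, and reading off componentwise recovers \Cref{eqn:wzsum,eqn:wzend}. Substituting this $x$ into the LCP identity $w = Mz + q'$ and expanding using the definitions of $M$ and $q'$ gives $w = (\widehat{I}-\widehat{L})x - \widehat{I}q$, which rearranges to $\widehat{I}x = w + \widehat{L}x + \widehat{I}q$, yielding \Cref{eqn:start,eqn:diff}. Combined with the assumed \Cref{eqn:pos,eqn:cc}, this gives a triple $(x,w,z)$ satisfying the full slack system, and a final application of Lemma~\ref{lem:slacksim} concludes that $x$ solves $G$.

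The only real obstacle is bookkeeping: one has to keep the case split on $i \in S$ versus $i \notin S$ cleanly absorbed into the $\widehat{\cdot}$ notation so that each pair of equations collapses into one matrix identity and the sign on the $q$ term matches $q' = -\widehat{I}q$. Once this is done, the elimination of $x$ is entirely routine, since Lemma~\ref{lem:fixpinvertible} supplies the invertibility of $\widehat{I}-\widehat{R}$ needed to solve for $x$ in both directions.
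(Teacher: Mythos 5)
Your proposal is correct and follows essentially the same route as the paper: the paper's own proof of this lemma is the text immediately preceding it, which combines Lemma~\ref{lem:slacksim} with the $\widehat{\cdot}$-notation matrix rewriting and elimination of $x$ via $(\widehat{I}-\widehat{R})^{-1}$, relying on Lemma~\ref{lem:fixpinvertible} for invertibility. Your backward direction is phrased a bit more constructively than strictly necessary (the hypothesis already hands you $(x,w,z)$ satisfying the slack system, so Lemma~\ref{lem:slacksim} alone would finish), but since the slack equations force $x=(\widehat{I}-\widehat{R})^{-1}z$ anyway, your construction coincides with the given $x$ and the argument is sound.
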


To show that $M$ is a $P$ matrix, we must show that $\lcp(M, q'')$ has a unique
solution for every $q'' \in \reals^n$. By Lemma~\ref{lem:ftop}, this holds if
and only if $G(L, R, q'', S)$ has a unique solution for every $q''$, which
holds due to the $\pfixp$ promise. Our reduction is thus correct, and we have established our first main result:

\begin{theorem}
\label{thm:main}\label{thm:PFixP=PLCP}
\pfixp and \plcp are equivalent under polynomial-time many-one reductions.
\end{theorem}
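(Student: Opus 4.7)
The plan is to simply assemble the two reductions constructed in Sections~\ref{sec:lcp2fixp} and~\ref{sec:fixp2lcp}, so the proof reduces to verifying that each reduction runs in polynomial time and preserves the corresponding promise. In both directions the construction is syntactically trivial (setting a few entries of matrices, or multiplying by explicit matrices), so the nontrivial content is entirely on the promise side.

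First I would handle the direction \plcp~$\to$~\pfixp. Given $(M,q)$ with $M$ a P-matrix, I take $G(M,q)$ defined by $L=M+I$, $R=2I$, $S=\emptyset$ as in~\eqref{eqn:plcptopfixp}; this is clearly computable in polynomial time. Lemma~\ref{lem:fp} gives a bijection between solutions, so for every $q'\in\reals^n$ the system $G(M,q')$ has a unique solution iff $\lcp(M,q')$ does, which by Theorem~\ref{thm:plcp} is guaranteed by $M$ being a P-matrix. Hence the \pfixp promise holds, as already recorded in the second lemma of Section~\ref{sec:lcp2fixp}.

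For the direction \pfixp~$\to$~\plcp, I would apply the slack-variable construction of Section~\ref{sec:fixp2lcp} to build
\[
M \;=\; (\widehat{I}-\widehat{L})(\widehat{I}-\widehat{R})^{-1}, \qquad q' \;=\; -\widehat{I}q.
\]
Two things must be checked. First, $M$ and $q'$ are well-defined and computable in polynomial time: this uses Lemma~\ref{lem:fixpinvertible} to invert $\widehat{I}-\widehat{R}$. Second, $M$ is a P-matrix. By Theorem~\ref{thm:plcp} this is equivalent to showing $\lcp(M,q'')$ has a unique solution for every $q''\in\reals^n$. By Lemma~\ref{lem:ftop}, solutions of $\lcp(M,q'')$ are in bijection with solutions of the \gfixp system whose right-hand side is $-\widehat{I}^{-1}q''$. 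Since $q\mapsto -\widehat{I}q$ is an invertible linear map on $\reals^n$, the preimage ranges over all of $\reals^n$ as $q''$ does, so uniqueness for every $q''$ follows directly from the \pfixp promise of the input. The bijection of Lemma~\ref{lem:ftop} also turns the solution to $\lcp(M,q')$ returned by the P-LCP oracle back into a solution $x$ of $G$.

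The main obstacle, and really the only step needing care, is the last one: tracking the correspondence between the right-hand sides on the two sides of the reduction so that the universal quantifier ``for every $q''$'' on the \plcp side is matched against the universal quantifier ``for every $q'$'' in the \pfixp promise. Once the invertibility of the map $q\mapsto -\widehat{I}q$ is noted, this is immediate, and the theorem follows by combining the two directions.
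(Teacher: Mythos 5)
Your proposal is correct and follows the paper's own proof closely, combining exactly the two reductions from Sections~\ref{sec:lcp2fixp} and~\ref{sec:fixp2lcp} and invoking \Cref{lem:fp}, \Cref{lem:fixpinvertible}, and \Cref{lem:ftop} in the same roles. Your extra care in noting that $q\mapsto -\widehat{I}q$ is an invertible (in fact involutive, since $\widehat{I}$ is diagonal with $\pm 1$ entries) map on $\reals^n$ makes explicit a step the paper glosses over, but the argument is the same.
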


\subsection{P-matrix Generalized \texorpdfstring{\lcp to \plcp}{LCP to P-LCP}}

Generalized \lcp{}s were originally introduced by Cottle and Dantzig~\cite{CD70}. A generalized \lcp instance is defined by a vertical block matrix $M$, and a vertical block
vector $q$ where
\begin{align*}
M &= \begin{bmatrix}
M_1 \\
M_2 \\
\vdots \\
M_n 
\end{bmatrix}
&
q &= \begin{pmatrix}
q_1 \\
q_2 \\
\vdots \\
q_n
\end{pmatrix}
\end{align*}
Each matrix $M_i$ has dimensionality $b_i \times n$, while each vector $q_i$
has $b_i$ dimensions. Thus, if $N = \sum_i b_i$, we have that $M \in \reals^{N
\times n}$ and $q \in \reals^N$. Given a vertical block vector $x$ with $n$
blocks, we use $x_i^j$ to refer to the $j$th element of the $i$th block of $x$. 

Given such an $M$ and $q$, the \emph{generalized linear complementarity
problem} (\glcp) is to find vectors $w \in \reals^N$ and $z \in \reals^n$ such that
\begin{itemize}
\item $w = M z + q$,
\item $w \ge 0$,
\item $z \ge 0$, and
\item $z_i \cdot \prod_{j = 1}^{b_i} w_i^j = 0 $ for all $i$.
\end{itemize}

Given a tuple $p = (p_1, p_2, \dots, p_n)$ such that each $p_i$ lies in the range
$1 \le i \le b_i$, the representative submatrix of $M$ defined by $p$ is
given by $M(p) \in \reals^{n \times n}$ and is constructed by selecting row
$p_i$ from each block-matrix $M_i$. 
We then say that $M$ is a P-matrix if all of its representative submatrices are P-matrices.

We define the P-matrix \glcp (\pglcp) as a promise problem, analogously to the \plcp: The input is a \glcp instance $(M,q)$ with the promise that $M$ is a P-matrix.  This promise again guarantees unique solutions:

\begin{theorem}[Habetler, Szanc \cite{PGLCPUniqueness}]
A vertical block matrix $M$ is a P-matrix if and only if the \glcp instance $(M,q')$ has a unique solution for every $q'\in \R^N$. 
\end{theorem}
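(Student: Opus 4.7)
The theorem generalizes Samelson-Thrall-Wesler to vertical-block LCPs, and the plan is to prove the two directions separately: uniqueness in the forward direction by a direct P-matrix argument, existence by a homotopy, and the converse by a contrapositive lifting.

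For uniqueness, suppose $(w^{(1)}, z^{(1)})$ and $(w^{(2)}, z^{(2)})$ are distinct GLCP solutions of the same instance, and set $u = z^{(1)} - z^{(2)} \neq 0$. I would build a selection $p$ as follows: for each $i$ with $u_i > 0$ we have $z^{(1)}_i > 0$, so complementarity forces some entry $(w^{(1)})_i^{p_i} = 0$, and we pick such a $p_i$; symmetrically, for $u_i < 0$ pick $p_i$ so that $(w^{(2)})_i^{p_i} = 0$, and for $u_i = 0$ set $p_i$ arbitrarily. Since $w^{(k)} = Mz^{(k)} + q$, a short calculation gives $(M(p) u)_i = (w^{(1)})_i^{p_i} - (w^{(2)})_i^{p_i}$, which equals $-(w^{(2)})_i^{p_i} \le 0$ when $u_i > 0$ and $(w^{(1)})_i^{p_i} \ge 0$ when $u_i < 0$, so $u_i (M(p) u)_i \le 0$ for every $i$. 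But $M(p)$ is a P-matrix, so some index must satisfy $u_i (M(p) u)_i > 0$; contradiction.

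For existence I would use a homotopy. Choose $q_0 \in \R^N$ with all entries very large positive so that $z = 0$ is trivially the unique GLCP solution of $(M, q_0)$, and consider the path $q_t = (1-t) q_0 + t q'$ for $t \in [0, 1]$. Uniqueness at every $t$ follows from the argument above. The set of $t \in [0, 1]$ admitting a solution is open on the interior of each affine cell of the piecewise-linear complementarity structure (by the implicit function theorem applied to the active linear system $w = M(p)z + q(p)$) and closed by compactness, using an a priori bound $\|z\| \le C\|q_t\|$ derived from the P-matrix property of the representative submatrices; hence this set equals $[0,1]$ and a solution exists at $q_1 = q'$.

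For the reverse direction I would argue by contrapositive: if some representative submatrix $M(p^\ast)$ fails to be a P-matrix, the classical Samelson-Thrall-Wesler theorem yields $q^\ast \in \R^n$ for which $\lcp(M(p^\ast), q^\ast)$ has at least two solutions. Lift $q^\ast$ to a block vector $q' \in \R^N$ by setting $(q')_i^{p^\ast_i} = q^\ast_i$ and $(q')_i^j = K$ for $j \neq p^\ast_i$, with $K$ chosen large enough that the constraints $w_i^j \ge 0$ for $j \neq p^\ast_i$ are automatically satisfied at each of the finitely many LCP solutions. Every such LCP solution then lifts to a GLCP solution of $(M, q')$, contradicting uniqueness. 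The main obstacle in this plan is the existence step: handling cell-boundary events along the homotopy and establishing the a priori bound rigorously usually requires either a Scarf-style degree argument or a careful analysis of the piecewise-linear normal map, whereas the combinatorial uniqueness and lifting arguments are elementary once the correct selection $p$ is identified.
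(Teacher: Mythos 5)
The paper does not prove this theorem --- it simply cites Habetler and Szanc~\cite{PGLCPUniqueness} --- so there is no in-paper argument to compare against; I will assess your sketch on its own terms.

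Your uniqueness argument is correct and is the right device: choosing the representative index $p_i$ according to the sign of $u_i$ lets you collapse the block structure onto a single square representative submatrix $M(p)$, and the identity $(M(p)u)_i = (w^{(1)})_i^{p_i} - (w^{(2)})_i^{p_i}$ then makes the sign-nonreversal characterization of P-matrices do all the work. The converse is also sound: one should be slightly careful that the textbook contrapositive of Samelson--Thrall--Wesler only asserts a $q^\ast$ with a \emph{non-unique} (possibly empty) solution set, but the standard construction from a sign-reversing vector $x$ of $M(p^\ast)$ (taking $z^{(1)} = x^+$, $z^{(2)} = x^-$ and choosing $q^\ast$ accordingly) does in fact produce two distinct solutions, so your lifting with a large constant $K$ in the off-representative block entries goes through.

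The only substantive gap is the one you flag yourself: the existence direction. The open-and-closed homotopy is the right shape, and the a priori bound $\|z\|\le C\|q_t\|$ does follow from uniqueness together with the finitely many invertible active linear systems arising from the P-matrix representative submatrices. But ``implicit function theorem on the interior of each affine cell'' does not by itself give openness of the solvable set at cell boundaries, which is exactly where homotopy arguments in complementarity theory can break. The standard way to close this is a degree or parity argument for the associated piecewise-linear normal map --- every linear piece has determinant of the same sign precisely because each $M(p)$ is a P-matrix, so the map is a homeomorphism of $\R^n$ --- or a Lemke/Cottle--Dantzig pivoting argument tailored to vertical block matrices. Until one of those is supplied, the existence half remains a sketch rather than a proof.
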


We are now ready to show our reduction of \pglcp to \pfixp.

\begin{lemma}
\label{lem:PGLCP2PFixP}
There is a poly-time many-one reduction from \pglcp to $\pfixp$.
\end{lemma}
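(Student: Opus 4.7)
The plan is to mimic the reduction from \plcp to \pfixp in \Cref{sec:lcp2fixp} but handle the larger arity of the complementarity condition using nested binary minima. The starting observation is that, under non-negativity constraints, the generalized complementarity condition $z_i \cdot \prod_{j=1}^{b_i} w_i^j = 0$ together with $z_i, w_i^j \ge 0$ is equivalent to the single equation $\min(z_i, w_i^1, \ldots, w_i^{b_i}) = 0$, since a minimum of real numbers vanishes exactly when all operands are non-negative and at least one of them is zero.

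Next I would break this $(b_i{+}1)$-ary minimum into a chain of binary mins using $b_i - 1$ auxiliary variables $y_i^1, \ldots, y_i^{b_i - 1}$ per block $i$. Adding $z_i$ to every operand (so that the target ``min$=0$'' becomes ``min$=z_i$'') the proposed \pfixp system is
\begin{align*}
y_i^1 &= \min\bigl(M_i^1 z + q_i^1 + z_i,\; 2z_i\bigr), \\
y_i^j &= \min\bigl(M_i^j z + q_i^j + z_i,\; y_i^{j-1}\bigr) && \text{for } 2 \le j \le b_i - 1, \\
z_i &= \min\bigl(M_i^{b_i} z + q_i^{b_i} + z_i,\; y_i^{b_i - 1}\bigr),
\end{align*}
where $M_i^j$ denotes the $j$-th row of the block matrix $M_i$; when $b_i = 1$ this degenerates directly to the encoding of \Cref{eqn:plcptopfixp}. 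A simple induction yields $y_i^j - z_i = \min(w_i^1, \ldots, w_i^j, z_i)$, so the last equation is equivalent to $\min(z_i, w_i^1, \ldots, w_i^{b_i}) = 0$. Each equation fits the \pfixp template because one side is an affine expression carrying the additive constant $q_i^j$ and the other is purely linear (either $2z_i$ or the previously introduced $y_i^{j-1}$), with every index sitting in the min-case so $S = \emptyset$. The total number of variables is $n + \sum_i (b_i - 1) = N$, keeping the reduction polynomial in the input size.

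For correctness I would set up a bijection between solutions of the constructed \pfixp system and solutions of the \glcp instance: from a \glcp solution $(z, w)$ one reconstructs $y_i^j := z_i + \min(w_i^1, \ldots, w_i^j, z_i)$, and from a \pfixp solution $(z, y)$ the vector $w$ with $w_i^j := M_i^j z + q_i^j$ returns a \glcp solution. To verify the \pfixp promise I would apply the very same bijection to an arbitrary perturbed right-hand side $q'' \in \R^N$: since $M$ being a vertical-block P-matrix guarantees by the Habetler--Szanc theorem that the \glcp has a unique solution for every such right-hand side, the corresponding \pfixp system must also have a unique solution for every choice of constants.

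The main subtlety is conforming to the rigid \pfixp format, in which only one of the two $\min$-arguments may carry an additive constant. Shifting every operand by $+z_i$ and threading the ``running minimum'' through the auxiliary chain $y_i^1, y_i^2, \ldots$ is precisely what keeps one side of every equation purely linear; a naive rewrite of $\min(z_i, w_i^1, \ldots, w_i^{b_i})$ with all constants left in place would force additive terms on both operands of some equation and break the template.
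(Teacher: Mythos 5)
Your proposal is correct and matches the paper's proof in essence: both break the $(b_i{+}1)$-ary minimum into a chain of binary minima threaded through $b_i-1$ auxiliary variables per block, use the $+z_i$ shift and the $2z_i$ endpoint to keep one operand of each $\min$ purely linear, and invoke the Habetler--Szanc theorem via the solution bijection to establish the promise. The only difference is cosmetic — you run the chain in the opposite direction (from $2z_i$ up to $z_i$ rather than from $x_i^1$ down to $2x_i^1$) and relabel the primary variable.
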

\begin{proof}
We can turn a P-matrix \glcp instance into a $\pfixp$ instance in much the same way as we
did for \plcp{}s in \Cref{sec:lcp2fixp}. For each $i$ in the range
$1 \le i \le n$ we set
\begin{equation}
\label{eqn:glcp}
z_i = \min\Bigl(\min \bigl\{ (M z + q)_i^j + z_i \; : 1 \le j \le b_i \bigr\},
2z_i\Bigr).
\end{equation}

We claim that any solution of the system of equations defined above yields a
solution of the \glcp. In any solution to the system we have 
\begin{equation*}
\min\Bigl(\min \bigl\{ (M z + q)_i^j \; : 1 \le j \le b_i \bigr\},
z_i\Bigr) = 0.
\end{equation*}
So if we set $w_i^j = (M z + q)_i^j$ then we have the following
non-negativities
\begin{align*}
w_i^j = (M z + q)_i^j &\ge 
\min\Bigl(\min \bigl\{ (M z + q)_i^j \; : 1 \le j \le b_i \bigr\},
z_i\Bigr) = 0. \\
z_i &\ge 
\min\Bigl(\min \bigl\{ (M z + q)_i^j \; : 1 \le j \le b_i \bigr\},
z_i\Bigr) = 0. 
\end{align*}
We can also see that the complementarity condition is satisfied, since either
$z_i = 0$ or $w_i^j = 0$ for some $j$. 

To write this as a $\gfixp$ system, we must carefully write 
\Cref{eqn:glcp} using two-input min operations. 
We will introduce a variable $x^j_i$ for each $i$ in the range $1 \le i \le n$
and each $j$ in the range $1 \le j \le b_i$. We use $\mathbf{x}^1$ to denote
the vector $(x^1_1, x^1_2, \dots, x^1_n)$.
For each
$i$, and each $j$ in the range $1 \le j < b_i$ we use the following.
\begin{equation}
\label{gp1}
x_i^j = \min \Bigl( (M \mathbf{x}^1 + q)_i^j + x^1_i, \; x_i^{j+1} \Bigr)
\end{equation}
We also use the following when $j = b_i$. 
\begin{equation}
\label{gp2}
x_i^{b_i} = \min \Bigl( (M \mathbf{x}^1 + q)_i^{b_i} + x^1_i, \; 2 x_i^{1} \Bigr)
\end{equation}

Let $G = (L, R, q, S)$ denote the resulting $\gfixp$ system.
It is now easy to see that on the one hand, for every solution $\mathbf{x}$ of $G$ the vectors $z:=\mathbf{x}^1$ and $w$ with $w_i^j=(Mz+q)_i^j$ form a solution of the input P-matrix \glcp. On the other hand, from any solution to the \glcp we can extract the vector $\mathbf{x}^1:=z$. Given this vector $\mathbf{x}^1$ there is only one way to extend this to an assignment for all $x_i^j$ that is a solution to $G$. We thus get a one-to-one correspondence between solutions to $G$ and solutions to the \glcp. Since the \glcp defined by $M$ is promised to have a unique solution for every $q'$, this thus implies that the \gfixp system $G'=(L,R,q',S)$ also has a unique solution for every $q'$. Thus, this is indeed a \pfixp instance.
\end{proof}

Combining this with the previously proven \Cref{thm:PFixP=PLCP}, we get the following corollary:

\begin{corollary}
\pglcp and \plcp are polynomial-time equivalent.
\end{corollary}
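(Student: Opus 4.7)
The plan is to observe that the corollary is an immediate consequence of chaining the two results already proved in this section. Polynomial-time equivalence requires reductions in both directions, so I would handle each direction separately.

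For the easy direction, \plcp reduces trivially to \pglcp. Every \plcp instance $(M,q)$ can be viewed as a \pglcp instance in which each block $M_i$ has only one row (so $b_i=1$ for all $i$). Under this embedding, the generalized complementarity condition $z_i\cdot\prod_{j=1}^{b_i} w_i^j=0$ collapses to the standard condition $z_iw_i=0$, the representative submatrix selection is uniquely determined, and the P-matrix property in the \pglcp sense coincides with the usual P-matrix property. Thus a \pglcp solver correctly solves the underlying \plcp instance in polynomial time.

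For the nontrivial direction, I would compose \Cref{lem:PGLCP2PFixP} with the reduction from \pfixp to \plcp that is established as part of \Cref{thm:PFixP=PLCP} (namely the construction in \Cref{sec:fixp2lcp}). Given a \pglcp instance $(M,q)$, the first reduction produces a \pfixp instance $G=(L,R,q',S)$ of polynomial size, and the second reduction then produces a \plcp instance $(M',q'')$ of polynomial size. Composition of two polynomial-time many-one reductions is a polynomial-time many-one reduction, and correctness follows because any solution to the resulting \plcp can be back-transformed, via the correspondences in \Cref{lem:fp} and the proof of \Cref{lem:PGLCP2PFixP}, into a solution of the original \pglcp.

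The main (and essentially only) step to be careful about is the bookkeeping of promises along the chain: we must check that if $M$ is a vertical block P-matrix then the intermediate \pfixp instance satisfies the \pfixp promise, and that in turn this implies that the final matrix $M'$ is a genuine P-matrix. Both of these implications are already contained in the statements of \Cref{lem:PGLCP2PFixP} and \Cref{thm:PFixP=PLCP}, so no further work is needed beyond invoking them in sequence.
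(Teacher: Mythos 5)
Your proposal is correct and follows essentially the same route as the paper: the nontrivial direction is obtained by chaining \Cref{lem:PGLCP2PFixP} with the \pfixp-to-\plcp reduction inside \Cref{thm:PFixP=PLCP}, and the easy direction is the observation (stated in the paper's introduction as ``Every \plcp is a \pglcp by definition'') that a \plcp instance is a \pglcp instance with all block sizes $b_i=1$. Your explicit check that the $b_i=1$ embedding collapses the generalized complementarity condition and the representative-submatrix P-matrix condition to the standard ones is a sound, if slightly more spelled-out, version of what the paper leaves implicit.
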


\section{Colorful Tangents and \texorpdfstring{\pfixp}{P-Lin-Bellman}}\label{sec:alphaHam}

Let us start by introducing the definitions and prior results on the $\alpha$-Ham Sandwich theorem.

\begin{definition}
    A family of point sets $P_1=\{p_{1,1},\ldots,p_{1,s_1}\},P_2,\ldots,P_d\subset\Reals^d$ is said to be \emph{well-separated} if for any non-empty index set $I\subset [d]$, there exists a hyperplane $h$ such that $h$ separates the points in $\bigcup_{i\in I}P_i$ from those in $\bigcup_{i\in [d]\setminus I}P_i$.
\end{definition}

In the setting of well-separated point sets, the classical Ham Sandwich theorem can be strengthened to the more modern $\alpha$-Ham Sandwich theorem due to Steiger and Zhao~\cite{alphaHamSandwich}, for which we need the definition of an $(\alpha_1,\ldots,\alpha_d)$-cut:

\begin{definition}\label{def:alphaCutOld}
    Given positive integers $\alpha_i\leq |P_i|$ for all $i\in [d]$, an $(\alpha_1,\ldots,\alpha_d)$-cut is a hyperplane $h$ such that $h\cap P_i\neq \emptyset$ and $|h^+\cap P_i|=\alpha_i$ for all $i\in [d]$.
\end{definition}

In this definition and in the rest of this section, $h^+$ and $h^-$ denote the two closed halfspaces bounded by a hyperplane $h$. When $h$ is a colorful hyperplane (i.e., a hyperplane containing a colorful point set, which is a set $\{p_1,\ldots,p_d\}$ with $p_i\in P_i$), its positive side is determined by the orientation of the points $p_i\in P_i$ spanning it.

\begin{theorem}[$\alpha$-Ham Sandwich theorem \cite{alphaHamSandwich}]\label{thm:alphaHam}
    Given $d$ point sets $P_1,\ldots,P_d \subset \Reals^d$ that are well-separated and in weak general position\footnote{Weak general position is a weaker version of general position. We will not go further into this since we will always ensure classical general position when invoking this theorem.}, for every $(\alpha_1,\ldots,\alpha_d)$ where $1\leq \alpha_i\leq |P_i|$, there exists a unique $(\alpha_1,\ldots,\alpha_d)$-cut.
\end{theorem}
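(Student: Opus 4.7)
I would prove the $\alpha$-Ham Sandwich theorem by induction on the dimension $d$, using the well-separation assumption crucially in the inductive step. For the base case $d=1$, a point set $P_1 \subset \Reals$ has a unique $\alpha_1$-cut given by the $(|P_1| - \alpha_1 + 1)$-th smallest element, viewed as a $0$-dimensional hyperplane; existence and uniqueness are immediate by sorting.

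For the inductive step ($d \geq 2$), I would combine a central projection with a parametric argument over $P_d$. By well-separation, fix a hyperplane $H$ strictly separating $P_d$ from $P_1 \cup \cdots \cup P_{d-1}$. For each candidate anchor $p \in P_d$, centrally project $P_1, \ldots, P_{d-1}$ from $p$ onto $H$. The projected sets inherit well-separation (any separating hyperplane in $\Reals^d$ through $p$ intersects $H$ in a separating hyperplane of the projected sets), so the inductive hypothesis in $H \cong \Reals^{d-1}$ yields a unique $(\alpha_1, \ldots, \alpha_{d-1})$-cut of the projections. Lifting through $p$ produces a unique hyperplane $h_p$ in $\Reals^d$ passing through $p$ with $|h_p^+ \cap P_i| = \alpha_i$ for all $i < d$. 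Define $g(p) := |h_p^+ \cap P_d|$; the theorem reduces to showing that $g : P_d \to \{1, \ldots, s_d\}$ is a bijection, since then exactly one anchor realises $g(p) = \alpha_d$, and the corresponding $h_p$ is the unique $\alpha$-cut.

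To establish bijectivity, I would order $P_d$ by a geometric criterion (say, signed distance from $H$) and argue that $g$ is strictly monotone along this order. The key geometric lemma is that as one moves from a consecutive anchor $p$ to $p'$, the hyperplane $h_p$ rotates to $h_{p'}$ in a controlled way: well-separation prevents either $h_p$ or any interpolating hyperplane from crossing a point of $P_1 \cup \cdots \cup P_{d-1}$ (this would contradict uniqueness of the projected cut), so the counts for $P_1, \ldots, P_{d-1}$ stay fixed while the count for $P_d$ shifts by exactly one.

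The main obstacle is making this monotonicity argument rigorous. The subtlety is that the hyperplanes $h_p$ for different anchors come from independent applications of the inductive hypothesis, so there is no a priori continuous family to sweep; the organization of the $h_p$'s into a coherent monotone sequence must be deduced \emph{a posteriori}. Well-separation is precisely what rules out the combinatorial "jumps" that would otherwise let the inductive counts of $P_1, \ldots, P_{d-1}$ swap across a color boundary, and establishing this non-crossing property rigorously is the technical crux of the proof.
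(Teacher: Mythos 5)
The paper does not prove \Cref{thm:alphaHam}; it is imported as a black box from Steiger and Zhao~\cite{alphaHamSandwich}, so there is no in-paper proof to compare against. Your proposal is therefore an attempt at a from-scratch proof of a known (and nontrivial) result, and it should be judged on its own merits.

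Your inductive scaffolding is plausible, and some of the steps can indeed be made rigorous: for instance, well-separation of the projected sets $P'_1,\ldots,P'_{d-1}$ in $H$ follows cleanly from \Cref{lem:stabbingFlats} applied to $P_1,\ldots,P_{d-1},\{p\}$ (a $(d-3)$-flat in $H$ stabbing all projected hulls would lift through $p$ to a $(d-2)$-flat stabbing $P_1,\ldots,P_{d-1},\{p\}$). However, the proof has a genuine gap at exactly the point you flag: the claim that $g:P_d\to\{1,\ldots,s_d\}$ is a bijection is asserted via a "monotone sweep'' that you never construct. As you note yourself, the hyperplanes $h_p$ arise from independent invocations of the inductive hypothesis, so there is no continuous one-parameter family to sweep, and the anchors $p$ and $p'$ generally pick up \emph{different} colorful supports in $P_1,\ldots,P_{d-1}$; nothing you write controls how the $P_d$-count changes when the support changes. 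You also do not verify that central projection preserves weak general position, which the inductive hypothesis requires for the uniqueness of the projected cut. Both of these are real obstructions, not bookkeeping: they are precisely where the topological/degree-type machinery in Steiger and Zhao's actual argument does its work, and a purely combinatorial "count the anchors'' induction would need a substitute for it. As written, this is a proof strategy with its central lemma missing, not a proof.
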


The $\alpha$-Ham Sandwich theorem guarantees us existence and uniqueness of an $(\alpha_1,\ldots,\alpha_d)$-cut, and it is thus not too surprising that the problem of finding such a cut is contained in \Comp{UEOPL}~\cite{chiu_et_al-ComplexityAlphaHamSandwich}. 

In this section, we wish to show equivalence of this problem to \pfixp, however, we only manage this with some slight changes. Firstly, we wish to drop the assumption of weak general position, since it is not easily guaranteed when reducing from \pfixp. Because of this we need to weaken the definition of $(\alpha_1,\ldots,\alpha_d)$-cuts accordingly:

\begin{definition}\label{def:alphaCutNew}
    Given a well-separated family of point sets $P_1,\ldots,P_d\subset \Reals^d$ and $(\alpha_1,\ldots,\alpha_d)$ such that $1\leq \alpha_i\leq |P_i|$, a hyperplane $h$ is an $(\alpha_1,\ldots,\alpha_d)$-cut if for all $i\in [d]$, $h\cap P_i\neq \emptyset$ and 
    $|(h^+\setminus h)\cap P_i|+1\leq \alpha_i\leq |h^+\cap P_i|$.
\end{definition}

In other words, we allow a hyperplane to contain more than one point per color, and all the additional points may be ``counted'' for either side of the hyperplane. The following characterization of well-separation shows that even without any general position assumption,  the affine hull of every colorful set of points must have dimension at least $d-1$, and thus span a unique hyperplane. 

\begin{lemma}[Bergold et al. \cite{wellseparationTransversals}]\label{lem:stabbingFlats}
    Let $P_1,\ldots,P_k\subset \Reals^d$ be point sets. Then they are well-separated if and only if there exists no $(k-2)$-flat $h\subset\Reals^d$ (an affine subspace of dimension $k-2$) that has a non-empty intersection with each convex hull $conv(P_i)$.
\end{lemma}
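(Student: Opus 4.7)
The plan is to prove both directions by combining the hyperplane separation theorem with Radon's theorem, which together provide the natural link between "separability by hyperplanes of all index splits" and "non-existence of low-dimensional transversals".

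For the forward direction (well-separated implies no $(k-2)$-transversal), I would argue by contrapositive. Assume a $(k-2)$-flat $h$ meets each $conv(P_i)$ at some point $p_i$. Viewing $h$ as a $(k-2)$-dimensional affine space containing the $k$ points $p_1,\ldots,p_k$, and noting that $k = (k-2)+2$, Radon's theorem gives a partition $[k] = I \sqcup J$ with both parts nonempty, together with a point $q \in conv\{p_i : i \in I\} \cap conv\{p_j : j \in J\}$. Since each $p_i \in conv(P_i)$, we obtain $q \in conv(\bigcup_{i \in I} P_i) \cap conv(\bigcup_{j \in J} P_j)$. By the hyperplane separation theorem, no hyperplane can then separate these two unions, contradicting well-separation for the nonempty proper subset $I$.

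For the reverse direction, I would again proceed by contrapositive. Suppose well-separation fails for some nonempty proper $I \subset [k]$. Then $conv(\bigcup_{i \in I} P_i)$ and $conv(\bigcup_{j \notin I} P_j)$ must share a point $p$, which can be written as two convex combinations $p = \sum_{i \in I} \lambda_i a_i = \sum_{j \notin I} \mu_j b_j$ with $a_i \in conv(P_i)$ and $b_j \in conv(P_j)$. Subtracting these expressions yields a nontrivial affine dependence among the $k$ points $\{a_i\}_{i \in I} \cup \{b_j\}_{j \notin I}$: the coefficients $\lambda_i$ and $-\mu_j$ sum to $1-1 = 0$ but are not all zero. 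Hence the $k$ chosen points lie in an affine subspace of dimension at most $k-2$, which I can extend inside $\Reals^d$ to a genuine $(k-2)$-flat intersecting every $conv(P_i)$ (as each contains the corresponding $a_i$ or $b_j$).

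The main subtlety lies in handling the edge cases. I need to verify that Radon's theorem indeed produces two nonempty sides (which it does, since the convex hull of the empty set cannot contain $q$), and that the extension step in the reverse direction is always legitimate. The latter tacitly requires $k-2 \leq d$; when $k > d+1$, every $(k-2)$-flat is all of $\Reals^d$ so the transversal condition is vacuously satisfied, while applying Radon's theorem directly to any choice of one point per set shows that well-separation automatically fails in this regime, keeping the equivalence consistent throughout.
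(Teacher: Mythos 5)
The paper does not prove this lemma; it is stated as a cited result from Bergold et al., so there is no in-paper argument to compare against. Your proof is correct, and it uses what is almost certainly the intended argument: Radon's theorem applied to one transversal point per set inside the $(k-2)$-flat for the forward direction, and the affine dependence extracted from a common point of two convex hulls (one point chosen per $P_i$, grouped by color) for the reverse direction. Both steps are airtight; in particular you correctly note that Radon's partition has both parts nonempty, and the affine dependence $\sum_{i\in I}\lambda_i a_i - \sum_{j\notin I}\mu_j b_j = 0$ with $\sum\lambda_i - \sum\mu_j = 0$ forces the $k$ chosen points into an affine subspace of dimension at most $k-2$, which can then be enlarged to a full $(k-2)$-flat when $k-2\le d$. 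The only imprecision is in your edge-case remark: when $k>d+1$, it is not that ``every $(k-2)$-flat is all of $\mathbb{R}^d$'' but rather that no $(k-2)$-flat exists in $\mathbb{R}^d$ at all, so the equivalence as literally stated would be vacuously problematic. The cleaner resolution is that the lemma is implicitly asserted for $k\le d+1$ (which is the only regime the paper uses), or equivalently one reads ``$(k-2)$-flat'' as ``flat of dimension $\min(k-2,d)$''; your Radon observation that well-separation always fails for $k\ge d+2$ then keeps everything consistent under either reading.
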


Thanks to this characterization we can also show that if a hyperplane $h$ contains multiple colorful subset of points, all of them must orient $h$ the same way, and thus $h^+$ is uniquely defined even without the weak general position assumption. 

\begin{lemma}\label{lem:noOppositeOrientationsOnSameHyperplane}
    In a well-separated family of point sets, for every colorful hyperplane $h$, every colorful set of points in $h$ is oriented the same way.
\end{lemma}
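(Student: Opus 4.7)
The plan is to argue by contradiction using \Cref{lem:stabbingFlats}. Suppose that two colorful sets $S_1=\{p_1,\ldots,p_d\}$ and $S_2=\{q_1,\ldots,q_d\}$ with $p_i,q_i\in P_i$ both lie on a colorful hyperplane $h$ but induce opposite orientations of $h$. My goal is to produce a $(d-2)$-flat that meets every $\mathrm{conv}(P_i)$, contradicting well-separation via \Cref{lem:stabbingFlats}.

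First I would encode ``orientation'' algebraically. Fix any reference point $x_0 \notin h$, and for a colorful tuple $(s_1,\ldots,s_d)\in h^d$ consider the determinant
\[
D(s_1,\ldots,s_d)\;:=\;\det\begin{pmatrix} s_1 & s_2 & \cdots & s_d & x_0 \\ 1 & 1 & \cdots & 1 & 1 \end{pmatrix}.
\]
Since $x_0\notin h$, the vanishing of $D$ is equivalent to $s_1,\ldots,s_d$ being affinely dependent, i.e.\ to their failing to span $h$. Moreover, the sign of $D$ records on which side of $h$ the reference point $x_0$ lies relative to the orientation induced by $(s_1,\ldots,s_d)$, so the assumption that $S_1,S_2$ orient $h$ oppositely translates into $D(p_1,\ldots,p_d)$ and $D(q_1,\ldots,q_d)$ having opposite (nonzero) signs.

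Next I would interpolate linearly. Define $r_i(t):=(1-t)p_i+tq_i$ for $t\in[0,1]$; since $p_i,q_i\in h$ we have $r_i(t)\in h$, and since $p_i,q_i\in P_i$ we have $r_i(t)\in \mathrm{conv}(P_i)$. The function $g(t):=D(r_1(t),\ldots,r_d(t))$ is a polynomial in $t$ with $g(0)$ and $g(1)$ of opposite signs, so by the intermediate value theorem there exists $t^\star\in(0,1)$ with $g(t^\star)=0$. By the algebraic criterion above, this means the points $r_1(t^\star),\ldots,r_d(t^\star)$ are affinely dependent; let $F$ be their affine hull, which is an affine subspace of $h$ of dimension at most $d-2$.

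Finally I would extract the contradiction. Each $r_i(t^\star)\in F\cap \mathrm{conv}(P_i)$, so $F$ meets every $\mathrm{conv}(P_i)$; extending $F$ to any $(d-2)$-flat if necessary, this contradicts \Cref{lem:stabbingFlats}. The step I expect to require the most care is the algebraic reformulation of ``opposite orientation'' in the first paragraph, since the lemma is stated purely geometrically while the continuity/IVT argument needs a single continuous signed quantity; once $D$ is set up with a fixed external witness $x_0$, the rest reduces to a routine continuity argument.
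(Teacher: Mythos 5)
Your proof is correct, and it takes a genuinely different route from the paper's, although both ultimately reduce to producing a $(d-2)$-flat that stabs every $\mathrm{conv}(P_i)$ and then invoking \Cref{lem:stabbingFlats}. The paper argues discretely: it walks from one colorful set to the other by swapping one point at a time (all intermediate sets stay in $h$), locates a single swap $Q\cup\{p\}\to Q\cup\{p'\}$ at which the orientation flips, and takes $\aff(Q)$ as the stabbing $(d-2)$-flat, with $p,p'$ on opposite sides of it inside $h$. You instead interpolate linearly in each color, encode orientation as the sign of a determinant $D$ against a fixed witness $x_0\notin h$, and use the intermediate value theorem to find a degenerate colorful tuple $(r_1(t^\star),\dots,r_d(t^\star))$; its affine hull (extended if needed) is the stabbing flat. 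One thing your continuous argument buys you is that it sidesteps the need to check that every intermediate colorful set in the discrete chain still spans $h$ (and hence has a well-defined orientation) — the paper's step-by-step argument implicitly relies on the observation after \Cref{lem:stabbingFlats} that every colorful set of a well-separated family is affinely independent, whereas your argument only needs $D$ to change sign somewhere. Conversely, the paper's version is more purely combinatorial/geometric and does not introduce an auxiliary determinant or external reference point. Both are sound.
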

\begin{proof}
    Towards a contradiction, consider two colorful sets of points in $h$ that are oriented differently. Since we can go from one to the other by replacing points of each color one by one, we know that the orientation must change when replacing some point. Thus, there must also be two colorful point sets that are oriented differently and that only differ within one point, i.e., one is spanned by $Q\cup\{p\}$ and the other by $Q\cup\{p'\}$. For them to be oriented differently, $p$ and $p'$ need to lie on opposite sides of the $(d-2)$-flat spanned by $Q$. But then, this flat stabs the convex hull of all point sets, and the points cannot be well-separated.
\end{proof}

Note that the $\alpha$-Ham Sandwich theorem (\Cref{thm:alphaHam}) generalizes directly to the setting where the assumption of weak general position is dropped and \Cref{def:alphaCutOld} is replaced by \Cref{def:alphaCutNew}.

\begin{theorem}\label{thm:aHSwithTieBreaking}
    Given a well-separated family of point sets $P_1, P_2,\ldots,P_d\subset\Reals^d$ and $(\alpha_1,\ldots,\alpha_d)$ for $1\leq \alpha_i\leq |P_i|$, there exists a unique hyperplane $h$ that is an $(\alpha_1,\ldots,\alpha_d)$-cut according to \Cref{def:alphaCutNew}.
\end{theorem}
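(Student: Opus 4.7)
The plan is to deduce \Cref{thm:aHSwithTieBreaking} from the classical $\alpha$-Ham Sandwich theorem (\Cref{thm:alphaHam}) via a perturbation argument that handles both existence and uniqueness.

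For existence, I would replace each point $p \in P_i$ by a perturbed copy $p^\epsilon = p + \epsilon v_p$ with generic vectors $v_p$ and small $\epsilon > 0$. Well-separation is an open condition, so $\{P_i^\epsilon\}_i$ remains well-separated for small $\epsilon$, and generic choices of $v_p$ place the family in weak general position. \Cref{thm:alphaHam} then yields a unique strong $(\alpha_1, \ldots, \alpha_d)$-cut $h^\epsilon$, spanned by some colorful set $\{q_i^\epsilon\}$ where $q_i^\epsilon$ is the perturbed copy of some $q_i \in P_i$. Since each $P_i$ is finite, along a subsequence $\epsilon_n \downarrow 0$ the underlying $q_i$ can be chosen to be constant; by \Cref{lem:stabbingFlats} the colorful set $\{q_i\}_{i=1}^d$ spans a unique hyperplane $h$, which is the limit of the $h^{\epsilon_n}$. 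To verify $h$ satisfies \Cref{def:alphaCutNew}, observe that any $p \in P_i$ strictly on one side of $h$ has $p^{\epsilon_n}$ strictly on the same side of $h^{\epsilon_n}$ for all large $n$, while points of $P_i$ lying on $h$ may end up on either side; combined with $q_i^{\epsilon_n} \in h^{\epsilon_n}$, the identity $|h^{\epsilon_n,+} \cap P_i^{\epsilon_n}| = \alpha_i$ forces exactly $|(h^+ \setminus h) \cap P_i| + 1 \leq \alpha_i \leq |h^+ \cap P_i|$.

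For uniqueness, suppose for contradiction that $h \neq h'$ are both weak $(\alpha_1,\ldots,\alpha_d)$-cuts. By \Cref{lem:stabbingFlats}, each is spanned by a colorful subset of the points lying on it; since $h \neq h'$, these spanning sets must differ in at least one color, so the perturbed spanning sets will span distinct perturbed hyperplanes. I would then construct a single generic perturbation such that the hyperplanes $h^\epsilon$ and $(h')^\epsilon$ through the two perturbed colorful subsets are both strong $\alpha$-cuts of $\{P_i^\epsilon\}_i$, contradicting \Cref{thm:alphaHam}. For each color $i$, this requires directing exactly $\alpha_i - |(h^+ \setminus h) \cap P_i| - 1$ of the remaining tied points on $h$ to the positive side of $h^\epsilon$, and analogously for $h'$; the bounds in \Cref{def:alphaCutNew} make each such constraint feasible on its own.

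The main obstacle, and the key technical step, is handling points that lie on both $h$ and $h'$ simultaneously: a single perturbation direction $v_p$ must place such a point on the intended side of both $h^\epsilon$ and $(h')^\epsilon$. I would resolve this by noting that $h$ and $h'$ decompose $\R^d$ into four open sectors, and generically choosing each perturbation within the sector realizing its required sign combination with respect to $(h,h')$; a counting argument, pairing the inequality bounds in \Cref{def:alphaCutNew} for $h$ and $h'$ against the number of points lying in each sector of $h \cap h'$, then guarantees that consistent sector assignments for all colors exist, completing the contradiction.
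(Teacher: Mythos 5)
Your approach matches the paper's, which dispatches this theorem in a single sentence: \emph{``This simply follows from perturbing the points and applying \Cref{thm:alphaHam}.''} You fill in the details that sentence glosses over, and your existence argument (perturb, pass to a subsequence with a fixed colorful spanning set, take limits) is exactly the intended one. For uniqueness, the closing ``counting argument'' you invoke is not actually needed and slightly overcomplicates things: since $h\neq h'$, either they are parallel (so $h\cap h'=\emptyset$ and no point is tied for both) or their normals are linearly independent, and then each point on $h\cap h'$ can be perturbed into whichever of the four sectors of $(h,h')$ you like, independently of the other points; consequently the requirement that exactly $\alpha_i-1-|(h^+\setminus h)\cap P_i|$ tied points of $P_i$ land strictly above $h^\epsilon$ and exactly $\alpha_i-1-|(h'^+\setminus h')\cap P_i|$ land strictly above $(h')^\epsilon$ decouple completely (each is feasible on its own by the bounds in \Cref{def:alphaCutNew}), and no pairing of bounds against sector sizes is required.
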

\begin{proof}
    This simply follows from perturbing the points and applying \Cref{thm:alphaHam}.
\end{proof}

As our second change, we have to strengthen well-separation to \emph{strong well-separation}. We illustrate well-separation and strong well-separation in \Cref{fig:SWSvsWS}.

\begin{definition}
    A family of point sets $P_1=\{p_{1,1},\ldots,p_{1,s_1}\},P_2,\ldots,P_d\subset\Reals^d$ is said to be \emph{strongly well-separated} if the point sets $P'_1,\ldots,P'_d$ obtained by projecting $P_1,\ldots,P_d$ to the hyperplane spanned by $p_{1,1},\ldots,p_{d,1}$ are well-separated.
\end{definition}

\begin{figure}[hbt]
    \centering
    \begin{subfigure}{0.4\textwidth}
    \includegraphics{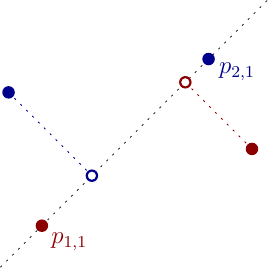}
    \end{subfigure}
    \hfill
    \begin{subfigure}{0.4\textwidth}
    \includegraphics{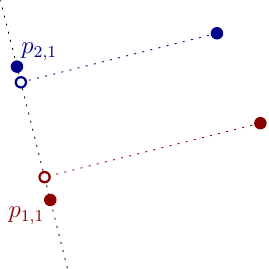}
    \end{subfigure}
    \caption{The point sets (dots) on the left are well-separated but not strongly well-separated, since their projections (circles) are not well-separated. The point sets on the right are strongly well-separated.}
    \label{fig:SWSvsWS}
\end{figure}

Under the assumption of strong well-separation, we have an even stronger version of \Cref{lem:noOppositeOrientationsOnSameHyperplane}:

\begin{lemma}\label{lem:noOppositeOrientations}
    Let $P_1,\ldots,P_d\subset\R^d$ be a strongly well-separated family of point sets. Let $h$ be a colorful hyperplane and $v$ its positive-side normal vector. Let $w$ be the positive-side normal vector of the colorful hyperplane spanned by the points $p_{1,1},\ldots,p_{d,1}$. Then $v^Tw>0$.
\end{lemma}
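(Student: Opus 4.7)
The plan is to reduce the question to a statement purely about the projected point sets on $h_0$, the hyperplane spanned by $p_{1,1}, \ldots, p_{d,1}$. I would first change coordinates of $\R^d$ so that $h_0 = \{x \in \R^d : x_d = 0\}$ and $w = e_d$; this is possible since by assumption $w$ is the positive-side normal of $h_0$. In these coordinates, let $q_1, \ldots, q_d$ with $q_i \in P_i$ be a colorful tuple spanning $h$ that determines the positive-side normal $v$. By the definition of orientation, $v \cdot w = v \cdot e_d$ has the same sign as $\det(q_2 - q_1, \ldots, q_d - q_1, e_d)$, and expanding this determinant along the last column yields exactly the orientation-determinant of the projections $(q'_1, \ldots, q'_d)$ in $h_0 \cong \R^{d-1}$, where $q'_i$ denotes the projection of $q_i$ onto $h_0$.

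Next, since $p_{1,1}, \ldots, p_{d,1}$ already lie on $h_0$, they coincide with their own projections, and the positivity of the orientation-determinant of $(p_{1,1}, \ldots, p_{d,1})$ in $\R^{d-1}$ is precisely the defining property of $w$. Thus it suffices to show that in $\R^{d-1}$ the two colorful simplices $(q'_1, \ldots, q'_d)$ and $(p_{1,1}, \ldots, p_{d,1})$ have the same (positive) orientation.

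I would then establish the following auxiliary lemma: if $P'_1, \ldots, P'_d \subset \R^{d-1}$ are well-separated, then every colorful simplex is non-degenerate and all colorful simplices share a common orientation. Non-degeneracy follows directly from \Cref{lem:stabbingFlats}, since a degenerate colorful simplex would span a $(d-2)$-flat meeting $\mathrm{conv}(P'_i)$ for every $i$, violating well-separation. For common orientation, I would use a swap-by-swap argument essentially identical in spirit to \Cref{lem:noOppositeOrientationsOnSameHyperplane}: if two colorful simplices had opposite orientations, interpolating one color at a time would produce a pair $(r_1, \ldots, r_i, \ldots, r_d)$ and $(r_1, \ldots, r'_i, \ldots, r_d)$ differing only in color $i$ and of opposite orientations, so $r_i$ and $r'_i$ would lie on opposite sides of the $(d-2)$-flat spanned by the remaining points. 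By continuity this flat would meet $\mathrm{conv}(P'_i)$, and since it contains one point of each other color, it would meet each $\mathrm{conv}(P'_j)$ as well, contradicting \Cref{lem:stabbingFlats}.

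The main obstacle is arguably the careful book-keeping between orientations in $\R^d$ and orientations in the projection to $\R^{d-1}$, since orientation signs are sensitive to both coordinate changes and the ordering of points; the cofactor expansion above is what makes the two notions compatible. Once this translation is in place, the well-separation of $P'_1, \ldots, P'_d$ provided by the strong well-separation hypothesis feeds directly into the swap argument, yielding $v \cdot w > 0$.
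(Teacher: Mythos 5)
Your proof is correct and follows essentially the same approach as the paper: both arguments reduce to the projected point sets on the hyperplane $h_0$ spanned by $p_{1,1},\ldots,p_{d,1}$, and both use a point-by-point swap argument combined with the stabbing-flats characterization of well-separation (\cref{lem:stabbingFlats}) to derive a contradiction. The one difference is organizational: the paper performs the swap directly on colorful hyperplanes in $\R^d$ and invokes the projection only at the critical step, while you first reduce the whole statement to a $(d-1)$-dimensional orientation claim via the cofactor expansion and then run the swap argument entirely in the projection; your version makes the orientation book-keeping (which the paper leaves implicit in the sentence ``For them to be oriented differently\ldots\ $p$ and $p'$ need to lie on opposite sides of the $(d-2)$-flat'') fully explicit, which is a welcome clarification but not a different proof.
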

\begin{proof}
    Towards a contradiction, consider two colorful hyperplanes $h$ and $h'$ with $v^Tw>0$ and $v'^Tw<0$ (note that equality is impossible due to the strong well-separation assumption). Since we can go from one to the other by replacing points of each color one by one, we know that the sign of the dot-product must change when replacing some point. Thus, there must also be two colorful hyperplanes with different sign that only differ within one point, i.e., one is spanned by $Q\cup\{p\}$ and the other by $Q\cup\{p'\}$. For them to be oriented differently, in the projection of $Q\cup\{p,p'\}$ onto the hyperplane spanned by $p_{1,1},\ldots,p_{d,1}$, $p$ and $p'$ need to lie on opposite sides of the $(d-2)$-flat spanned by $Q$. But then, this flat stabs the convex hull of all projected point sets, and thus these projected point sets cannot be well-separated, which by definition contradicts strong well-separation of $P_1,\ldots,P_d$.
\end{proof}

We are now ready to introduce the problem we wish to study.
\begin{definition} 
$\swsHS(P_1,\ldots,P_d,\alpha_1,\ldots,\alpha_d)$
\begin{description}[labelindent=\deftab]
\item[Input:] $d$ point sets $P_1,\ldots,P_d\subset\Reals^d$ and a vector $(\alpha_1,\ldots,\alpha_d)$, where $\alpha_i\in \{1,|P_i|\}$.
\item[Promise:] The point sets $P_1,\ldots,P_d$ are strongly well-separated.
\item[Output:] An $(\alpha_1,\ldots,\alpha_d)$-cut.
\end{description}
\end{definition}
We depart from the \emph{Ham Sandwich} name for this problem since for $\alpha_i\in\{1,|P_i|\}$, the $\alpha$-``cuts'' are really just colorful tangent hyperplanes, as can be seen in \Cref{fig:ColTangentsExample}.

\begin{figure}[hbt]
    \centering
    \includegraphics{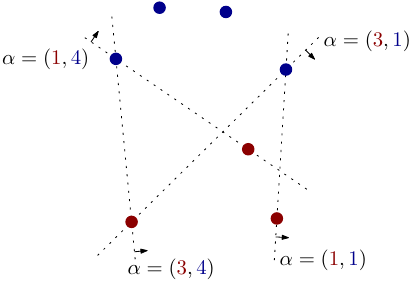}
    \caption{Two well-separated point sets and the solution $\alpha$-cuts to all four possible $\alpha$-vectors. Note that the positive side of a colorful line is considered to be the right side when orienting the line from the red point to the blue point.}
    \label{fig:ColTangentsExample}
\end{figure}

Similarly to \pglcp and its binary variant \plcp, we also define \swsTwoHS as the restriction of \swsHS to inputs where $|P_i|=2$ for all $i\in [d]$. We now wish to prove these two problems polynomial-time equivalent to \pfixp. To achieve this, we reduce \swsHS to \pfixp, and \pfixp to \swsTwoHS, with the reduction by inclusion from \swsTwoHS to \swsHS closing the cycle.

For our reductions, we will use the concept of \emph{point-hyperplane duality}, as described by Edelsbrunner in \cite[p.13]{edelsbrunner1987geometry}. Point-hyperplane duality is a bijective mapping from all points in $\R^d$ to all non-vertical hyperplanes in $\R^d$. A point $p=(p_1,\ldots,p_d)$ is mapped to the hyperplane $p^*=\{x\in\R^d\;\vert\; 2p_1x_1 + \ldots 2p_{d-1}x_{d-1}-x_d = p_d\}$. The hyperplane $p^*$ is called \emph{the dual of $p$}, and for a non-vertical hyperplane $h$, its dual $h^*$ is the unique point $p$ such that $p^*=h$.

Point-hyperplane duality has the nice properties of \emph{incidence preservation} and \emph{order preservation}: for any point $p$ and non-vertical hyperplane $h$, we have that $p\in h$ if and only if $h^*\in p^*$, and furthermore we have that $p$ lies above $h$ if and only if $h^*$ lies above $p^*$.

We are now ready to begin presenting our reductions. For all of these reductions, we use the following crucial yet simple observation on strongly well-separated point sets:

\begin{observation}\label{obs:stronglyWSimpliesWSforallshifts}
    If a set of point sets $P_1,\ldots,P_d$ is strongly well-separated, then every set of point sets $P'_1,\ldots,P'_d$ obtained by moving points $p_{i,j}$ ($j\neq 1$) orthogonally to the hyperplane spanned by $p_{1,1},\ldots,p_{d,1}$ is also strongly well-separated.
\end{observation}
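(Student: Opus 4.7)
The proof should be essentially immediate from the definition of strong well-separation, and the plan is to unfold that definition and observe that nothing on the relevant side changes. Let $H$ denote the hyperplane spanned by the fixed points $p_{1,1},\ldots,p_{d,1}$, and let $\pi\colon \R^d\to H$ be the orthogonal projection onto $H$. By definition, $P_1,\ldots,P_d$ being strongly well-separated means exactly that the projected sets $\pi(P_1),\ldots,\pi(P_d)$ are well-separated in $H$.

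First I would observe that the hyperplane $H$ itself does not change under the allowed shifts: the restriction $j\neq 1$ means precisely that the spanning points $p_{1,1},\ldots,p_{d,1}$ remain in place, so the hyperplane spanned by them after the shift is still $H$. Hence the correct reference hyperplane for checking strong well-separation of $P'_1,\ldots,P'_d$ is again $H$, and the reduction to well-separation of projections is with respect to the same $\pi$.

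Next I would use the defining property of orthogonal projection: if $p'$ is obtained from $p$ by translating along a vector orthogonal to $H$, then $\pi(p')=\pi(p)$. Applying this to each shifted point $p_{i,j}$ (with $j\neq 1$) and to each unshifted point (trivially), we conclude that $\pi(P'_i)=\pi(P_i)$ for every $i\in[d]$. Therefore the families $\{\pi(P'_i)\}_{i=1}^{d}$ and $\{\pi(P_i)\}_{i=1}^{d}$ coincide, so the former is well-separated if and only if the latter is. By the assumption of strong well-separation of $P_1,\ldots,P_d$, the latter holds, and so $P'_1,\ldots,P'_d$ is strongly well-separated.

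There is no real obstacle here: the statement is essentially a restatement of the invariance of orthogonal projection under orthogonal translations, combined with the fact that the reference hyperplane $H$ is pinned down by the unmoved $j=1$ points. The only detail worth being careful about is making sure that the definition of strong well-separation uses exactly the projection onto $H$, so that ``orthogonal to $H$'' in the statement matches the projection direction used in the definition, which it does.
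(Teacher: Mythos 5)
Your proof is correct. The paper states this as an observation without giving a proof, so there is no authorial argument to compare against; your argument is the natural one: since the allowed moves fix the points $p_{i,1}$, the reference hyperplane $H$ and hence the orthogonal projection $\pi$ are unchanged, and since each moved point translates along the normal direction, $\pi(P'_i)=\pi(P_i)$ for all $i$, so well-separation of the projections — which is exactly strong well-separation — is preserved.
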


The general idea of the reductions is to represent the linear inputs to a $\min$ or $\max$ operation in a \gfixp system by a point $p_{i,1}$ and the affine inputs by a point $p_{i,j}$ ($j\neq 1$).

\subsection{\swsHS to \pfixp}
As a warm-up, we first only reduce from the two-point version.

\begin{lemma}\label{lem:SWS2PaHS-to-PFixP}
    There is a poly-time many-one reduction from \swsTwoHS to \pfixp.
\end{lemma}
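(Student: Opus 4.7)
The plan is to use point-hyperplane duality. Each input point $p_{i,j}$ dualizes to the hyperplane $p_{i,j}^* = \{x \in \mathbb{R}^d : \ell_{i,j}(x) = 0\}$, where $\ell_{i,j}(x) = 2p_{i,j,1}x_1 + \dots + 2p_{i,j,d-1}x_{d-1} - x_d - p_{i,j,d}$. By incidence and order preservation, a hyperplane $h$ is an $(\alpha_1,\dots,\alpha_d)$-cut if and only if its dual point $h^*$ satisfies, for each colour $i$, $\max(\ell_{i,1}(h^*), \ell_{i,2}(h^*)) = 0$ in the $\alpha_i = 2$ case and $\min(\ell_{i,1}(h^*), \ell_{i,2}(h^*)) = 0$ in the $\alpha_i = 1$ case: one of the dual hyperplanes passes through $h^*$ while $h^*$ lies on the correct side of the other. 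These $\min$/$\max$ conditions on affine functions fit naturally into the \gfixp form.

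I would begin with a preprocessing step that applies a rigid motion (isometry) mapping the hyperplane spanned by $p_{1,1}, \ldots, p_{d,1}$ onto $\{x_d = 0\}$. Strong well-separation guarantees that these $d$ points are affinely independent (otherwise they would not span a hyperplane at all), so such an isometry exists; since isometries preserve convex hulls, separating hyperplanes, and orthogonal projections, strong well-separation is maintained. After preprocessing, $p_{i,1,d} = 0$ for every $i$, so $\ell_{i,1}(x)$ becomes purely linear with no constant term.

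Next I would construct the \pfixp instance with $d$ variables $x_1, \ldots, x_d$ and, for each $i$, the equation
\[
x_i \;=\; \operatorname{op}_i\!\bigl(\, x_i + \ell_{i,2}(x),\; x_i + \ell_{i,1}(x) \,\bigr),
\]
where $\operatorname{op}_i = \min$ if $\alpha_i = 1$ and $\operatorname{op}_i = \max$ if $\alpha_i = 2$. Cancelling the $x_i$ inside the operation reduces each equation to $\operatorname{op}_i(\ell_{i,1}(x), \ell_{i,2}(x)) = 0$, exactly the desired cut condition. Since $\ell_{i,1}(x)$ has no constant after preprocessing, the right argument $x_i + \ell_{i,1}(x)$ is purely linear as required by \gfixp, while the constant $-p_{i,2,d}$ of $\ell_{i,2}$ becomes the entry $q_i$; the set $S$ is $\{i : \alpha_i = 2\}$.

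The main obstacle is verifying the \pfixp promise: for every $q' \in \mathbb{R}^d$, the system $(L, R, q', S)$ must have a unique solution. I would argue this by observing that replacing $q$ by $q'$ corresponds exactly to shifting each $p_{i,2}$ along the $e_d$ direction, which after preprocessing is orthogonal to the reference hyperplane $\{x_d = 0\}$ spanned by the $p_{i,1}$'s. By \Cref{obs:stronglyWSimpliesWSforallshifts} the perturbed family remains strongly well-separated, and by \Cref{thm:aHSwithTieBreaking} a unique $\alpha$-cut exists for every such $q'$. The cut cannot be vertical: otherwise its projection along $e_d$ would be a $(d-2)$-flat in $\{x_d = 0\}$ passing through the projection of one point of each colour class, hence stabbing every projected convex hull $\operatorname{conv}(P'_i)$, contradicting well-separation of the $P'_i$'s via \Cref{lem:stabbingFlats}. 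Therefore the unique $\alpha$-cut has a well-defined dual point in $\mathbb{R}^d$, which is the unique solution of the reduced \gfixp system, establishing the \pfixp promise and completing the reduction.
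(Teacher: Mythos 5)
Your proposal is correct and follows essentially the same route as the paper's proof: preprocess so the reference hyperplane is $\{x_d = 0\}$, dualize points to hyperplanes, encode the tangent condition for each colour as a two-argument $\min$/$\max$ equalling zero, and verify the \pfixp promise via \Cref{obs:stronglyWSimpliesWSforallshifts} and \Cref{thm:aHSwithTieBreaking}. Your sign convention is the mirror image of the paper's (you define $\ell_{i,j}(x)<0$ to mean ``above,'' so you use $\max$ where the paper uses $\min$ and set $S=\{i:\alpha_i=2\}$ rather than its complement), but this is just a relabelling.

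Two small points where you are actually \emph{more} careful than the paper: (1) you insist the preprocessing be a rigid motion so that orthogonal projection onto the reference hyperplane -- and hence strong well-separation, which is not invariant under arbitrary affine maps -- is preserved; the paper says only ``linearly transform,'' leaving this implicit; and (2) you explicitly argue via \Cref{lem:stabbingFlats} that the unique $\alpha$-cut cannot be vertical, which is needed because a vertical hyperplane has no image under the point--hyperplane duality, so without this the dualized system could fail to have a solution for some $q'$. The paper's proof omits this step. The one thing both you and the paper leave implicit is that, having oriented the reference colorful hyperplane upwards, \Cref{lem:noOppositeOrientations} is what guarantees that \emph{every} colorful hyperplane (in particular the $\alpha$-cut) has its positive side pointing upwards, so that ``$\alpha_i=2$ means both points on or above'' is the correct reading; it would be cleaner to cite it.
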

\begin{proof}
    We first linearly transform our point sets such that the plane through $p_{1,1},\ldots,p_{d,1}$ is mapped to the horizontal plane $\{(x_1,\ldots,x_d)\in \R^d\;\vert\; x_d=0\}$. Without loss of generality, we assume that after this transformation the colorful hyperplane $h$ spanned by $p_{1,1},\ldots,p_{d,1}$ is oriented upwards, i.e., its positive halfspace contains $(0,\ldots,0,+\infty)$.
    
    Then, we apply point-hyperplane duality. Each point $p_{i,j}$ becomes a hyperplane $h_{i,j}=p_{i,j}^*$, which can be described by some function $h_{i,j}:\R^d\rightarrow \R$ such that a point $x\in \R^d$ lies strictly above $h_{i,j}$ if $h_{i,j}(x)>0$, and on $h_{i,j}$ if $h_{i,j}(x)=0$. For the hyperplanes $h_{i,1}$ dual to our points $p_{i,1}$, we furthermore have that they go through the origin, i.e., $h_{i,j}$ is a linear function (and not only an affine function; it has no additive constant term).

    Before applying duality, the desired $(\alpha_1,\ldots,\alpha_d)$-cut hyperplane is a hyperplane $h$ containing at least one point $p_i'$ per set $P_i$, such that the other point of $P_i$ lies on or above $h$ if and only if $\alpha_i=|P_i|$, and on or below $h$ if and only if $\alpha_i=1$. In the dual, this now means that we need to find a point $p$ which lies on at least one of the hyperplanes $h_{i,1},h_{i,2}$ for each $i$, and such that it lies above (below) or on both of these hyperplanes if $\alpha_i=|P_i|$ ($\alpha_i=1$).
    This can be described by the equations $\min\{h_{i,1}(x),h_{i,2}(x)\}=0$ if $\alpha_i=|P_i|$ (and $\max$ instead of $\min$, otherwise). This can of course be rewritten as $x_i=\min\{h_{i,2}(x)+x_i, h_{i,1}(x)+x_i\}$. Note again that the second input to this minimum is \emph{linear} in $x$ (there is no additive term). With one such constraint each per point set $P_i$, we thus get a system of $d$ equations over $d$ variables, which together form a \gfixp system.

    It remains to check whether this system is also a \pfixp instance. We first see that changing $q$ simply corresponds to moving the hyperplanes $h_{i,2}$ in a parallel fashion (i.e., without changing their normal vectors), which in the primal corresponds to moving them in direction $x_d$. By \Cref{obs:stronglyWSimpliesWSforallshifts} and \Cref{thm:aHSwithTieBreaking}, we always have unique $(\alpha_1,\ldots,\alpha_d)$-cuts in this modified family of point sets, and thus the \gfixp instance has a unique solution for all $q'$.
\end{proof}

We now generalize this reduction in a very similar fashion as we generalized the reduction from \plcp to \pfixp to work for \pglcp instead.

\begin{theorem}\label{thm:SWSaHS2PFixP}
    There is a poly-time many-one reduction from \swsHS to \pfixp.
\end{theorem}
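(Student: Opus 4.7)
The plan is to combine the geometric setup of \Cref{lem:SWS2PaHS-to-PFixP} with the nested-chain trick of \Cref{lem:PGLCP2PFixP}. First, I would apply exactly the same preprocessing as in the binary reduction: linearly transform the input so that the colorful hyperplane through $p_{1,1},\ldots,p_{d,1}$ becomes the horizontal plane $\{x_d=0\}$ oriented upwards, and then dualize. Each point $p_{i,j}$ becomes a hyperplane $h_{i,j}$ described by an affine function $h_{i,j}\colon\R^d\to\R$; in particular $h_{i,1}$ passes through the origin, so $h_{i,1}$ is linear, while $h_{i,j}$ for $j\ge 2$ has a nonzero constant term. The sought $(\alpha_1,\ldots,\alpha_d)$-cut then dualizes to a point $x\in\R^d$ satisfying, for each color $i$, $\min_{j}h_{i,j}(x)=0$ when $\alpha_i=|P_i|$, and $\max_{j}h_{i,j}(x)=0$ when $\alpha_i=1$.

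To unroll each $|P_i|$-ary $\min$ or $\max$ into binary \gfixp equations, I would introduce, for every color $i$, auxiliary variables $x_i^1,\ldots,x_i^{|P_i|}$ where $x_i^1$ plays the role of the $i$-th coordinate of the sought dual point; write $\mathbf{x}^1=(x_1^1,\ldots,x_d^1)$. In the min case $\alpha_i=|P_i|$ I would impose
\begin{align*}
x_i^j &= \min\bigl(h_{i,j}(\mathbf{x}^1)+x_i^1,\;x_i^{j+1}\bigr) && \text{for } 1\le j<|P_i|,\\
x_i^{|P_i|} &= \min\bigl(h_{i,|P_i|}(\mathbf{x}^1)+x_i^1,\;h_{i,1}(\mathbf{x}^1)+x_i^1\bigr),
\end{align*}
and symmetrically with $\max$ in place of $\min$ (encoded via $S$) for $\alpha_i=1$. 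Back-substitution from bottom to top collapses the chain to $x_i^1=\min_{j}(h_{i,j}(\mathbf{x}^1)+x_i^1)$, i.e., $\min_{j}h_{i,j}(\mathbf{x}^1)=0$, as required. In every equation the second argument is linear in the variables --- either a single auxiliary variable, or $h_{i,1}(\mathbf{x}^1)+x_i^1$, which is linear because $h_{i,1}$ is --- so the system fits the $L,R,q,S$ format of a \gfixp instance, with the constant terms of $h_{i,j}$ for $j\ge 2$ being the only nonzero entries of $q$.

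Finally, I would verify the \pfixp promise. Perturbing $q$ to an arbitrary $q'$ corresponds, in the primal, to translating each point $p_{i,j}$ with $j\ge 2$ orthogonally to the hyperplane spanned by $p_{1,1},\ldots,p_{d,1}$. By \Cref{obs:stronglyWSimpliesWSforallshifts} the perturbed family remains strongly well-separated, and by \Cref{thm:aHSwithTieBreaking} it still admits a unique $(\alpha_1,\ldots,\alpha_d)$-cut, whose dual uniquely determines $\mathbf{x}^1$; the chain then determines each auxiliary $x_i^j$ with $j\ge 2$ uniquely. The step I expect to be the main obstacle is the choice of closing equation for each chain: using $h_{i,1}(\mathbf{x}^1)+x_i^1$ as the ``linear'' second argument, rather than the $2x_i^1$ used in the \pglcp-to-\pfixp reduction, keeps the encoding faithful by avoiding a spurious sign constraint $x_i^1\ge 0$ that the $2x_i^1$ choice would otherwise introduce.
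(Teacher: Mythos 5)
Your proposal is correct and takes essentially the same route as the paper: dualize as in \Cref{lem:SWS2PaHS-to-PFixP}, unroll each $|P_i|$-ary $\min$/$\max$ into a chain of binary operations with auxiliary variables as in \Cref{lem:PGLCP2PFixP}, close each chain with the linear function $h_{i,1}(\mathbf{x}^1)+x_i^1$ rather than $2x_i^1$ so as not to impose a spurious non-negativity constraint, and verify the \pfixp promise via \Cref{obs:stronglyWSimpliesWSforallshifts} and \Cref{thm:aHSwithTieBreaking}. The only cosmetic differences are that you use $|P_i|$ rather than $|P_i|-1$ helper variables per color (so $h_{i,1}$ appears twice in the collapsed minimum, which is harmless) and you iterate through the $h_{i,j}$ in ascending rather than descending index order; neither affects correctness.
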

\begin{proof}
    The proof works exactly the same way as the previous proof of \Cref{lem:SWS2PaHS-to-PFixP}. The only difference is that the equations that need to be encoded are of the form 
    \begin{equation}\label{eqn:minimumManySWSHS}
        x_i=\min\{h_{i,1}(x)+x_i,\ldots,h_{i,|P_i|}(x)+x_i\},
    \end{equation}
    where still only $h_{i,1}(x)$ is guaranteed to be a linear function. To encode this as a \gfixp system, we apply the same trick as in the proof of \Cref{lem:PGLCP2PFixP} to split the multi-input minimum into multiple two-input minima.

    For each variable $x_i$, we introduce $|P_i|-1$ helper variables $z_i^1,\ldots,z_i^{|P_i|-1}$. The first helper variable $z_i^1$ will replace $x_i$, and we thus write $\mathbf{z^1}$ to mean the vector $(z_1^1,\ldots,z_d^1)$ replacing $x$. Then, we express \Cref{eqn:minimumManySWSHS} using the following $|P_i|$ equations: For $1\leq j\leq |P_i|-2$, we add
    $$z_i^j = \min\{h_{i,|P_i|-j+1}(\mathbf{z^1}) + z_i^1, z_i^{j+1}\} $$
    and finally we also add
    $$z_i^{|P_i|-1} = \min\{ h_{i,2}(\mathbf{z^1}) + z_i^1, h_{i,1}(\mathbf{z^1}) + z_i^1\}.$$
    It is easy to see that in any solution, the vector $\mathbf{z^1}$ must be a valid solution $x$ for the \Cref{eqn:minimumManySWSHS}. Since by \Cref{obs:stronglyWSimpliesWSforallshifts,thm:aHSwithTieBreaking} there is a unique such $x$ for all $q'$, and since given $\mathbf{z^1}$ all the other helper variables $z_i^j$ are determined uniquely, this \gfixp-system has a unique solution for all $q'$. 
\end{proof}

\subsection{\pfixp to \swsTwoHS}
To complete our cycle of reductions, we need to reduce from \pfixp to \swsTwoHS. For this reduction we basically perform the process from the proof of \Cref{lem:SWS2PaHS-to-PFixP} in reverse.

\begin{theorem}\label{thm:PFixP-to-SWS2PaHS}
    There is a poly-time many-one reduction from \pfixp to \swsTwoHS.
\end{theorem}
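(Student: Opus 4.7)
The plan is to invert the dualization used in the proof of \Cref{lem:SWS2PaHS-to-PFixP}. Given a \pfixp instance $(L, R, q, S)$ on $n$ variables, I would first rewrite each equation as $\min/\max(a_i(x), b_i(x)) = 0$, where $a_i(x) := L_i x + q_i - x_i$ is an affine form and $b_i(x) := R_i x - x_i$ is a linear form whose zero-set passes through the origin. Interpreting the $2n$ zero-sets as hyperplanes in $\R^n$ and applying inverse point-hyperplane duality then yields, for each $i$, a pair of points $p_{i,1} := \{b_i = 0\}^*$ (with last coordinate $0$) and $p_{i,2} := \{a_i = 0\}^*$. Setting $\alpha_i := 2$ when $i \in S$ and $\alpha_i := 1$ otherwise, incidence- and order-preservation of the duality immediately yield a bijection between $\alpha$-cut hyperplanes in the dual and \pfixp solutions in the primal: an $\alpha$-cut $h$ must contain at least one point of each $P_i$ (so $h^*$ satisfies $a_i(h^*) = 0$ or $b_i(h^*) = 0$), and the other point of $P_i$ lies above (respectively below) $h$ exactly when $\alpha_i = 2$ (respectively $\alpha_i = 1$), matching the $\max$/$\min$ semantics at $h^*$.

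The first obstacle is that inverse duality applies only to non-vertical hyperplanes, while a generic \pfixp instance can produce vertical ones. I would address this by padding into $\R^{n+2}$: introduce two auxiliary variables $x_{n+1}, x_{n+2}$ along with two \gfixp equations that uniquely fix $x_{n+1} = 1$ and $x_{n+2} = -1$, and then add $\mu_i(x_{n+1} + x_{n+2})$ to both the affine and linear sides of each original equation, with the $\mu_i \neq 0$ chosen freely. Because $x_{n+1} + x_{n+2} = 0$ in every solution these additions are semantically trivial, yet every hyperplane now has a nonzero coefficient on the direction $e_{n+1} + e_{n+2}$; a rotation sending this direction to the last basis vector then makes all hyperplanes non-vertical in the standard duality sense. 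The two dummy equations can be tilted similarly, and one needs to verify that these tilts do not introduce spurious solutions to the combined system.

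The main technical obstacle is establishing the \swsTwoHS promise, namely that the constructed family is strongly well-separated. Since the coordinates of the $p_{i,1}$'s can be read off from the rows of $R - I$ (up to a nonzero scaling), \Cref{lem:fixpinvertible} guarantees they are affinely independent and span the hyperplane $\{x_{n+2} = 0\}$, so the projection appearing in the definition of strong well-separation is well-defined. For well-separation of the projected family, I would proceed by contraposition using \Cref{obs:stronglyWSimpliesWSforallshifts} and \Cref{thm:aHSwithTieBreaking}: if the projections were not well-separated, one could find an orthogonal shift of the $p_{i,2}$'s yielding a configuration without a unique $\alpha$-cut, and pulling this shift back through the bijection above produces a shifted \pfixp instance $(L, R, q', S)$ that lacks a unique solution, contradicting the promise. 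Making this contrapositive rigorous---concretely translating a failing bipartition into an explicit bad $q'$---is where I expect the bulk of the proof's effort to lie.
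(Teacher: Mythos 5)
Your high-level plan---reverse the dualization of \Cref{lem:SWS2PaHS-to-PFixP}, read off two points per equation from the affine and linear zero-sets, and verify the strong well-separation promise by contraposition from the \pfixp promise---is the same route the paper takes. A few differences are worth noting. The paper first round-trips the instance through \plcp (via \Cref{sec:fixp2lcp,sec:lcp2fixp}) to normalize to $S=\emptyset$, then searches for a $(2,\ldots,2)$-cut; you keep $S$ and set $\alpha$ accordingly, which is legitimate, but double-check your convention: by \Cref{def:alphaCutNew} and the correspondence in \Cref{lem:SWS2PaHS-to-PFixP}, the $\min$ case ($i\notin S$) corresponds to $\alpha_i=|P_i|=2$ and the $\max$ case ($i\in S$) to $\alpha_i=1$, the opposite of what you wrote. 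Your padding into $\R^{n+2}$ to eliminate vertical hyperplanes is an idea not present in the paper. Be careful though: the \pfixp promise quantifies over \emph{all} $q'$, and for a $q'$ that perturbs the two auxiliary coordinates you no longer have $x_{n+1}+x_{n+2}=0$, so the added terms $\mu_i(x_{n+1}+x_{n+2})$ are not semantically trivial in that regime; you would still need to prove uniqueness of the enlarged system for such $q'$, which you have not sketched.

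The substantive gap is the step you explicitly defer: ``translating a failing bipartition into an explicit bad $q'$.'' That is where nearly all of the paper's effort lies, and it is not a routine verification. The paper uses \Cref{lem:stabbingFlats} to extract a $(d-2)$-flat $f\subset h$ stabbing all convex hulls when $q'=0$, and then splits into two cases according to how the non-movable points $p_{i,1}$ distribute over the two open sides $f^+,f^-$. If all lie on one side, two distinct colorful lower tangents are obtained by lifting the movable points on the other side onto a shifted hyperplane. If both sides contain non-movable points, the paper constructs two hyperplanes $h_-,h_+$ by rotating $h$ around carefully chosen $(d-2)$-dimensional axes containing $A\cup g$ resp.\ $B\cup g$, where the existence of these axes rests on a nontrivial dimension count on $\aff(A\cup g)$, and then checks both are colorful lower tangents. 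Only then is a lifting vector $q^*$ exhibited for which the \gfixp system has two solutions. Without reproducing such a construction, the \swsTwoHS promise for the constructed instance is unverified, and the reduction is incomplete.
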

\begin{proof}
    Recall the reductions between \pfixp and P-LCP from \Cref{sec:fixp2lcp,sec:lcp2fixp}. If a \pfixp instance is first reduced to P-LCP and then back to \pfixp, we end up with an instance with $S=\emptyset$. To prove the desired theorem, we thus only need to reduce such instances $(L,R,q,S=\emptyset)$ to \swsTwoHS.

    We consider the process from the proof of \Cref{lem:SWS2PaHS-to-PFixP} above in reverse. For each equation 
    \[x_i=\min\{l_{i,1}x_1+\ldots+l_{i,d}x_d+q_i, r_{i,1}x_1+\ldots+r_{i,d}x_d\}\]
    in the given \gfixp system we interpret both inputs to the minimum as a hyperplane equation. Thus we get hyperplanes $h_{i,2}(x)=l_{i,1}x_1+\ldots+(l_{i,i}-1)x_i+\ldots+l_{i,d}x_d + q_i$ and $h_{i,1}(x)=r_{i,1}x_1+\ldots+(r_{i,i}-1)x_i+\ldots+r_{i,d}x_d$. Clearly all $h_{i,1}$ go through the origin, since there is no additive constant. We now apply point-hyperplane duality to arrive back in the primal view, where we get our point sets $P_1,\ldots,P_d$. Without loss of generality we can again assume that in the resulting point set, the colorful hyperplane spanned by $p_{i,1},\ldots,p_{d,1}$ is horizontal and oriented upwards. If the point sets $P_1,\ldots,P_d$ are strongly well-separated, we can see by the same arguments as in the proof of \Cref{lem:SWS2PaHS-to-PFixP} and by \Cref{lem:noOppositeOrientations} that they form an instance of \swsTwoHS and every
    $(2,\ldots,2)$-cut
    corresponds to a solution of the given \pfixp instance and vice versa.

    It thus only remains to prove that these point sets are indeed strongly well-separated. To do this, we use the promise that the \pfixp instance has a unique solution for all $q'$. Recall from the proof of \Cref{lem:SWS2PaHS-to-PFixP} that changing the $i$-th entry of $q$ corresponds to moving the point $p_{i,2}$ orthogonally to the hyperplane $h$ spanned by $p_{1,1},\ldots,p_{d,1}$ (note that by \Cref{lem:fixpinvertible}, these points actually must span a hyperplane). Since we can thus move these points (individually) by changing $q$, we call the points $p_{i,2}$ \emph{movable}, while the points $p_{i,1}$ are \emph{non-movable}. We aim to show that the point set family obtained from $q'=0$ is well-separated. This then implies that the point set family obtained from any $q'$ (including our particular input $q$) is strongly well-separated, since $q'=0$ corresponds to all the movable points being projected onto the hyperplane $h$.

    We assume towards a contradiction that the point sets for $q'=0$ are \emph{not} well-separated. Then, by \Cref{lem:stabbingFlats}, there must be a $(d-2)$-flat $f$ contained in the hyperplane $h$ which stabs the convex hulls of all point sets. Note that since we only have two points per color, these convex hulls are just segments.
    
    To find a contradiction, we want to find some alternative lifting vector $q^*$ for which there exist two distinct solutions to the \gfixp system $(L,R,q^*,\emptyset)$. This would then of course be a contradiction to the fact that the given \gfixp system is a \pfixp instance. Note that a solution to the \gfixp system is a \emph{colorful lower tangent}, i.e., a  hyperplane $h'$ such that for each color, at least one point lies on the hyperplane and the other lies above.\footnote{In this setting of \gfixp systems, ``above'' means that the point lies in the halfspace bounded by $h'$ that contains $(0,\ldots,0,+\infty)$. We are not using the notion of orientation obtained from the orientation of some colorful set of points on the hyperplane. In fact, since we are not assuming well-separation, that notion could be inconsistent depending on the set of points we pick.}

    To see that we can find such a $q^*$ we distinguish two cases based on which non-movable points lie on which side of $f$, i.e., in which of the two components of  $h\setminus f$. We call these two components (open halfspaces within $h$) $f^-$ and $f^+$. Note that points may also lie on $f$ itself, however since the non-movable points span $h$, at least one non-movable point does not lie on~$f$.
    
    \begin{enumerate}
        \item Some side, w.l.o.g. $f^+\cup f$ contains all non-movable points: Then we can consider two hyperplanes, $h$ spanned by all the non-movable points, and $h'\supset f$ which lies strictly below $h$ on the side $f^+$. Then we can move the movable points in $f^-$ such that they all lie on $h'$ and above or on $h$. Since $f$ is stabbing, both of these hyperplanes are colorful, and they are clearly lower tangents, as illustrated in \Cref{fig:foldingcase1}. 
        \begin{figure}
            \centering
            \includegraphics{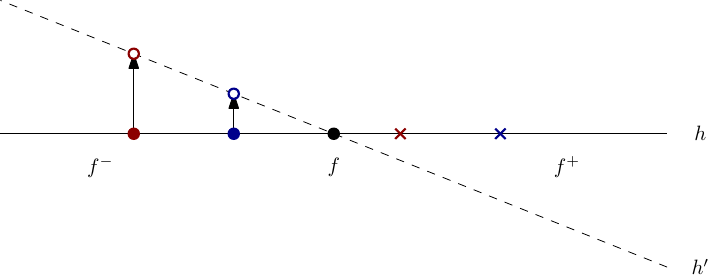}
            \caption{Case 1 of the proof of \Cref{thm:PFixP-to-SWS2PaHS}: all non-movable points (crosses) lie in $f^+\cup f$. Then the movable points (represented by solid dots) can be moved up so that both $h$ spanned by the non-movable points as well as $h'$ spanned by the lifted movable points (circles) are colorful lower tangents.}
            \label{fig:foldingcase1}
        \end{figure}
        \item Both $f^-$ and $f^+$ contain at least $1$ non-movable point. In this case let $A$ be the set of non-movable points in $f^-\cup f$, and $B$ the set of non-movable points in $f^+\cup f$. Note that $|A|,|B|\leq d-1$.
        We consider the flats $\alpha := \aff(A)$ and $\beta := \aff(B)$ spanned by these sets. 
        We now consider the intersections $\alpha_f:=\alpha\cap f$ and $\beta_f:=\beta\cap f$. Let $g\subset f$ be the affine hull of $\alpha_f\cup \beta_f$. 
                
        We now wish to move our movable points such that 
        \begin{itemize}
            \item all points previously in $f^-\cup f$ lie on a common hyperplane $h_-$,
            \item all points previously in $f^+\cup f$ lie on a common hyperplane $h_+$,
            \item $h_-\cap h_+$ is a $(d-2)$-dimensional flat which projected onto $h$ equals $f$,
            \item $h_-$ and $h_+$ are both colorful lower tangents.
        \end{itemize}
        To achieve this, we first define the two hyperplanes $h_-$ and $h_+$, and then simply adjust our movable points to lie on the correct hyperplanes. The first hyperplane $h_-$ is defined by rotating around a $(d-2)$-dimensional rotation axis containing $A$ and $g$. To see that such a rotation axis exists, we  show that $\dim(\aff(A\cup g))\leq d-2$. First, we observe that $\dim(\alpha)\leq |A|-1$ and $\dim(\beta_f)\leq |B|-2$. Furthermore, $\dim(\alpha\cap \beta_f)=\dim(\alpha_f\cap\beta_f)\geq |A\cap B|-1$. Thus, $\dim(\aff(\alpha\cup \beta_f))\leq (|A|-1)+(|B|-2) -(|A\cap B|-1)=d-2$. Since $\aff(\alpha\cup \beta_f)=\aff(A\cup g)$, our desired rotation axis exists.
        Since this rotation axis contains $A$, we also have $A\subset h_-$, and thus all non-movable points in $f^-\cup f$ lie on $h_-$, as desired. Similarly, $h_+$ is $h$ rotated around an axis containing $B$ and $g$. Note that we pick our two rotation axes such that their intersection is contained in $f$. The direction of these rotations is picked such that some fixed non-movable point in $f^+$ lies above $h_-$, and some fixed non-movable point in $f^-$ lies above $h_+$. The third constraint, $h_-\cap h_+$ projecting onto $f$, links the necessary angle of rotation between the two hyperplanes\footnote{There always exist angles to fulfill this constraint: Firstly, note that the projection of $h_-\cap h_+$ onto $h$ always contains the intersection of the two rotation axes, which is in $f$. Secondly, note that we can make the projected intersection lie in each rotation axis by making either $h_-$ or $h_+$ vertical. By continuity, any projected intersection between those two extremes can also be realized, thus in particular $f$.}. This is illustrated in \Cref{fig:foldingcase2}.

        \begin{figure}[hbt]
            \centering
            \includegraphics{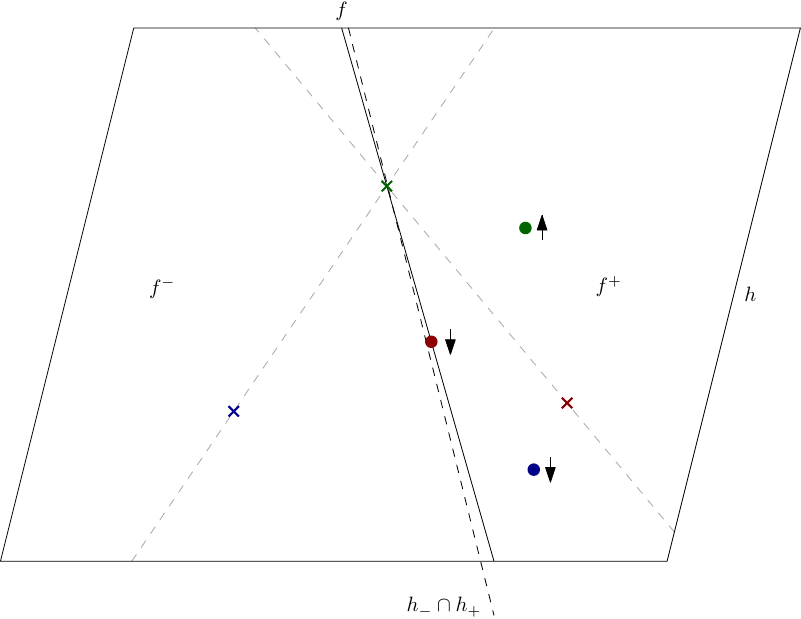}
            \caption{Case 2 of the proof of \Cref{thm:PFixP-to-SWS2PaHS}: at least one non-movable point (crosses) lies in each side $f^+,f^-$. Then we can find two hyperplanes $h_+,h_-$ by rotating $h$ along the two dashed axes. The movable points (solid dots) can be moved up or down to lie on these rotated hyperplanes.}
            \label{fig:foldingcase2}
        \end{figure}
        
        It only remains to show that both $h_-$ and $h_+$ are colorful lower tangents: Each hyperplane is clearly colorful since it contains all points in $f^-\cup f$ (or $f^+\cup f$), and $f$ is stabbing the convex hulls of all $P_i$. Furthermore, to see that the hyperplanes are lower tangents, we view the point set projected onto $h$. Since $h_-\cap h_+$ projects onto $f$, $h_-$ lies above $h_+$ within all of $f^-$, and $h_+$ lies above $h_-$ within all of $f^+$. Since all points in $f^+$ lie on $h_+$ and all points in $f^-$ lie on $h_-$, both hyperplanes must be lower tangents.
    \end{enumerate}
    Clearly these two cases cover all possibilities. We have thus shown that under the assumption that the point sets for $q'=0$ are not well-separated, the given \gfixp system cannot have been a \pfixp instance, since we have constructed a $q^*$ for which $(L,R,q^*,S)$ has multiple solutions. Due to this contradiction we conclude that the point sets obtained from $q'=0$ must be well-separated, which in turn proves that the point sets obtained from $q$ are \emph{strongly} well-separated.
\end{proof}

\subsection{Well-Separation versus Strong Well-Separation}

To end our treatise on the colorful tangent problems, we want to note that the assumption of strong well-separation is not too strong, at least combinatorially:

\begin{theorem}\label{thm:samecombinatorics}
    For every family of well-separated point sets, there exists a family of strongly well-separated point sets with the same combinatorial structure, i.e., the two families have the exact same order type.
\end{theorem}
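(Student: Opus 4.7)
The plan is to construct a strongly well-separated family with the same order type by keeping the base points and the heights of all non-base points fixed, and modifying only the projections of the non-base points onto $h$. By a positive-determinant affine change of coordinates (which preserves the order type), I first normalize so that $h = \aff(p_{1,1},\ldots,p_{d,1})$ equals $\{x_d = 0\}$. Each non-base point is then written $(q_{i,j}, a_{i,j})$ with $q_{i,j}\in\R^{d-1}$; after an order-type-preserving generic perturbation if necessary, $a_{i,j}\neq 0$.

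The key structural observation, obtained by expanding the defining $(d+1)\times(d+1)$ determinant along the row of $x_d$-coordinates, is the following: (i) the orientation of any $(d+1)$-tuple containing at most one non-base point $p$ equals, up to a permutation sign, the sign of $a_p\cdot\det[\text{projection differences of the $d$ base companions}]$, and is hence \emph{independent} of the projection of $p$; (ii) for a tuple with $k\geq 2$ non-base points, the orientation is multi-linear in the projections of those non-base points (linear in each individually). The set $R$ of new projection tuples $(q'_{i,j})$ that reproduce all the original orientations is therefore an open semi-algebraic set in $(\R^{d-1})^N$ (with $N=\sum_i(s_i-1)$) containing the original projection tuple.

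It then suffices to show that $R$ intersects the set $S$ of projection tuples for which the projected family is well-separated in $h$. Both $R$ and $S$ are open, and $R$ is non-empty by construction. I would argue $R\cap S\neq\emptyset$ by exploiting the well-separation of the original family in $\R^d$: for each partition $I\subseteq[d]$, the separating hyperplane $H_I$ of $\bigcup_{i\in I}P_i$ from $\bigcup_{i\notin I}P_i$, together with the fixed heights, prescribes a direction in $h$ along which one can displace the non-base projections to improve the separation of the projected family while staying inside the region defined by the sign constraints of $R$.

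The main obstacle is showing that the displacements prescribed for all $2^d-2$ partitions can be combined into a single displacement lying in $R$. I expect this combinatorial compatibility to be the technical heart of the proof, and to rely on the characterization of well-separation via non-stabbing flats in \Cref{lem:stabbingFlats}: any hypothetical obstruction (a partition whose required displacement necessarily flips some orientation constraint) would translate, via a duality/separation argument in the space of projections, into a stabbing $(d-2)$-flat for the original family, contradicting its well-separation.
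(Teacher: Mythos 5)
Your framing is sound as far as it goes, and your observations (i) and (ii) about the orientation determinant are correct (though the dependence on each projection $q_{i,j}$ is affine, not linear). However, the proposal has a genuine gap at exactly the point you flag: you never prove that $R\cap S\neq\emptyset$. Writing ``I expect this combinatorial compatibility to be the technical heart of the proof'' is an accurate diagnosis, but it is not an argument. Reconciling the $2^d-2$ well-separation constraints with the exponentially many sign constraints defining $R$ across a $(d-1)N$-dimensional parameter space is precisely the hard part, and your proposed route via \Cref{lem:stabbingFlats} is only a hope, not a construction.

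The paper's proof avoids this difficulty with a key structural simplification that your proposal misses: instead of searching for an arbitrary displacement of the projections, it restricts to \emph{shears}, i.e.\ displacements of the form $q'_{i,j}=q_{i,j}-a_{i,j}w$ for a single direction $w$ tangent to $h$. A shear is an affine transformation with unit-determinant linear part, so it automatically preserves every orientation --- the set $R$ is satisfied for free, no sign constraints need to be tracked at all. This reduces the problem from a $(d-1)N$-dimensional search to a $(d-1)$-dimensional one: find a single good direction $w$. The paper then interprets the shear as a parallel projection from a point $p$ at infinity (\Cref{lem:additionalPoint} shows such a projection preserves well-separation provided $\{p\}$ can be added as a $(d+1)$-st well-separated color), and proves the required $w$ exists via a Helly-type argument: the positive halfspaces of all colorful tangent hyperplanes have a nonempty, unbounded common intersection, because the cone of transversal normal vectors is a proper cone (using \Cref{lem:noOppositeOrientationsOnSameHyperplane}) and hence cannot positively span $\R^d$. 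In short, your approach is more general but therefore much harder; the paper's restriction to shears is what makes the argument go through, and that idea is absent from your sketch.
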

To prove \Cref{thm:samecombinatorics}, we first need the following lemma:
\begin{lemma}\label{lem:additionalPoint}
    Let $P_1,\ldots,P_{d+1}\subset\R^d$, with $P_{d+1}=\{p\}$ be a family of well-separated point sets, and let $h$ be a hyperplane with $p\not\in h$. Let $P'_1,\ldots,P'_d$ be the point sets obtained by projecting all the points $q\in \bigcup_{i\in [d]}P_i$ onto $h$ by replacing $q$ with the point $q':=h\cap \overline{qp}$. Then $P'_1,\ldots,P'_d$ are also well-separated.
\end{lemma}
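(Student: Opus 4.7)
The plan is to argue by contradiction using the stabbing-flat characterization of well-separation from \Cref{lem:stabbingFlats}. Suppose $P'_1, \ldots, P'_d$ in $h \cong \Reals^{d-1}$ are not well-separated; then there exists a $(d-2)$-flat $f \subset h$ meeting every convex hull $\mathrm{conv}(P'_i)$. I will lift $f$ through $p$ to the affine hull $f' := \aff(\{p\} \cup f) \subset \Reals^d$, which is a genuine hyperplane of $\Reals^d$ because $p \notin h \supseteq f$; trivially $p \in f'$, so $f'$ stabs $\mathrm{conv}(P_{d+1}) = \{p\}$.

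The key step is to show that $f'$ also stabs $\mathrm{conv}(P_i)$ for every $i \in [d]$. Given a stabbing point $r'_i \in f \cap \mathrm{conv}(P'_i)$, the line $\ell_i := \aff(\{p, r'_i\})$ lies inside $f'$, so it suffices to show $\ell_i \cap \mathrm{conv}(P_i) \neq \emptyset$. Writing $r'_i = \sum_j \lambda_j \pi(q_j)$ with $q_j \in P_i$, $\lambda_j \geq 0$, $\sum_j \lambda_j = 1$, and expanding each central projection as $\pi(q_j) = p + t_j (q_j - p)$, where $t_j$ is the parameter at which $\{p + t(q_j - p) : t \in \Reals\}$ meets $h$, yields
\[
r'_i - p = \sum_j \lambda_j t_j (q_j - p).
\]
Setting $M := \sum_j \lambda_j t_j$ (nonzero because $r'_i \neq p$) and $r_i := \sum_j (\lambda_j t_j / M) q_j$ gives an affine combination of the $q_j$'s lying on $\ell_i$; whenever the weights $\lambda_j t_j / M$ are all non-negative, this exhibits $r_i \in \mathrm{conv}(P_i) \cap f'$.

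Having shown $f' \cap \mathrm{conv}(P_i) \neq \emptyset$ for every $i \in [d+1]$, applying \Cref{lem:stabbingFlats} to the original family $P_1, \ldots, P_{d+1}$ yields the desired contradiction. The most delicate step I anticipate is the mixed-sign case, i.e.\ when the scalars $\lambda_j t_j$ do not share a common sign (geometrically, when points of $P_i$ lie on both sides of the hyperplane through $p$ parallel to $h$): there the renormalization above produces only an affine, rather than convex, combination. I plan to handle this by selecting the witness $r'_i$ more carefully inside the convex set $f \cap \mathrm{conv}(P'_i)$ and by exploiting that reflecting any $q \in P_i$ through $p$ preserves $\pi(q)$, so that the lift can be performed using a ``signed'' version of $P_i$ whose existence as a point of $\mathrm{conv}(P_i)$ is still ensured by the well-separation of the augmented family $P_1, \ldots, P_{d+1}$.
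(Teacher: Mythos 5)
Your approach coincides with the paper's: assume $P'_1,\ldots,P'_d$ fail to be well-separated, take a stabbing $(d-2)$-flat $f\subset h$ via \Cref{lem:stabbingFlats}, lift it to the hyperplane $f':=\aff(\{p\}\cup f)$, and show that $f'$ stabs all of $\mathrm{conv}(P_1),\ldots,\mathrm{conv}(P_{d+1})$, contradicting \Cref{lem:stabbingFlats} for the original family (with $k=d+1$, so the forbidden flats are exactly hyperplanes). The paper merely asserts that $f'$ ``must intersect all convex hulls''; you supply the actual computation, and in doing so you correctly notice that the renormalized coefficients $\lambda_j t_j/M$ form a \emph{convex} combination only when the $t_j$ share a common sign.

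What you miss is that the positivity of the $t_j$ is automatic. For $q'=h\cap\overline{qp}$ to be a point at all, the segment $\overline{qp}$ must cross $h$, which means $h$ separates $p$ from every $q\in\bigcup_{i\in[d]}P_i$; equivalently $t_j\in(0,1]$ for every $j$. (This is also exactly the regime of the paper's sole application of the lemma, where $p$ is a point at infinity on the constructed ray and the projection is a parallel projection.) With $\lambda_j\ge 0$, $\sum_j\lambda_j=1$ and all $t_j>0$, one gets $M=\sum_j\lambda_j t_j>0$, so the weights $\lambda_j t_j/M$ are nonnegative and $r_i\in f'\cap\mathrm{conv}(P_i)$. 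Note that your phrase ``$M\ne 0$ because $r'_i\ne p$'' is not a valid inference on its own (a signed combination can be nonzero while the coefficients sum to zero); it is the positivity of the $t_j$ that forces $M>0$. Your planned ``signed $P_i$'' repair for the mixed-sign case cannot be made to work, because if one genuinely allows $h$ not to separate $p$ from the rest, the lemma fails: in $\R^2$ take $p=(0,0)$, $P_1=\{(1,1),(1,-1)\}$, $P_2=\{(1/2,2)\}$ and $h=\{y=1\}$; then $P_1,P_2,\{p\}$ are well-separated, but the central projections onto $h$ are $P'_1=\{1,-1\}$, $P'_2=\{1/4\}$, which are not. So the mixed-sign configuration is a genuine counterexample to the unconstrained statement, not an obstacle that can be argued around.
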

\begin{proof}
    Towards a contradiction, assume $P'_1,\ldots,P'_d$ are not well-separated. Then, by \Cref{lem:stabbingFlats}, there exists a $(d-2)$-flat $f$ which intersects with all of their convex hulls. Let $f'$ be the affine hull of $f$ and $p$. Clearly, $f'$ is a $(d-1)$-flat, and must intersect all convex hulls of $P_1,\ldots,P_d,P_{d+1}$. Thus, these sets would not have been well-separated to begin with, a contradiction.
\end{proof}

\begin{proof}[Proof of \Cref{thm:samecombinatorics}]
    We wish to prove existence of a ray $r$, such that for every $p\in r$, our point sets remain well-separated with $P_{d+1}=\{p\}$ added as one more color. To do this, we look at the set $H$ of up to $2^d$ colorful hyperplanes that are $(\alpha_1,\ldots,\alpha_d)$-cuts for $\alpha_i\in\{1,|P_i|\}$. We furthermore look at the positive halfspaces (direction induced by the orientation of the points) of these hyperplanes. If a point $p$ lies in the intersection of all of these halfspaces, for any $I\subseteq [d]$ we can separate $\{p\}\cup\bigcup_{i\in I}P_i$ from $\bigcup_{i\not\in I}P_i$ by simply taking the correct hyperplane in $H$. Thus, to prove existence of the ray $r$, we simply need to show that in the hyperplane arrangement $H$, the cell which corresponds to the positive side of all hyperplanes is non-empty and unbounded. Then we can simply embed the ray $r$ in this cell.

    We first consider a larger set of hyperplanes, namely the set $T$ of all \emph{transversals}. A transversal is a hyperplane which has a non-empty intersection with $conv(P_i)$ for all $i\in [d]$. We obtain a halfspace from a transversal $t$ by orienting it based on some arbitrary choice of one point in each intersection $t\cap conv(P_i)$. By \Cref{lem:stabbingFlats} switching each color with its convex hull cannot destroy well-separation, thus \Cref{lem:noOppositeOrientationsOnSameHyperplane} applies to the family $conv(P_1),\ldots,conv(P_d)$. Using this we show that no two of these halfspaces have exactly opposite normal vectors: By the lemma, no transversal can be oriented in both directions, thus if there were two such halfspaces, their bounding transversals would have to be distinct yet parallel. We can move from one of these transversals to the other continuously in a parallel fashion. During this process, the hyperplane remains a transversal by convexity. At some point, the orientation of this transversal must switch, which is however impossible since no transversal can have both orientations by \Cref{lem:noOppositeOrientationsOnSameHyperplane} (it is also impossible for the orientation to switch because the colorful points on one transversal do not span the hyperplane, since that would violate well-separation by \Cref{lem:stabbingFlats}).
    
    Furthermore, this set of normal vectors is closed under positive combinations\footnote{Let $t_1,t_2$ be two transversals, let $v$ be a positive combination of their normal vectors. For each $i\in[2],j\in[d]$, pick some point $q_{i,j}\in t_i\cap conv(P_j)$. Then, for some choice of $q_j'\in conv(q_{1,j},q_{2,j})$ for all $j\in [d]$, the transversal going through all the $q_j'$ has the desired normal vector $v$.}. Thus, the set of normal vectors of the positive halfspaces of $T$ forms a cone which is not equal to $\R^d$.

    We turn our attention back to $H$. Note that the positive halfspaces considered in $H$ are a subset of the positive halfspaces considered in $T$. 
    We first show that the desired cell cannot be bounded but non-empty: If it was a bounded cell, there would be some subset of halfspaces $H'$ which bound a polytopal cell, with all normal vectors pointing into the cell. The set of normal vectors of the facets of a polytope span all of $\R^d$ with a positive combination, which contradicts that the set of normal vectors in $T$ is not all of $\R^d$. Thus the cell cannot be bounded.

    Next, we show that the cell cannot be empty. To prove this, we show that any $d+1$ halfspaces of $H$ have a non-empty intersection. Then, by Helly's theorem, all of them have a non-empty intersection. The only way for $d+1$ non-parallel halfspaces to have an empty intersection is to have the hyperplanes bound some simplex, with all normal vectors pointing outwards. Again, we have that these normal vectors would then positively span all of $\R^d$, which is again a contradiction.

    We thus conclude that the ray $r$ exists. We now pick the point $p$ to be the point at infinity on $r$, and we pick the hyperplane $h$ through our points $p_{1,1},\ldots,p_{d,1}$. By \Cref{lem:additionalPoint}, we can project all our points onto $h$ in a parallel (but not yet orthogonal) way and remain well-separated. There now exists an affine transformation we can apply to our input point set which does not move the points $p_{1,1},\ldots,p_{d,1}$, but which transforms the direction of $r$ to be orthogonal to the hyperplane through these points. Then, this projection becomes orthogonal, proving strong well-separation of the transformed point sets. Note that the affine transformation does not change the order type of the point sets.
\end{proof}

Note that this is merely an existence result, finding the ray $r$ for the required affine transformation or directly finding a strongly well-separated point set family with the same order type is most likely computationally hard.

\section{\GridUSO and \CubeUSO}\label{sec:gridtocube}
We saw in the two previous sections that both \pglcp and \swsHS can be reduced to their respective ``binary'' variants, \plcp and \swsTwoHS. In this section we discuss grid USOs and cube USOs, which are a combinatorial framework that can be used to model all the problems studied in the sections above as well as many more algebraic and geometric problems. Similarly to the previous sections, cube USOs are a  restriction of grid USOs to the ``binary'' case.

\subsection{Definitions}\label{sec:USOdefs}

We define $C$ as a \dimension-dimensional hypercube. 
We say $V(C) := \{0,1\}^\dimension$ and $ \{J, K\} \in  E(C)$ iff $\abs{J \xor K} = 1$, where \xor is the bit-wise xor operation (i.e., addition in $\mathbb{Z}_2^d$) and $|\cdot|$ counts the number of $1$-entries. For notational simplicity, in this section we use the same name both for a bitvector in $\{0,1\}^\dimension$ and for the set of dimensions $i\in [d]$ in which the vector is $1$. For example for $J,K\in\{0,1\}^\dimension$ we write $J \subseteq K$ if for all $i \in [\dimension]$ with $J_i = 1$ we also have $K_i = 1$.

The orientation of the edges of the cube $C$ is given by an \emph{orientation} function  $\orientationCube: V(C) \rightarrow \{0,1 \}^\dimensionA$, where $\orientationCube(J)_i=0$ means $J$ has an incoming edge in dimension $i$ and $\orientationCube(J)_i=1$ is an outgoing edge in dimension $i$.

\begin{definition}
\label{def:USO}
An orientation is a \emph{unique sink orientation (USO)} if and only if every induced subcube has a unique sink.
\end{definition}

The common search problem version of this problem is to find the global sink of the cube, given the function \orientationCube as a boolean circuit. Note that it is \Comp{co-NP}-complete to test whether a given orientation is USO \cite{gaertner2015recognizing}.
We consider the promise version \CubeUSO, which was one of the first search problems proven to lie in the complexity class \Comp{Promise-UEOPL} \cite{fearnley2020ueopl}.

\begin{definition} \label{def:CubeUSOSearchProblem}
\CubeUSO
\begin{description}[labelindent=\deftab]
\item[Input:] A circuit computing the orientation function \orientationCube on a \dimension-dimensional cube $C$.
\item[Promise:] \orientationCube is a Unique Sink Orientation.
\item[Output:]  A vertex \f{J \in V(C)} which is a sink, i.e., \f{\forall i \in [\dimensionA]: \orientationCube(J)_i = 0}.
\end{description}
\end{definition}

While the $d$-dimensional hypercube is the product of $d$ copies of $K_2$ (the complete graph on two vertices), a \emph{grid graph} is the product of complete graphs of arbitrary size:
A \dimensionA-dimensional grid  graph \Grid is given by $\partitionLength_1, \dots, \partitionLength_\dimensionA \in \Naturals^+$:
\begin{align*}
V(\Grid) &:= \{0, \dots, \partitionLength_1\}\times \dots \times \{0, \dots, \partitionLength_\dimensionA\},\\
E(\Grid) &:= \{\{v, w\} \mid v, w \in V(\Grid), \exists! i \in [\dimensionA]: v_i \neq w_i \}.
\end{align*}
We say the grid has $\dimensionA$ \emph{dimensions} and each dimension $i$ has $n_i+1$ \emph{directions}.

The subgraph $\Grid'$ of \Grid induced by the vertices $V(\Grid') = N_1 \times \dots \times N_\dimension$ for non-empty $N_i \subseteq \{0, \dots, \partitionLength_i\}$ 
is called an \emph{induced subgrid} of $\Grid$.
Note that if for some $i$ we have $\abs{N_i}=1$, the induced subgrid loses a dimension.
If $\abs{N_i}=1$ for all $i\in [d]$ except one, we say that the induced subgrid $\Grid'$ is a \emph{simplex}. A simplex is a complete graph $K_{n_j+1}$ for some $j\in [d]$.

The orientation of a grid is given by the \emph{outmap function}, which assigns each vertex a binary vector that encodes whether its incident edges are incoming or outgoing. More formally, the outmap function is a function $\sigma:V(\Grid)\rightarrow \{0,1\}^{n_1+\ldots+n_d}$, where $\sigma(v)_{n_1+\ldots+n_i+j}=1$ denotes that the edge from $v$ to its $j$-th neighbor $w$ in dimension $i+1$ is outgoing, i.e., oriented from $v$ to $w$. Note that any circuit computing $\sigma$ has $n_1+\ldots+n_d$ outputs, and is thus of size $\Omega(n_1+\ldots+n_d)$.

For notational convenience, we denote the entry of $\sigma(v)$ relevant to the edge $\{v,w\}$ by $\sigma(v,w)$. In other words, for each pair of vertices $v,w$ such that $\{v,w\}\in E(\Grid)$, $\sigma(v,w)=1$ iff the edge between $v$ and $w$ is oriented towards $w$.

\begin{definition}{(Gärtner, Morris, Rüst \cite{gaertner2008grids})}
\label{def:GridUSO}
An orientation of a grid graph is a \emph{unique sink orientation (USO)} if and only if every induced subgrid has a unique sink.
\end{definition}

A unique sink orientation of a simplex with \dimensionB vertices is equivalent to a permutation of \dimensionB elements, i.e., every unique sink orientation of one simplex can be given by a total order of its vertices.
The minimum element of the order is the source, the maximum element is the sink.

Since every cube is also a grid, the problem of checking whether \orientationGrid is USO is once again \Comp{co-NP}-hard.
So again, we consider the promise search version of this problem:

\begin{definition} \label{def:GridUSOSearchProblem}
\GridUSO
\begin{description}[labelindent=\deftab]
\item[Input:]  A \dimensionA-dimensional grid \f{\Grid = (\partitionLength_1, \dots, \partitionLength_\dimensionA)} and a circuit computing its outmap \f{\orientationGrid}.
\item[Promise:] \orientationGrid is a Unique Sink Orientation.
\item[Output:]  A sink, i.e., a vertex \f{v \in  V(\Grid)} s.t. \f{\forall w \in V(\Grid)} with $\{v, w\} \in E(\Grid) : \orientationGrid(v, w) = 0$.
\end{description}
\end{definition}

Just like \CubeUSO, \GridUSO also lies in the search problem complexity class \Comp{Promise-UEOPL}~\cite{borzechowski2021ueopl}.

\subsection{Colorful Tangents, \plcp, and USO}
It is well known that \plcp reduces to \CubeUSO~\cite{stickney1978digraph} and \pglcp reduces to \GridUSO~\cite{gaertner2008grids}.
Since \swsTwoHS and \swsHS can both be reduced to \plcp and \pglcp respectively, they also reduce to \CubeUSO and \GridUSO respectively. 
However, we show a direct and straightforward reduction from \swsTwoHS to \CubeUSO, and from \swsHS to \GridUSO. These direct reductions do not require strong well-separation, only classical well-separation.

\begin{lemma}\label{lem:swsTwoHs2cubeUSO}
    Assuming general position of the input points, finding an $\alpha$-cut in a well-separated point set family $P_1,\ldots,P_d\subset \R^d$ for $\alpha_i\in\{1,|P_i|\}$ can be reduced to \GridUSO in polynomial time. The reduction goes to \CubeUSO if for all $i$, $|P_i|=2$.
\end{lemma}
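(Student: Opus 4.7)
The plan is to build a grid in which each vertex corresponds to a colorful choice of one point per color, and orient its edges based on which side of the resulting colorful hyperplane the alternative points lie.

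Concretely, I would define a $d$-dimensional grid with directions $n_i := |P_i|-1$ for each color $i$, so that a vertex $v = (v_1, \dots, v_d)$ encodes the colorful set $Q_v := \{p_{1,v_1}, \dots, p_{d,v_d}\}$. By general position and \Cref{lem:stabbingFlats}, $Q_v$ spans a unique hyperplane $h_v$, whose positive side is fixed by the ordering of its points by color. When every $|P_i|=2$ we have $n_i=1$, so the grid is exactly the $d$-cube, giving the second part of the lemma for free. For an edge $\{v,w\}$ in dimension $j$ (so $w_j \ne v_j$, $w_i = v_i$ for $i\ne j$), I would orient $v \to w$ iff the point $p_{j,w_j}$ lies on the \emph{wrong} side of $h_v$, where ``wrong'' means strictly below $h_v$ if $\alpha_j = |P_j|$, and strictly above $h_v$ if $\alpha_j = 1$. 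General position ensures that every such swap point is off $h_v$, so the orientation is well-defined, and the whole orientation function is clearly computable in polynomial time from the explicit input points.

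Next I would verify that the global sink corresponds precisely to an $\alpha$-cut. A vertex $v$ is a sink iff every edge $\{v, w^{(j,k)}\}$ (swap $v_j$ for $k\ne v_j$ in dimension $j$) is incoming, i.e. iff for every $j$ and every $k \ne v_j$, the point $p_{j,k}$ lies on the correct side of $h_v$. Combined with $p_{j,v_j} \in h_v$, this is exactly the condition in \Cref{def:alphaCutNew} for $h_v$ to be an $\alpha$-cut (with $\alpha_j \in \{1,|P_j|\}$). Conversely, any $\alpha$-cut hyperplane passes through a colorful selection (well-separation plus \Cref{lem:noOppositeOrientationsOnSameHyperplane}), and that selection is then a sink of the grid.

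Finally I would establish the USO property by applying the same argument to every induced subgrid. A subgrid is defined by index sets $N_i \subseteq \{0, \dots, n_i\}$, and it corresponds to the sub-family $P_i' := \{p_{i,k} : k \in N_i\}$. Well-separation is preserved under taking subsets (any hyperplane separating $\bigcup_{i \in I} P_i$ from $\bigcup_{i \notin I} P_i$ also separates the corresponding subsets), and general position is likewise inherited. Crucially, the orientation of any edge in the subgrid depends only on $h_v$ and a single swap point, all of which live in the subfamily, so the induced orientation on the subgrid is exactly the one produced by my construction applied to $P_1', \dots, P_d'$ with the compatible vector $\alpha_i' \in \{1, |P_i'|\}$ inherited from $\alpha$. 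By \Cref{thm:alphaHam} (or \Cref{thm:aHSwithTieBreaking} if one wants to drop weak general position) there is a unique $\alpha'$-cut in the subfamily, which by the previous paragraph is the unique sink of the subgrid. The main subtlety is exactly this locality and consistency of the orientation under restriction to subgrids; once that is observed, uniqueness of sinks in every subgrid is an immediate consequence of the $\alpha$-Ham Sandwich theorem.
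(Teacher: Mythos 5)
Your proposal matches the paper's proof: same grid construction (one vertex per colorful choice, $n_i = |P_i|-1$, collapsing to the cube when $|P_i|=2$), same orientation rule (an edge in dimension $j$ is outgoing iff the swap point lies on the $\alpha_j$-inconsistent side of the vertex's colorful hyperplane), and the same USO verification via the observation that every induced subgrid is exactly the construction applied to a sub-family, which is still well-separated and so inherits a unique $\alpha'$-cut by the $\alpha$-Ham Sandwich theorem. Your presentation is marginally tidier in unifying the cube and grid cases and in making the inherited $\alpha'$ explicit, but it is not a different argument.
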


We first prove the simpler case of \Cref{lem:swsTwoHs2cubeUSO}, when $|P_i|=2$ for all $i$.

\begin{proof}[Proof of \Cref{{lem:swsTwoHs2cubeUSO}} for $|P_i|=2$]
    We want to construct an orientation function $\orientationCube$ on a $d$-dimensional hypercube that is USO such that we can derive the $\alpha$-cut of the point sets from its sink.

    Each vertex $v\in V(C)$ corresponds to a colorful choice of points. If $v_i=0$, then of color $i$ we choose $p_{i, 1}$, if $v_i=1$, we choose $p_{i, 2}$.
    Thus, each vertex encodes exactly one colorful hyperplane $h(v)$. 

    For each dimension $i\in [d]$, we define 
    \begin{align*}
        \orientationCube(v)_i := \begin{cases}
            0 & \text{if } \abs{h(v)^+ \cap P_i}=\alpha_i \\
            1 & otherwise.
        \end{cases}
    \end{align*}

    This construction can be done in polynomial time.
    Any sink of \orientationCube corresponds trivially to an $\alpha$-cut of the $P_i$.

    What is left to do is to prove that \orientationCube is USO. We first show that \orientationCube has a unique sink on the whole cube. This simply follows from \Cref{thm:alphaHam}, which says that there is a unique $\alpha$-cut.
    A subcube corresponds to looking at only a subset of the points. Since a family of points remains well-separated when removing points, the uniqueness of each possible $\alpha'$ cut also holds for every subcube.
    Thus, \orientationCube has a unique sink on every subcube, and is thus a USO.
\end{proof}

Now, we prove \Cref{lem:swsTwoHs2cubeUSO} for general $|P_i|$. For the case of two colors (thus resulting in a two-dimensional grid USO) this was already done by Felsner, Gärtner, and Tschirschnitz~\cite{felsner2005gridorientations}, where the setting was called ``one line and $n$ points''.

\begin{proof}[Proof of \Cref{lem:swsTwoHs2cubeUSO} for general $|P_i|$]
     We want to construct an orientation function $\orientationGrid$ on a $d$-dimensional grid $\Grid = (\abs{P_1}, \dots, \abs{P_d})$ with $V(\Grid) = [0, \dots,  \abs{P_1}] \times \dots \times [0, \dots,  \abs{P_d}]$, such that \orientationGrid is USO and the $\alpha$-cut of the point sets can be derived from its sink.
    Like in the proof of \Cref{lem:swsTwoHs2cubeUSO}, every vertex $v \in V(\Grid)$ corresponds to the colorful point set $\{p_{i, v_i+1}\;|\;i\in [d]\}$.

    Let $(v, w) \in E(\Grid)$ an edge of dimension $i$, i.e., $w\setminus v = \{p\} \subseteq P_i$. The orientation \orientationGrid is defined as:
    \begin{align*}
        \orientationGrid(v, w) := \begin{cases}
            0 & \text{if } \alpha_i = 1 \text{ and } p\not\in h(v)^+ \\
            0 & \text{if } \alpha_i = \abs{P_i} \text{ and } p\in h(v)^+\\
            1 & otherwise.
        \end{cases}
    \end{align*}

    This construction can be done in polynomial time.
    Any sink of \orientationGrid is trivially a colorful hyperplane that is the $\alpha$-cut, and vice versa. By \Cref{thm:alphaHam}, there is a unique sink in the whole grid. If we now remove any single point $p$ from some $P_i$, the resulting point set family is still well-separated. Furthermore, if we translate this family into a grid orientation as above, we get exactly the subgrid of $\sigma$ obtained by removing the direction corresponding to $p$. Since \Cref{thm:alphaHam} still applies to this subgrid, we get that this subgrid has a unique sink too. Since this argument can be repeated, all subgrids have a unique sink, and thus $\sigma$ describes a USO.
\end{proof}

The general position assumption is not really necessary in these reductions. If there are colorful hyperplanes containing more than $d$ points we can simply apply \Cref{thm:aHSwithTieBreaking} instead of \Cref{thm:alphaHam}, and break ties arbitrarily by always orienting the edges between the involved vertices downwards. This approach has also been used for degenerate \plcp{s}~\cite{borzechowski2023degeneracy}, and we will thus not describe it in detail.

\begin{remark}\label{rem:alphaHam2alphaGrid}
If we instead want to find an $\alpha'$-cut for arbitrary $\alpha'$, we can use the reduction from \Cref{lem:swsTwoHs2cubeUSO} with $\alpha=(1,\ldots,1)$, but instead of searching for a sink in the resulting grid USO $\Grid$, we need to find a vertex with $\alpha'_i -1$ outgoing edges in each dimension $i$.
    
    By \cite[Theorem 2.14]{gaertner2008grids}, this vertex is guaranteed to exist and to be unique. We call the problem of searching for such a vertex $\alpha$-\GridUSO. Note that $\alpha$-\GridUSO is not known to reduce to regular \GridUSO, nor is it known to lie in \Comp{Promise-UEOPL} (compared to $\alpha$-Ham Sandwich which does lie in \Comp{Promise-UEOPL}~\cite{chiu_et_al-ComplexityAlphaHamSandwich}).
\end{remark}

\subsection{\GridUSO to \CubeUSO}\label{sec:grid2cube}

In this section we show that every \GridUSO instance (\Grid, \orientationGrid) can be reduced to a \CubeUSO instance (C,\orientationCube) in polynomial time such that
given the global sink of \orientationCube, we can derive the global sink of \orientationGrid in polynomial time.

We first show how the reduction works for a one-dimensional grid, or in other words, a single simplex.

\subsubsection{One-Dimensional Warm-Up}
\label{sec:Simplex2Cube}

In this warm-up example, our given grid $\Grid$ is a single simplex $\simplex$, i.e., its vertex set is given by just one integer $n:=n_1$.
This simplex has $\partitionLength+1$ vertices $\{\vertex{0}, \dots, \vertex{\partitionLength}\}$.
We place $\simplex$ in a $\partitionLength$-dimensional hypercube $C_\simplex$ with $V(C_\simplex) = \{0,1\}^{\partitionLength}$.
Note that the simplex is an \partitionLength-regular graph, and so is the \partitionLength-dimensional hypercube.

\myparagraph{Embedding the vertices.}
Since the simplex has $\partitionLength+1$ vertices, and the hypercube has $2^\partitionLength$ vertices, we have a lot more vertices to play with in the cube. We embed the vertices of the grid in a star shape around the vertex $0^\dimensionB$: We place the vertex \vertex{0} at position $0^\dimensionB$, and each vertex $\vertex{v}$ for $v\in \{1, \dots, \partitionLength\}$ is placed at vertex $I_v$.
See \cref{fig:OrientingVertices} for an example.
We call these vertices of the cube (the unit vectors $I_v$ and $0^\dimensionB$) the \emph{grid-vertices}.
All other vertices are called \emph{non-grid-vertices}.
We assign each vertex~$J$ of the cube a \emph{color} $j\in \{0,\ldots, n\}$, where $j := max(\{0\}\cup \{h\in [\dimensionB] \;|\; J_h=1\})$. This naturally associates $J$ with a vertex $(j)$ in $\simplex$. Note that if $J$ itself is a grid-vertex, its associated vertex $(j)$ is the one that was placed into $J$. Further note that the assignment of colors is independent from the orientation of the simplex. \Cref{fig:OrientingVertices} shows the way colors are assigned to cube vertices.

\myparagraph{Orienting grid-vertices.}
The paths of length one between $0^n$ and $I_v$ encode the orientation of $(0)$ in the grid. Similarly, the paths of length two between grid-vertices $I_v$ and $I_w$ that avoid the vertex $0^n$ encode the orientation of the grid; if the edge between $(v)$ and $(w)$ is oriented towards $(w)$, we have a directed path from $I_v$ to $I_v\xor I_w$ to $I_w$. Note that by doing this, we have already completely oriented the grid-vertices. 
Formally, for $v,h \in [1, \dots, \dimensionB]$, we let
\begin{align}
\label{orientation:GridVertices}
\orientationCube(I_v)_h := 
\begin{cases}
\orientationGrid(\vertex{v}, \vertex{h}) & \text{if } h \neq v, \\
\orientationGrid(\vertex{v}, \vertex{0}) &  \text{otherwise,}
\end{cases}
\quad \quad
\orientationCube(0^\dimensionB)_h :=\orientationGrid(\vertex{0}, \vertex{h}).
\end{align}

See \cref{fig:OrientingVertices} for an example. We can see that if any grid-vertex in the cube is a sink, its corresponding vertex in the simplex is a sink too.

\myparagraph{Orienting non-grid-vertices.}
The only thing that is left to do is to extend this orientation to a complete orientation of the cube that is a valid USO.
We orient the rest of the cube according to the colors of the non-grid-vertices. Firstly, edges between vertices of two different colors are oriented as the corresponding edge in the grid; if in the grid the edge between $(j)$ and $(k)$ is oriented towards $(k)$, any edge in the cube between vertices $J$ and $K$ of colors $j$ and $k$ is oriented towards $K$.

Secondly, we wish to orient edges between vertices of the same color. We first observe that the set of vertices with the same color $j\in [n]$ actually forms a $(j-1)$-dimensional subcube. We now orient the edges within this subcube such that this subcube forms a \emph{uniform} orientation. This means that within this subcube, for every dimension $i$, all the edges of dimension $i$ are oriented the same way. Since the unique grid-vertex of every color is already completely oriented, this uniquely defines the orientation of the whole cube.

To formalize this, we introduce a \emph{secondary color} for every vertex: For vertex $J \in V(C_\simplex)$, the secondary color $j^*$ of $J$ is defined as
 $j^* := max(\{0\}\cup \{h\in [n]\setminus \{j\} \;|\; J_h=1\})$.
For all vertices $J$ except $0^\dimensionA$, $j^*$ describes the color of the vertex neighboring $J$ in dimension $j$.
Using this additional notation, we can now define our complete orientation.
For each dimension $h \in [1, \dots, \dimensionB]$ and each vertex $J\in V(C_\simplex)$, we orient $J$ as follows:
\begin{align}
\label{orientation:nonGridVerticesVersion1}
\orientationCube(J)_h :=\begin{cases}
\orientationGrid(\vertex{j}, \vertex{h}) \xor J_h & \text{if } h<j \\ 
\orientationGrid(\vertex{j}, \vertex{h})  & \text{if } h>j\\ 
\orientationGrid(\vertex{j}, \vertex{j^*}) & \text{if } h = j\\
\end{cases}
\end{align}

The first case describes an edge towards a vertex of the same color, the second and third cases describe edges to differently colored vertices.
See \Cref{fig:OrientingVertices} for an example.
Note that \Cref{orientation:nonGridVerticesVersion1} is consistent with \Cref{orientation:GridVertices}, and we thus do not need to distinguish between grid-vertices and non-grid-vertices.

\begin{figure}[h!]
    \centering
    \begin{tikzpicture}[scale=0.8, roundnode/.style={circle, draw=black, thick, minimum size=4mm}]


\node[roundnode, fill=yellow] (0) at (-7, 2) {$\vertex{0}$};
\node[roundnode, fill=green!40] (1) at (-7, 0.5) {$\vertex{1}$};
\node[roundnode, fill=red!60] (2) at (-7, -1) {$\vertex{2}$};
\node[roundnode, fill=blue!30] (3) at (-7, -2.5) {$\vertex{3}$};
\node[roundnode, fill=orange!60] (4) at (-7, -4) {$\vertex{4}$};

\begin{scope}[very thick,decoration={markings,mark=at position 0.75 with {\arrow{>}}}] 
\draw [postaction={decorate}] (1) -- (0);
\draw [postaction={decorate}] (1) -- (2);
\draw [postaction={decorate}] (3) -- (2);
\draw [postaction={decorate}, blue] (3) -- (4);
\end{scope}
\begin{scope}[very thick,decoration={markings,mark=at position 0.5 with {\arrow{>}}}] 
\path[-] (0) edge[bend right=40, postaction={decorate}] (2);
\path[-] (1) edge[bend right=40, postaction={decorate}] (3);

\path[-] (3) edge[bend left=50, postaction={decorate}] (0);
\path[-] (1) edge[bend right=50, postaction={decorate}] (4);

\path[-] (0) edge[bend right=60, postaction={decorate}] (4);
\path[-] (4) edge[bend left=40, postaction={decorate}, red] (2);

\end{scope}

\node[roundnode, fill=orange!60] (0001) at (0, -3) {$I_4$};
\node[roundnode, fill=orange!60] (0011) at (1, -2) {};
\node[roundnode, fill=orange!60] (0101) at (0, -1) {};
\node[roundnode, fill=orange!60] (0111) at (1, 0) {};
\node[roundnode, fill=orange!60] (1001) at (2, -3){};
\node[roundnode, fill=orange!60] (1011) at (3, -2) {};
\node[roundnode, fill=orange!60] (1101) at (2, -1) {};
\node[roundnode, fill=orange!60] (1111) at (3, 0) {};

\node[roundnode, fill=yellow] (0000) at (-4, -4.5) {$0^{n_i}$};
\node[roundnode, fill=blue!30] (0010) at (-1, -1.5) {$I_3$};
\node[roundnode, fill=red!60] (0100) at (-4, 0.5) {$I_2$};
\node[roundnode, fill=blue!30] (0110) at (-1, 2.5) {};
\node[roundnode, fill=green!40] (1000) at (5, -4.5) {$I_1$};
\node[roundnode, fill=blue!30] (1010) at (7, -1.5) {};
\node[roundnode, fill=red!60] (1100) at (5, 0.5) {};
\node[roundnode, fill=blue!30] (1110) at (7, 2.5) {};


\begin{scope}[every edge/.style={draw=black,very thick}]

\path [<-] (1111) edge[blue] (1110);
\path[->] (0110) edge[blue] (0111);
\path[->] (1010) edge[blue] (1011);

\path [<-] (0000) edge (1000);
\path [->] (0000) edge (0100);
\path [<-] (0000) edge (0010);
\path [->] (0000) edge (0001);

\path[->] (1000) edge (1100);
\path[->] (1000) edge (1010);
\path[->] (1000) edge (1001);

\path[<-] (0100) edge (1100);
\path[<-] (0100) edge (0110);

\path[<-] (0010) edge (1010);
\path[->] (0010) edge (0110);
\path[->] (0010) edge[blue] (0011);

\path[<-] (0001) edge (1001);
\path[<-] (0100) edge[red] (0101);
\path[->] (0001) edge[red] (0101);
\path[<-] (0001) edge (0011);

\path[<-] (1100) edge (1110);
\path[<-] (1100) edge (1101);

\path[->] (1010) edge (1110);

\path[->] (1001) edge (1101);
\path[<-] (1001) edge (1011);

\path[<-] (0110) edge (1110);

\path[<-] (0101) edge (1101);
\path[<-] (0101) edge (0111);

\path[<-] (0011) edge (1011);
\path[->] (0011) edge (0111);

\path [->] (1111) edge (1101);
\path [<-] (1111) edge (1011);
\path [->] (1111) edge (0111);
\end{scope}
\end{tikzpicture}
    \caption{Example of placing grid-vertices (large) in the cube, coloring the non-grid-vertices (small) and orienting the edges. Each edge of the grid becomes a path of length two between two grid-vertices, see the red edges. Edges between vertices of different colors are oriented the same as in the grid, see the blue edges. Each subcube of the same color is oriented in a uniform way, see for example the orange subcube.}
    \label{fig:OrientingVertices}
\end{figure}
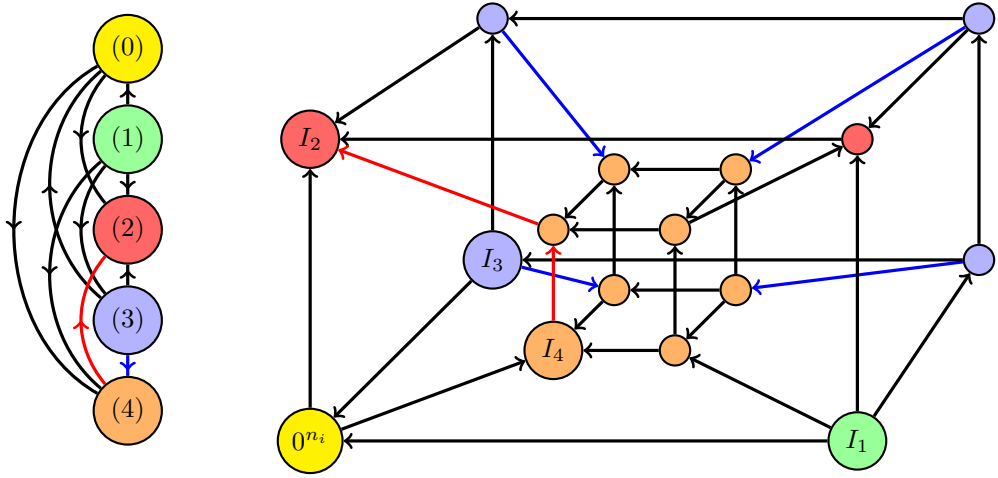

We refrain from proving correctness of the reduction at this point, since we will prove it in more generality later.

\subsubsection{General Reduction}
In full generality, we are given a \dimensionA-dimensional grid $\Grid = (\partitionLength_1, \dots, \partitionLength_\dimensionA)$ and wish to turn it into a hypercube of dimension $n := \sum\partitionLength_i$. Note that again \Grid and $C$ both have the same regularity: both are $n$-regular.

Intuitively, we apply the one-dimensional construction described in \cref{sec:Simplex2Cube} to every dimension of the grid at once.
For every dimension of the grid spanned by $n_i+1$ directions, we assign a block of $n_i$ dimensions in $C$. For simplicity, we index into the dimensions of the hypercube using double indices: for every bitstring $J \in \{0,1\}^{n}$, we write $J_{i, j}$ (where $j \in [\partitionLength_i]$) as a shorthand for~$J_{j+ \sum_{l\in [i-1]} \partitionLength_l}$. 

We extend our notion of colors from the one-dimensional warm-up: Each vertex $J\in V(C)$ is assigned a corresponding vertex in the grid $\Grid$ by simply defining a color $j_i$ for every grid dimension $i\in [d]$,
\begin{align}
j_i := max (\{0\} \cup \{ h \;\vert\;h \in [{\partitionLength_i}],  J_{i,h} = 1 \} ).\label{def:jPlus}
\end{align}

The tuple of colors $(j_1,\ldots,j_d)$ is the vertex in the grid that we associate with $J$. 
When we orient $J$, we orient each block $i$ of dimensions of $C$ independently. That means to compute $O(J)_{i,h}$, we simply compute $O(J_{i,\cdot})_h$ according to the one-dimensional warm-up, with the input simplex being the simplex in dimension $i$ in $\Grid$ that contains the vertex $(j_1,\ldots,j_d)$.

To formalize this construction, we again need to define a secondary color $j_i^*$ for each vertex $J\in V(C)$ and each grid dimension $i\in[d]$,
\begin{align}
    j_i^* := max ( \{0\} \cup \{ h \;|\; h \in [{\partitionLength_i}] \setminus \{j_i\},  J_{i,h} = 1 \} ). \label{def:jMinus}
\end{align}

The orientation $O$ of the cube $C$ is now defined as follows:
\begin{align}
\label{orientation:general}
\orientationCube(J)_{i,h} := \begin{cases}
\orientationGrid(\vertex{j_1, \dots,j_i, \dots, j_\dimensionA}, \vertex{j_1, \dots, h, \dots j_\dimensionA}) \xor J_{i,h} & \text{for } h < j_i,\\
\orientationGrid(\vertex{j_1, \dots,j_i, \dots, j_\dimensionA}, \vertex{j_1, \dots,  h, \dots j_\dimensionA}) & \text{for } h > j_i,\\
\orientationGrid(\vertex{j_1, \dots,j_i, \dots, j_\dimensionA}, \vertex{j_1, \dots, j_i^*, \dots j_\dimensionA}) & \text{for } h = j_i.
\end{cases}
\end{align}

\Cref{fig:grid2cubeExample} shows an example of the construction described by \Cref{def:jPlus,def:jMinus,orientation:general} where a $2$-dimensional $5\times 5$ grid is converted to an $8$-dimensional cube.

\newcommand{\CubeAsVertex}[2]{
\begin{tikzpicture}[scale=0.12, 
roundnode/.style={circle, draw=black, minimum size=5pt, inner sep=0pt},
starnode/.style={star, star points=4, star point ratio=0.4, draw=black, minimum size=3pt, inner sep=0pt},
rectnode/.style={rectangle, draw=black, minimum size=5pt, inner sep=0pt},
diamnode/.style={diamond, draw=black, minimum size=6pt, inner sep=0pt},
pentnode/.style={regular polygon, regular polygon sides=5, draw=black, minimum size=6pt, inner sep=0pt}
]

\node[#2, fill=orange!60] (0001#1) at (0, -3) {};
\node[#2, fill=orange!60] (0011#1) at (1, -2) {};
\node[#2, fill=orange!60] (0101#1) at (0, -1) {};
\node[#2, fill=orange!60] (0111#1) at (1, 0) {};
\node[#2, fill=orange!60] (1001#1) at (2, -3){};
\node[#2, fill=orange!60] (1011#1) at (3, -2) {};
\node[#2, fill=orange!60] (1101#1) at (2, -1) {};
\node[#2, fill=orange!60] (1111#1) at (3, 0) {};

\node[#2, fill=yellow, scale=2] (0000#1) at (-4, -4.5) {};
\node[#2, fill=blue!30] (0010#1) at (-1, -1.5) {};
\node[#2, fill=red!60] (0100#1) at (-4, 0.5) {};
\node[#2, fill=blue!30] (0110#1) at (-1, 2.5) {};
\node[#2, fill=green!40] (1000#1) at (5, -4.5) {};
\node[#2, fill=blue!30] (1010#1) at (7, -1.5) {};
\node[#2, fill=red!60] (1100#1) at (5, 0.5) {};
\node[#2, fill=blue!30] (1110#1) at (7, 2.5) {};


\begin{scope}[every edge/.style={draw=black,thick}]
\path [-] (0000#1) edge (1000#1);
\path [-] (0000#1) edge (0100#1);
\path [-] (0000#1) edge (0010#1);
\path [-] (0000#1) edge (0001#1);

\path[-] (1000#1) edge (1100#1);
\path[-] (1000#1) edge (1010#1);
\path[-] (1000#1) edge (1001#1);

\path[-] (0100#1) edge (1100#1);
\path[-] (0100#1) edge (0110#1);
\path[-] (0100#1) edge (0101#1);

\path[-] (0010#1) edge (1010#1);
\path[-] (0010#1) edge (0110#1);
\path[-] (0010#1) edge (0011#1);

\path[-] (0001#1) edge (1001#1);
\path[-] (0001#1) edge (0101#1);
\path[-] (0001#1) edge (0011#1);

\path[-] (1100#1) edge (1110#1);
\path[-] (1100#1) edge (1101#1);

\path[-] (1010#1) edge (1110#1);
\path[-] (1010#1) edge (1011#1);

\path[-] (1001#1) edge (1101#1);
\path[-] (1001#1) edge (1011#1);

\path[-] (0110#1) edge (1110#1);
\path[-] (0110#1) edge (0111#1);

\path[-] (0101#1) edge (1101#1);
\path[-] (0101#1) edge (0111#1);

\path[-] (0011#1) edge (1011#1);
\path[-] (0011#1) edge (0111#1);

\path [-] (1111#1) edge (1110#1);
\path [-] (1111#1) edge (1101#1);
\path [-] (1111#1) edge (1011#1);
\path [-] (1111#1) edge (0111#1);
\end{scope}
\end{tikzpicture}
}

\newcommand{\CubeAsVertexOrigin}[1]{
\begin{tikzpicture}[scale=0.3, roundnode/.style={circle, draw=black, minimum size=10pt, inner sep=0pt}]

\node[roundnode, fill=orange!60] (0001#1) at (0, -3) {4};
\node[roundnode, fill=orange!60] (0011#1) at (1, -2) {};
\node[roundnode, fill=orange!60] (0101#1) at (0, -1) {};
\node[roundnode, fill=orange!60] (0111#1) at (1, 0) {};
\node[roundnode, fill=orange!60] (1001#1) at (2, -3){};
\node[roundnode, fill=orange!60] (1011#1) at (3, -2) {};
\node[roundnode, fill=orange!60] (1101#1) at (2, -1) {};
\node[roundnode, fill=orange!60] (1111#1) at (3, 0) {};

\node[roundnode, fill=yellow] (0000#1) at (-4, -4.5) {0};
\node[roundnode, fill=blue!30] (0010#1) at (-1, -1.5) {3};
\node[roundnode, fill=red!60] (0100#1) at (-4, 0.5) {2};
\node[roundnode, fill=blue!30] (0110#1) at (-1, 2.5) {};
\node[roundnode, fill=green!40] (1000#1) at (5, -4.5) {1};
\node[roundnode, fill=blue!30] (1010#1) at (7, -1.5) {};
\node[roundnode, fill=red!60] (1100#1) at (5, 0.5) {};
\node[roundnode, fill=blue!30] (1110#1) at (7, 2.5) {};


\begin{scope}[every edge/.style={draw=black,thick}]
\path [-] (0000#1) edge (1000#1);
\path [-] (0000#1) edge (0100#1);
\path [-] (0000#1) edge (0010#1);
\path [-] (0000#1) edge (0001#1);

\path[-] (1000#1) edge (1100#1);
\path[-] (1000#1) edge (1010#1);
\path[-] (1000#1) edge (1001#1);

\path[-] (0100#1) edge (1100#1);
\path[-] (0100#1) edge (0110#1);
\path[-] (0100#1) edge (0101#1);

\path[-] (0010#1) edge (1010#1);
\path[-] (0010#1) edge (0110#1);
\path[-] (0010#1) edge (0011#1);

\path[-] (0001#1) edge (1001#1);
\path[-] (0001#1) edge (0101#1);
\path[-] (0001#1) edge (0011#1);

\path[-] (1100#1) edge (1110#1);
\path[-] (1100#1) edge (1101#1);

\path[-] (1010#1) edge (1110#1);
\path[-] (1010#1) edge (1011#1);

\path[-] (1001#1) edge (1101#1);
\path[-] (1001#1) edge (1011#1);

\path[-] (0110#1) edge (1110#1);
\path[-] (0110#1) edge (0111#1);

\path[-] (0101#1) edge (1101#1);
\path[-] (0101#1) edge (0111#1);

\path[-] (0011#1) edge (1011#1);
\path[-] (0011#1) edge (0111#1);

\path [-] (1111#1) edge (1110#1);
\path [-] (1111#1) edge (1101#1);
\path [-] (1111#1) edge (1011#1);
\path [-] (1111#1) edge (0111#1);
\end{scope}
\end{tikzpicture}
}

\begin{figure}
    \centering
\begin{tikzpicture}[
scale=1.1,
vertexnode/.style={inner sep=0pt},
roundnode/.style={circle, draw=black, minimum size=10pt, inner sep=0pt},
starnode/.style={star, star points=4, star point ratio=0.4, draw=black, minimum size=6pt, inner sep=0pt},
rectnode/.style={rectangle, draw=black, minimum size=10pt, inner sep=0pt},
diamnode/.style={diamond, draw=black, minimum size=11pt, inner sep=0pt},
pentnode/.style={regular polygon, regular polygon sides=5, draw=black, minimum size=11pt, inner sep=0pt}
]


\foreach \i in {0,...,4}{\node at (\i, 4) {\i};}
\foreach \i in {0,...,4}{\node at (-1, 5+\i) {\i};}

\node[roundnode, fill=yellow] (50) at (0, 5) {};
\node[starnode, fill=yellow] (51) at (1, 5) {};
\node[rectnode, fill=yellow] (52) at (2, 5) {};
\node[diamnode, fill=yellow] (53) at (3, 5) {};
\node[pentnode, fill=yellow] (54) at (4, 5) {};
\node[roundnode, fill=green!40] (60) at (0, 6) {};
\node[starnode, fill=green!40] (61) at (1, 6) {};
\node[rectnode, fill=green!40] (62) at (2, 6) {};
\node[diamnode, fill=green!40] (63) at (3, 6) {};
\node[pentnode, fill=green!40] (64) at (4, 6) {};
\foreach \i in {0,...,4}{\draw (5\i) -- (6\i);}
\node[roundnode, fill=red!60] (70) at (0, 7) {};
\node[starnode, fill=red!60] (71) at (1, 7) {};
\node[rectnode, fill=red!60] (72) at (2, 7) {};
\node[diamnode, fill=red!60] (73) at (3, 7) {};
\node[pentnode, fill=red!60] (74) at (4, 7) {};
\foreach \i in {0,...,4}{\draw (7\i) -- (6\i);}
\node[roundnode, fill=blue!30] (80) at (0, 8) {};
\node[starnode, fill=blue!30] (81) at (1, 8) {};
\node[rectnode, fill=blue!30] (82) at (2, 8) {};
\node[diamnode, fill=blue!30] (83) at (3, 8) {};
\node[pentnode, fill=blue!30] (84) at (4, 8) {};
\foreach \i in {0,...,4}{\draw (7\i) -- (8\i);}
\node[roundnode, fill=orange!60] (90) at (0, 9) {};
\node[starnode, fill=orange!60] (91) at (1, 9) {};
\node[rectnode, fill=orange!60] (92) at (2, 9) {};
\node[diamnode, fill=orange!60] (93) at (3, 9) {};
\node[pentnode, fill=orange!60] (94) at (4, 9) {};
\foreach \i in {0,...,4}{\draw (9\i) -- (8\i);}

\foreach \i in {5,...,9}{
\draw (\i1) -- (\i0);
\draw (\i1) -- (\i2);
\draw (\i3) -- (\i2);
\draw (\i3) -- (\i4);
}
\path[-] (90) edge[bend left=40] (92);
\path[-] (91) edge[bend left=40] (93);
\path[-] (92) edge[bend left=40] (94);
\path[-] (90) edge[bend left=50] (93);
\path[-] (91) edge[bend left=50] (94);
\path[-] (90) edge[bend left=60] (94);

\path[-] (54) edge[bend right=40] (74);
\path[-] (64) edge[bend right=40] (84);
\path[-] (74) edge[bend right=40] (94);
\path[-] (54) edge[bend right=50] (84);
\path[-] (64) edge[bend right=50] (94);
\path[-] (54) edge[bend right=60] (94);

\node[vertexnode] (0001) at (0, -3) {\CubeAsVertex{0001}{pentnode}};
\node[vertexnode] (0011) at (1, -2) {\CubeAsVertex{0011}{pentnode}};
\node[vertexnode] (0101) at (0, -1) {\CubeAsVertex{0101}{pentnode}};
\node[vertexnode] (0111) at (1, 0) {\CubeAsVertex{0111}{pentnode}};
\node[vertexnode] (1001) at (2, -3){\CubeAsVertex{1001}{pentnode}};
\node[vertexnode] (1011) at (3, -2) {\CubeAsVertex{1011}{pentnode}};
\node[vertexnode] (1101) at (2, -1) {\CubeAsVertex{1101}{pentnode}};
\node[vertexnode] (1111) at (3, 0) {\CubeAsVertex{1111}{pentnode}};

\node[vertexnode] (0000) at (-4, -4.5) {\CubeAsVertex{0000}{roundnode}};
\node[vertexnode] (0010) at (-1, -1.5) {\CubeAsVertex{0010}{diamnode}};
\node[vertexnode] (0100) at (-4, 0.5) {\CubeAsVertex{0100}{rectnode}};
\node[vertexnode] (0110) at (-1, 2.5) {\CubeAsVertex{0110}{diamnode}};
\node[vertexnode] (1000) at (5, -4.5) {\CubeAsVertex{1000}{starnode}};
\node[vertexnode] (1010) at (7, -1.5) {\CubeAsVertex{1010}{diamnode}};
\node[vertexnode] (1100) at (5, 0.5) {\CubeAsVertex{1100}{rectnode}};
\node[vertexnode] (1110) at (7, 2.5) {\CubeAsVertex{1110}{diamnode}};


\begin{scope}[every edge/.style={draw=black,very thick}]
\path [-] (0000) edge (1000);
\path [-] (0000) edge (0100);
\path [-] (0000) edge (0010);
\path [-] (0000) edge (0001);

\path[-] (1000) edge (1100);
\path[-] (1000) edge (1010);
\path[-] (1000) edge (1001);

\path[-] (0100) edge (1100);
\path[-] (0100) edge (0110);
\path[-] (0100) edge (0101);

\path[-] (0010) edge (1010);
\path[-] (0010) edge (0110);
\path[-] (0010) edge (0011);

\path[-] (0001) edge (1001);
\path[-] (0001) edge (0101);
\path[-] (0001) edge (0011);

\path[-] (1100) edge (1110);
\path[-] (1100) edge (1101);

\path[-] (1010) edge (1110);
\path[-] (1010) edge (1011);

\path[-] (1001) edge (1101);
\path[-] (1001) edge (1011);

\path[-] (0110) edge (1110);
\path[-] (0110) edge (0111);

\path[-] (0101) edge (1101);
\path[-] (0101) edge (0111);

\path[-] (0011) edge (1011);
\path[-] (0011) edge (0111);

\path [-] (1111) edge (1110);
\path [-] (1111) edge (1101);
\path [-] (1111) edge (1011);
\path [-] (1111) edge (0111);
\end{scope}
\end{tikzpicture}
 \caption{Example of a $5\times 5$ grid (top, not all edges drawn) integrated in a 8-dimensional hypercube (bottom). The associated grid vertex of every vertex of the cube is indicated using the color and the shape of the vertex.}
    \label{fig:grid2cubeExample}
\end{figure}
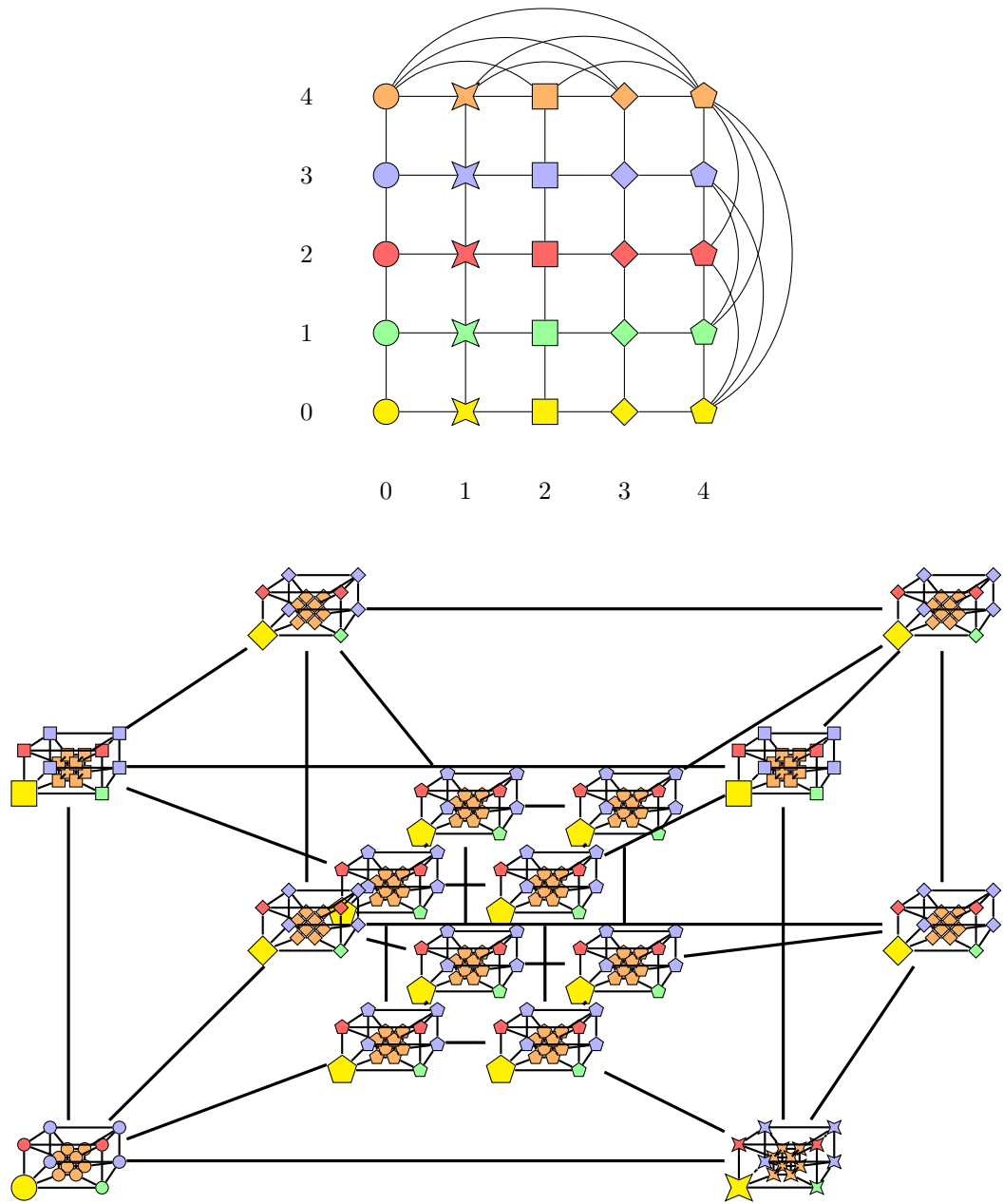

\begin{theorem}
\label{thm:generalizationIsUSO}
If \orientationGrid is a unique sink orientation, then the orientation constructed by \Cref{orientation:general} is a unique sink orientation. 
A circuit computing \orientationCube can be computed in polynomial time.
Given the sink of \orientationCube, the sink of \orientationGrid can be found in polynomial time.
\end{theorem}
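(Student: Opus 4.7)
My plan is to dispatch the two routine claims first. For the polynomial-time circuit construction, observe that each output bit $O(J)_{i,h}$ is computed from $J$ by extracting the colors $j_\ell$ and secondary color $j_i^*$ (each is a maximum over coordinates of $J$) and then making a single call to the circuit for $\sigma$; this yields a polynomial-size circuit for $O$. For sink extraction, I will show that the unique sink $J^*$ of $O$ has color tuple $(j_1(J^*), \ldots, j_d(J^*))$ equal to the unique sink of $\sigma$, so the color map itself provides the extraction in polynomial time.

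The heart of the proof is showing that $O$ is a USO, which I do by proving that every induced subcube $C' \subseteq C$ has a unique sink. Given $C'$, let $N_i$ be the set of dim-$i$ colors realized by the vertices of $C'$ and let $\Gamma' \subseteq \Gamma$ be the induced subgrid on $N_1 \times \cdots \times N_d$. Two structural consequences of \Cref{orientation:general} drive the argument: within each color class (the preimage of a grid vertex under $J \mapsto (j_1(J), \ldots, j_d(J))$) intersected with $C'$, the orientation is uniform in each free coordinate and hence admits a unique in-class sink; and any cube edge between two vertices lying in different color classes is oriented exactly like the corresponding grid edge of $\sigma$ between the two colors. By the grid-USO assumption applied to $\Gamma'$, there is a unique $\Gamma'$-sink $s$. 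Let $J^*$ denote the in-class sink of $s$'s color class in $C'$. Existence of a cube sink at $J^*$ is then immediate: in-class cube edges at $J^*$ are incoming by the in-class property, and every out-of-class cube edge at $J^*$ corresponds to a $\sigma$-edge from $s$ to one of its $\Gamma'$-neighbors, and is therefore incoming since $s$ is the $\Gamma'$-sink.

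For uniqueness, I take any cube sink $J$ of $C'$ with color tuple $s^{**}$ and show that $s^{**}$ must be the $\Gamma'$-sink, which forces $s^{**} = s$ and then $J = J^*$ by uniqueness of in-class sinks. Fix a dimension $i$ and any $\Gamma'$-neighbor $s'$ of $s^{**}$ in dim $i$, obtained by replacing $s^{**}_i$ with some $h \in N_i \setminus \{s^{**}_i\}$. When $h > s^{**}_i$, the coordinate $(i, h)$ is forced to be free in $C'$ (because $s'_i \in N_i$), and the cube-flip sink condition directly gives $\sigma(s^{**}, s') = O(J)_{i, h} = 0$. The main obstacle is the case $h < s^{**}_i$: such an $s'$ may not be reachable from $J$ by a single cube flip. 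Here I exploit that $J$ must itself be the in-class sink of its color class restricted to $C'$, so $J_{i, h'} = \sigma(s^{**}, v_{h'})$ for every free $(i, h')$ with $h' < s^{**}_i$, where $v_{h'}$ denotes the grid vertex agreeing with $s^{**}$ outside dim $i$ and having value $h'$ in dim $i$. Combining this with the boundary-flip sink condition $\sigma(s^{**}, v_{j_i^*(J)}) = O(J)_{i, s^{**}_i} = 0$ and case-splitting on whether $(i, s^{**}_i)$ is free or fixed-to-$1$, I show that any free $h'$ with $\sigma(s^{**}, v_{h'}) = 1$ would push $j_i^*(J) \geq h'$ and ultimately contradict the boundary condition, while the fixed-to-$1$ contribution $u := \max\{h' : (i, h')\text{ fixed to }1\}$ is directly resolved by $O(J)_{i, s^{**}_i} = 0$. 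Iterating over all dimensions $i$ establishes $s^{**}$ as a $\Gamma'$-sink and completes the proof.
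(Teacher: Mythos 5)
Your proof is correct and takes a genuinely different route from the paper's. The paper argues by contraposition via \emph{pseudo-USO} machinery: it assumes $\orientationCube$ is not a USO, takes a minimal bad subcube, invokes the Bosshard--Gärtner lemma that antipodal vertices of a PUSO have identical outmaps, and then carefully shifts the antipodal pair $(J,K)$ to $(J',K')$ so that the corresponding pair of grid vertices violates the Szabó--Welzl condition. You instead give a direct proof that every induced subcube $C'\subseteq C$ has a unique sink, by identifying the subgrid $\Gamma'$ of realized color tuples, exploiting two structural facts of \Cref{orientation:general} (uniform orientation within a color class, and out-of-class edges mirroring the corresponding grid edges), and reducing both existence and uniqueness of the $C'$-sink to the grid-USO property of $\Gamma'$. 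Both proofs work; yours is more elementary and self-contained (no PUSO lemma), and it is also more constructive, pinning down exactly which cube vertex is the sink of each subcube --- which in turn makes the sink-extraction step transparent. The paper's proof is shorter once the PUSO lemma is in hand, but gives less structural insight into the sinks of subcubes. One small caution in your uniqueness argument: the ``free $h'$ with $\sigma(s^{**},v_{h'})=1$ ultimately contradicting the boundary condition'' step should be explicitly restricted to $h' > u_i$ (equivalently $h'\in N_i$), since a free coordinate $h'$ \emph{below} the fixed-to-$1$ cap $u_i$ can satisfy $\sigma(s^{**},v_{h'})=1$ without contradiction (its color $h'$ is simply not in $N_i$, and the maximum defining $j_i^*(J)$ is dominated by $u_i$ anyway); with that restriction, the case split on whether $(i,s^{**}_i)$ is free or fixed-to-$1$ (the latter forcing $u_i=s^{**}_i$ and hence $N_i\cap[0,s^{**}_i)=\emptyset$) is complete.
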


To prove this theorem, we first need some additional ingredients. We use the following characterization of cube USOs:
\begin{lemma}[\szabo{}-Welzl Condition \cite{szabo2001usos}]\label[lemma]{lem:szabowelzl}
An orientation \orientationCube of a \dimension-dimensional cube $C$ is USO if and only if for all pairs of distinct vertices $J, K \in V(C)$, we have
\[\exists i\in J \xor K:\; \orientationCube(J)_i \not= \orientationCube(K)_i.\]
\end{lemma}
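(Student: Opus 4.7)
My plan is to establish the three assertions in turn. The polynomial-time computability of the circuit for $\orientationCube$ is immediate from \Cref{orientation:general}: each output bit $\orientationCube(J)_{i,h}$ is obtained by computing the primary colors $j_1, \ldots, j_d$ (each a max over one block of bits of $J$) and, if $h = j_i$, the secondary color $j_i^*$, then making a single query to the circuit for $\orientationGrid$ and possibly an XOR. For the sink extraction, given a sink $J$ of $\orientationCube$, I would show that the primary color vector $(j_1, \ldots, j_d)$ is the sink of $\orientationGrid$: combining the sink conditions at all cube coordinates $(i, h)$ within a single block $i$ forces $j_i^* = 0$, which in turn forces $J_{i, h} = 0$ for all $h \neq j_i$, so $J$ is a cube grid-vertex and every grid edge out of $(j_1, \ldots, j_d)$ is incoming.

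The main work is the USO property. My plan is to show directly that every induced subcube $C'$ of $C$ has a unique sink under $\orientationCube$, which is the definition of a USO. Fix such a $C'$ and, for each block $i$, let $B_i \subseteq [n_i]$ and $F_i \subseteq [n_i]$ be the sets of $h$ such that $(i, h)$ is fixed to $1$ and free in $C'$, respectively, and set $b_i = \max(B_i \cup \{0\})$. As $J$ varies over $C'$, the primary color $j_i$ in block $i$ ranges exactly over $N_i := \{b_i\} \cup \{h \in F_i : h > b_i\}$ (or $\{0\} \cup F_i$ when $b_i = 0$). Let $\Grid'$ be the induced subgrid of $\Grid$ with direction sets $N_1, \ldots, N_d$; since every induced subgrid of $\orientationGrid$ is a USO, $\Grid'$ has a unique sink $s$.

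I then claim that the unique sink of $C'$ is the vertex $J^* \in C'$ of color $s$ whose free bits satisfy $J^*_{i,h} = \orientationGrid(s, (s_1, \ldots, h, \ldots, s_d))$ for $h \in F_i$ with $h < s_i$. Existence is checked coordinate by coordinate: at a free $(i,h)$ with $h > s_i$ the defining formula yields $\orientationGrid(s, (s_1, \ldots, h, \ldots, s_d)) = 0$ because this is an edge of $\Grid'$ incoming to its sink $s$; at $h < s_i$ the XOR term cancels by the definition of $J^*$. Uniqueness follows by running the sink conditions in reverse: they force the color of any sink of $C'$ to itself satisfy the sink conditions of $\Grid'$, whence it must equal $s$, and the free bits are then determined as above.

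The main obstacle I expect is the free coordinate $(i, s_i)$ when $s_i \in F_i$: here the formula for $\orientationCube(J^*)_{i, s_i}$ involves the secondary color $(s_i)^*$ of $J^*$, and a priori the grid edge queried might lie outside $\Grid'$. The key observation to prove is that $(s_i)^* = b_i$, which holds because the bits $J^*_{i,h}$ for $h \in F_i$ with $b_i < h < s_i$ are all $0$ (such $h$ lie in $N_i$ and the edges from $s$ in $\Grid'$ to these neighbors are incoming), while the 1-bits coming from $B_i$ are bounded above by $b_i$. Once this is established, the queried grid edge lies in $\Grid'$ and is incoming to $s$, closing the argument.
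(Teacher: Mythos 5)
Your proposal does not address the statement at hand. \Cref{lem:szabowelzl} is the \szabo{}--Welzl characterization: for an \emph{arbitrary} orientation \orientationCube of a $\dimension$-dimensional cube, being a USO is equivalent to the condition that any two distinct vertices $J,K$ differ in their outmap in some dimension $i\in J\xor K$. This is a general combinatorial statement about cube orientations, quoted by the paper from \cite{szabo2001usos}, and it makes no reference to grids or to the construction of \Cref{orientation:general}. What you prove instead is (an attempt at) \Cref{thm:generalizationIsUSO}: that the specific orientation \orientationCube built from a grid USO \orientationGrid is a USO, that its circuit is computable in polynomial time, and that the grid sink can be read off the cube sink. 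Nowhere in your argument do the pair condition, the set $J\xor K$, or the claimed equivalence (``if and only if'') appear; so even if every step of your sketch were correct, it would establish a different theorem and neither direction of the lemma.

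To prove the lemma itself you would need both directions: (i) if the pair condition holds, then every face has a unique sink --- uniqueness is immediate (two sinks $J,K$ of a face both have outmap $0$ on all dimensions of that face, in particular on $J\xor K$), and existence follows, e.g., by noting that the condition makes the outmap restricted to the face's dimension set injective on the face, hence bijective by cardinality, so the all-zero pattern is attained exactly once; and (ii) if \orientationCube is a USO, then no two distinct vertices agree on all of $J\xor K$, which needs its own argument (typically via showing that the outmap of a USO restricted to any face is a bijection). The paper does not reprove any of this; it cites \cite{szabo2001usos} and then \emph{uses} the lemma, together with the pseudo-USO machinery of \Cref{lem:PUSO}, inside its proof of \Cref{thm:generalizationIsUSO} --- the theorem your subgrid-based verification actually targets. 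As a proof of \Cref{lem:szabowelzl}, your proposal is not salvageable because it proves the wrong statement.
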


We also need some more definitions for the proof. A vertex $J$ is in the \emph{upper $i$-facet} of $C$ if $J_i = 1$ and in the \emph{lower $i$-facet} otherwise.
Two vertices $J, K \in V(C)$ are called \emph{antipodal} iff $J_i \neq K_i$ for all $i \in [\dimension]$.

To prove \orientationCube to be a USO we rely on proofs by contradiction. While every orientation that is not a USO fails the \SWC for some pair of vertices, this is sometimes not enough to lead to a contradiction. We thus need something stronger: Every orientation that is not a USO contains a minimal face that is not a USO, i.e., a face that does not have a unique sink, but all of its proper faces do. There is a surprising amount of structure in these minimal faces, and they have been studied extensively by Bosshard and Gärtner \cite{gaertner2006lpuso} who named them \emph{pseudo USOs}:
\begin{definition}
A \emph{pseudo unique sink orientation} (PUSO) of $C$ is an orientation \orientationCube that does not have a unique sink, but every induced subcube $f \neq C$ has a unique sink.
\end{definition}
The main property of pseudo USOs that we will use in this section is that every pair of antipodal vertices fails the \SWC:
\begin{lemma}[{\cite[Corollary 6 and Lemma 8]{bosshard2017pseudo}}]\label[lemma]{lem:PUSO}
Let \orientationCube be a PUSO. Then, for every pair of antipodal vertices $J$ and $K$ it holds that $O(J)=O(K)$.
\end{lemma}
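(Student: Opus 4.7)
The plan is to prove the lemma via a counting argument built on two classical facts about USOs: that the outmap $O$ of any USO on an $n$-cube is a \emph{bijection} $V(C)\to\{0,1\}^n$, and that two distinct vertices fail to share any proper subcube if and only if they are antipodal. Throughout, I write $c(s):=|\{v\in V(C):O(v)=s\}|$ for the multiplicity with which a given outmap $s\in\{0,1\}^n$ is realized under the PUSO \orientationCube.

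First I would establish a parity identity for $c$. Fix any dimension $i$. The opposite $(n-1)$-faces $F_i^0$ and $F_i^1$ are proper subcubes of $C$, hence USOs by the PUSO assumption. Applying the bijection theorem to each face gives, for every restricted outmap $t\in\{0,1\}^{n-1}$ on the $n-1$ coordinates other than $i$, exactly one vertex of $F_i^0$ and exactly one vertex of $F_i^1$ whose outmap agrees with $t$ on those coordinates. The full outmap of each such vertex is either $(t,0)$ or $(t,1)$ in the $i$-th slot, so summing over both faces yields
\[
c(s) + c(s\oplus e_i) \;=\; 2 \qquad \text{for every } s\in\{0,1\}^n \text{ and every } i,
\]
where $e_i$ is the $i$-th standard basis vector. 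Since flipping any single coordinate of $s$ reverses $c$ about the value $1$, $c$ depends only on the parity of $|s|$: letting $a$ and $b$ denote its values on even- and odd-weight outmaps, we obtain $a+b=2$, so $(a,b)\in\{(0,2),(1,1),(2,0)\}$. The middle case $a=b=1$ would make $O$ a bijection and hence turn \orientationCube into a USO, contradicting the PUSO assumption. Consequently every realized outmap is realized by exactly two vertices.

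Next I would observe that two vertices with equal outmap must be antipodal. If $O(u)=O(v)$ with $u\neq v$ and $u,v$ not antipodal, they lie in some common $(n-1)$-face $F$. Since $F$ is a proper subcube of the PUSO, it is a USO, so the outmap restricted to the coordinates of $F$ is a bijection on $V(F)$. But the restrictions of $O(u)$ and $O(v)$ to $F$ coincide, contradicting injectivity. Combining this with the previous paragraph yields the lemma: for every vertex $v$, the set $\{w:O(w)=O(v)\}$ has size two, and the unique partner $w\neq v$ must be antipodal to $v$, i.e. $w=\bar v$; thus $O(v)=O(\bar v)$.

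The main obstacle in executing this plan is not conceptual but rather the need to import the right background cleanly. The bijection theorem for USO outmaps must be stated and used at two different scales: once on each proper $(n-1)$-face (to obtain the parity identity) and implicitly in the definition of PUSO (to know that the whole cube's outmap is \emph{not} a bijection). Care must also be taken when restricting outmaps to subcubes — the restricted outmap of the PUSO on a face coincides with the USO outmap of that face, which is immediate but should be noted so that the bijection theorem applies without ambiguity.
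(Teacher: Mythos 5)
Your argument is correct, and it is genuinely different from what the paper does: the paper does not prove this lemma at all, but imports it from Bosshard and Gärtner (combining their Corollary~6 and Lemma~8, whose derivation goes through their structural analysis of PUSO outmaps), whereas you give a short self-contained counting proof on top of the \szabo{}--Welzl bijection theorem. Your chain of steps checks out: each facet pair $F_i^0,F_i^1$ is a USO because all of its subcubes are proper subcubes of $C$, its outmap (which is exactly the restriction of $O$ to the coordinates other than $i$) is a bijection, and this yields $c(s)+c(s\oplus e_i)=2$ for all $s$ and $i$; flipping pairs of coordinates shows $c$ is constant on parity classes, and the case $a=b=1$ is excluded because a bijective outmap would give exactly one vertex with outmap $0^n$, i.e.\ a unique global sink, contradicting the PUSO definition (this direct phrasing is slightly cleaner than ``hence a USO'', since bijectivity alone does not imply USO in general, though in your context either formulation works). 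Finally, two distinct vertices with equal outmap cannot share a facet by injectivity of that facet's outmap, so each vertex's unique partner is its antipode, giving $O(J)=O(K)$ for antipodal pairs. The trade-off: the paper's citation leans on a more general theory of PUSOs (which also yields, e.g., dimension restrictions), while your proof is elementary, only needs the bijection theorem, and could be included in the paper to make \Cref{lem:PUSO} self-contained.
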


We are now ready to prove \Cref{thm:generalizationIsUSO}.
\begin{proof}[Proof of \Cref{thm:generalizationIsUSO}]
It is clear that using \Cref{orientation:general}, a circuit computing $O$ can be built from a circuit computing $\sigma$ in polynomial time.

Next we show that \orientationGrid being a USO implies that \orientationCube as constructed by \Cref{def:jPlus,def:jMinus,orientation:general} is a USO. We prove this by contraposition, and thus assume that \orientationCube is not a USO. 
Then $O$ contains a minimal face that is not a USO, i.e., a PUSO. Let $J, K \in V(C)$ be the pair of antipodal vertices in that PUSO that are its minimum and maximum vertices, i.e., $J\subset K$. By \Cref{lem:PUSO} we know that $\orientationCube(J)_i = \orientationCube(K)_i$ for all $i \in J \xor K$.
We will show that this implies that \orientationGrid is not a USO.


Let $M \subseteq [\dimensionA]$ be the set of grid-dimensions in which the grid-vertices associated with $J$ and $K$ have a different color, i.e., $m\in M$ if $j_m\neq k_m$. Then, we have
\begin{align*}
\forall m \in M, \quad &j_m < k_m \text{ and } (m, k_m) \in J \xor K, \text{ since $J\subset K$, and we also have}\\
\forall  m \in [\dimensionA]\setminus M, \quad &j_m = k_m \text{ and } (m, k_m) \notin J \xor K.
\end{align*}

If $M = \emptyset$, then $J$ and $K$ are both associated with the same vertex of the grid.
Since for every $m\in [d]$ and every $h\geq k_m$ we have $J_{m,h}=K_{m,h}$ and $J\neq K$, there must exist an $m\in [\dimensionA], h<k_m$ such that $(m,h)\in J\xor K$.
We then have:
\begin{itemize}
    \item $ \orientationCube(J)_{m, h} = \orientationGrid(\vertex{j_1, \dots,j_m, \dots,  j_\dimensionA}, \vertex{j_1, \dots, h, \dots  j_\dimensionA} ) \xor J_{m, h}.$
    \item $\orientationCube(K)_{m, h} = \orientationGrid(\vertex{j_1, \dots,j_m, \dots, j_\dimensionA}, \vertex{j_1, \dots, h, \dots  j_\dimensionA} ) \xor K_{m, h}.$
\end{itemize}
Since $J_{m, h} = 0$ and $K_{m, h} = 1$ we must have $\orientationCube(J)_{m, h} \neq \orientationCube(K)_{m, h} $, which is a contradiction to the assumption that $O(J)_i=O(K)_i$ for all $i\in J\xor K$. Thus we know that if \orientationCube is not USO, then $M \neq \emptyset$.

If $M\neq \emptyset$, there are two different vertices of the grid associated with $J$ and $K$: \vertex{j_1, \dots, j_\dimensionA} and \vertex{k_1, \dots, k_\dimensionA}, respectively. These vertices span a subcube in the grid \Grid in exactly the dimensions of $M$, involving the directions $j_m$ and $k_m$ for all $m\in M$.


Let $M'\subseteq M$ be the set $\{m \;|\; m\in M \text{ and } (m,k_m^*)\in J\xor K\}$. For all $m\in M'$ we know that $k_m >k_m^* > j_m$, since $J\subset K$. We now consider the vertices $J'$ and $K'$ of the cube obtained by walking from $J$ and $K$ in the dimensions $(m,k_m^*)$ for all $m\in M'$, i.e., 
$$J'=J\xor\{(m,k_m^*) \;|\; m\in M'\} \text{ and } K'=K\xor\{(m,k_m^*)\;|\; m\in M'\}.$$

We know that the vertex of the grid associated with $K'$ is still \vertex{k_1, \dots, k_\dimensionA}, the same as with~$K$. On the other hand, the vertex associated with $J'$ is \vertex{l_1,\ldots,l_d}, where $l_i=k_i^*$ if $i\in M'$ and $l_i=j_i$ otherwise. Furthermore, $J'$ and $K'$ still form a pair of antipodal vertices in the PUSO spanned by $J$ and $K$, since $J'\xor J=K'\xor K\subseteq J\xor K$. Thus, we know that $O(J')_i=O(K')_i$ for all $i\in J\xor K$.

Let us now compute the orientations of $O(J')$ and $O(K')$ in the dimensions $(m,k_m)$ for all $m\in M$. Note that for all such $m$, we have $l_m=k_m^*<k_m$:
For $(m, k_m^*) \in J \xor K$, i.e., for $m\in M'$ we already argued this above.
For $(m, k_m^*) \notin J \xor K$ and $k_m^*\neq 0$, we know that $J_{m, k_m^*} = K_{m, k_m^*} = J'_{m, k_m^*} = K'_{m, k_m^*}=1$, and thus $k_m^* = j_m=l_m$.
Finally, for $k_m^*=0$ we must have that $j_m=l_m=0$ too.

By \Cref{orientation:general} we now get  for all $m\in M$:
\begin{itemize}
\item $\orientationCube(J')_{(m,k_m)} = \orientationGrid(\vertex{l_1, \dots,l_m, \dots l_\dimensionA}, \vertex{l_1, \dots, k_m, \dots l_\dimensionA})$ by Case 2, since $l_m<k_m$.
\item $\orientationCube(K')_{(m,k_m)} = \orientationGrid(\vertex{k_1, \dots,k_m, \dots k_\dimensionA}, \vertex{k_1, \dots, k_m^*, \dots k_\dimensionA})$ by Case 3. Since $k_m^*=l_m$, this is equivalent to $\orientationCube(K')_{(m,k_m)} = \orientationGrid(\vertex{k_1, \dots,k_m, \dots k_\dimensionA}, \vertex{k_1, \dots, l_m, \dots k_\dimensionA})$.
\end{itemize}

Since we have that $O(J')_{(m,k_m)}=O(K')_{(m,k_m)}$ for all $m\in M$, we can see that the vertices of the grid $(l_1,\ldots,l_d)$ and $(k_1,\ldots,k_\dimensionA)$ have the same outmap in the subcube they span (which is exactly spanned by the dimensions $M$).
See \Cref{fig:Contradiction} for an example.
Thus, $\sigma$ cannot describe a USO.

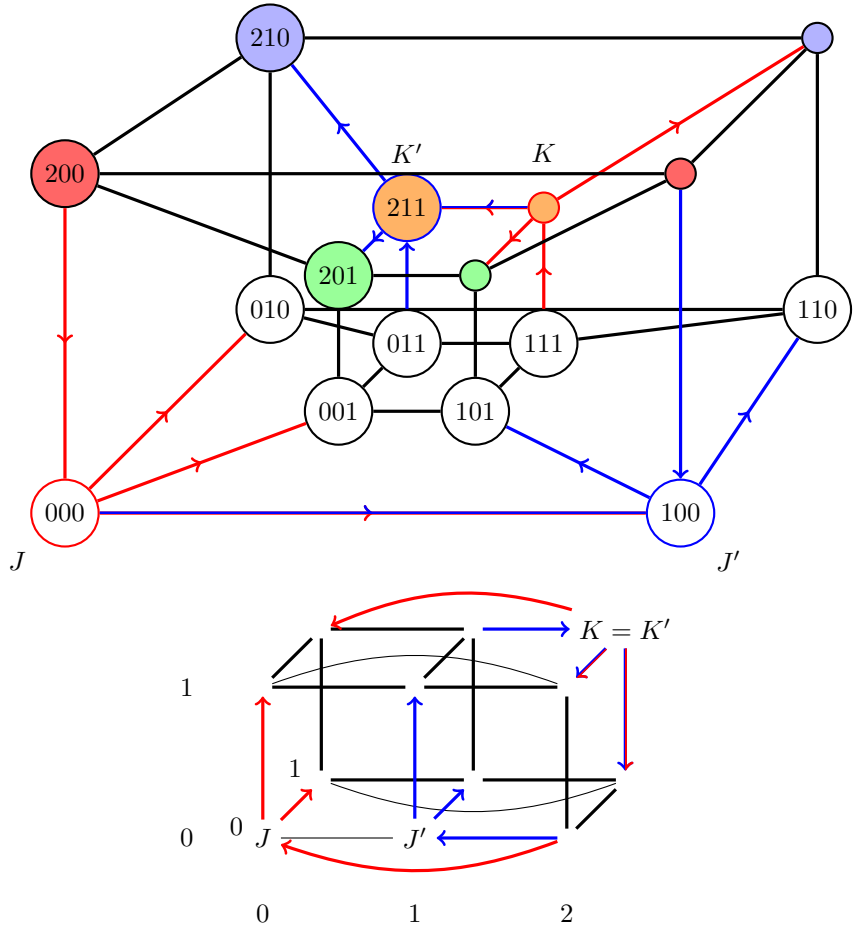
\begin{figure}[h!]
\centering
\begin{tikzpicture}[scale=0.9, roundnode/.style={circle, draw=black, thick, minimum size=4mm}]

\node[roundnode] (0001) at (0, -3) {$001$};
\node[roundnode] (0011) at (1, -2) {$011$};
\node[roundnode, fill=green!40] (0101) at (0, -1) {$201$};
\node[roundnode, fill=orange!60, draw=blue] (0111) at (1, 0) {$211$};
\node[roundnode] (1001) at (2, -3){$101$};
\node[roundnode] (1011) at (3, -2) {$111$};
\node[roundnode, fill=green!40] (1101) at (2, -1) {};
\node[roundnode, fill=orange!60, draw=red] (1111) at (3, 0) {};

\node[roundnode, draw=red] (0000) at (-4, -4.5) {$000$};
\node[roundnode] (0010) at (-1, -1.5) {$010$};
\node[roundnode, fill=red!60] (0100) at (-4, 0.5) {$200$};
\node[roundnode, fill=blue!30] (0110) at (-1, 2.5) {$210$};
\node[roundnode, draw=blue] (1000) at (5, -4.5) {$100$};
\node[roundnode] (1010) at (7, -1.5) {$110$};
\node[roundnode, fill=red!60] (1100) at (5, 0.5) {};
\node[roundnode, fill=blue!30] (1110) at (7, 2.5) {};

\node (J) at  (-4.7, -5.2) {$J$};
\node (J) at  (3, 0.8) {$K$};

\node (J) at  (5.7, -5.2) {$J'$};
\node (J) at  (1, 0.8) {$K'$};

\begin{scope}[very thick,decoration={markings,mark=at position 0.5 with {\arrow{>}}}] 

\draw [postaction={decorate}, red] (0000) -- (1000);
\begin{scope}
\clip (-4,-4.5) -- (5,-4.5) -- (5,-4) -- (-4,-4) -- cycle ;
\draw [postaction={decorate}, blue] (0000) -- (1000);
\end{scope}

\draw [postaction={decorate}, red] (0100) -- (0000);
\draw [postaction={decorate}, red] (0000) -- (0010);
\draw [postaction={decorate}, red] (0000) -- (0001);

\draw [postaction={decorate}, blue] (1000) -- (1010);
\draw [postaction={decorate}, blue] (1000) -- (1001);

\path[-] (0100) edge (0110);
\path[-] (0100) edge (0101);

\path[-] (0010) edge (1010);
\path[-] (0010) edge (0110);
\path[-] (0010) edge (0011);

\path[-] (0001) edge (1001);
\path[-] (0001) edge (0101);
\path[-] (0001) edge (0011);

\path[-] (1100) edge (1110);

\path[-] (1010) edge (1110);
\path[-] (1010) edge (1011);

\path[-] (1001) edge (1101);
\path[-] (1001) edge (1011);

\path[-] (0110) edge (1110);

\path[-] (0011) edge (1011);

\path[->, blue] (0011) edge (0111);
\draw [postaction={decorate}, blue] (0111) -- (0110);
\draw [postaction={decorate}, blue] (0111) -- (0101);

\draw [postaction={decorate}, red] (1111) -- (1110);
\draw [postaction={decorate}, red] (1111) -- (1101);
\draw [postaction={decorate}, red] (1011) -- (1111);

\draw [postaction={decorate}, red] (1111) -- (0111);
\begin{scope}
\clip (3,0) -- (1,0) -- (1,0.5) -- (3,0.5) -- cycle ;
\draw [postaction={decorate}, blue] (1111) -- (0111);
\end{scope}

\path[-] (0100) edge (1100);
\path[->, blue] (1100) edge (1000);
\path[-] (1100) edge (1101);
\path[-] (0101) edge (1101);

\end{scope}
\end{tikzpicture}
\begin{tikzpicture}
\node (000) at (0,0,0) {$J$};   
\node (010) at (0,2,0) {};
\node (110) at (2,2,0) {};
\node (100) at (2,0,0) {$J'$};
\node (001) at (0,0,-2) {};   
\node (011) at (0,2,-2) {};
\node (111) at (2,2,-2) {};
\node (101) at (2,0,-2) {};
\node (211) at (4,2,-2) {$K=K'$};
\node (201) at (4,0,-2) {};
\node (210) at (4,2,0) {};
\node (200) at (4,0,0) {};

\foreach \i in {0,...,2}{\node at (2*\i, -1, 0) {\i};}
\foreach \i in {0,1}{\node at (-1, 2*\i, 0) {\i};}
\foreach \i in {0,1}{\node at (-1.5, -1, - 2*\i - 3) {\i};}

\path[-] (001) edge[bend right=20] (201);
\path[->, very thick, red] (000) edge (010); 
\path[-] (000) edge (100);
\path[->, very thick, red] (000) edge (001);
\path[-] (010) edge[bend left=20] (210);
\path[-, very thick] (010) edge (110);
\path[-, very thick] (010) edge (011);
\path[<-, very thick, blue] (100) edge (200);
\path[->, very thick, blue] (100) edge (101);
\path[-, very thick] (001) edge (101);
\path[-, very thick] (001) edge (011);
\path[-, very thick] (111) edge (011);
\path[-, very thick] (111) edge (101);
\path[-, very thick] (111) edge (110);
\path[->, very thick, blue] (111) edge (211);
\path[<-, very thick, blue] (201) edge (211);
\begin{scope}
    \clip (4,2,-2) -- (4,0,-2) -- (4.5,0,-2) -- (4.5,2,-2) -- cycle ;
    \path[<-, very thick, red] (201) edge (211);
\end{scope}

\path[-, very thick] (201) edge (101);
\path[-, very thick] (201) edge (200);
\path[-, very thick] (210) edge (200);
\path[-, very thick] (210) edge (110);
\path[<-, very thick, blue] (210) edge (211);
\begin{scope}
    \clip (4,2,-2) -- (4,2,0) -- (4,1.5,0) -- (4,1.5,-2) -- cycle ;
    \path[<-, very thick, red] (210) edge (211);
\end{scope}

\path[<-, very thick, red] (000) edge[bend right=20] (200);
\path[<-, very thick, red] (011) edge[bend left=20] (211);
\path[->, very thick, blue] (100) edge (110);

\end{tikzpicture}
\caption{Example: The vertices $J\subset K$ are minimal and maximal vertices of a pseudo USO in the cube. Their analogues in the grid do not violate the \SWC. Thus, we look at $J'$ and $K'$, which are also antipodal vertices in the pseudo USO and thus also violate the condition in the cube. Their analogues in the grid now also violate the \SWC (see the blue edges).}
\label{fig:Contradiction}
\end{figure}

It only remains to prove that given the global sink of \orientationCube, we can derive the global sink of \orientationGrid in polynomial time. To do this, we simply show that for the global sink of \orientationCube, its associated vertex of $\orientationGrid$ must be the global sink. Since we already proved that \orientationCube is a USO, it is enough to show that given the unique sink $j_1,\ldots,j_d$ of $\orientationGrid$, its associated grid-vertex $J$ in \orientationCube must be \emph{a} global sink. Since the colors of $J$ are $j_1,\ldots,j_d$, and $J_{i,h}=0$ for $h<j_i$, we must have that $O(J)_{i,h}=0$ for all $i,h$ by \Cref{orientation:general}, and thus $J$ is a (and thus the unique) global sink, proving the theorem.
\end{proof}

\section{Open Questions}

\myparagraph{Total search problem versions.} 
All reductions we provided in this paper are between the promise problem versions of the involved problems. It may be interesting to also find reductions between the total search problem versions.

\myparagraph{Missing reductions.}
We were unable to show that finding an $\alpha$-cut for arbitrary $\alpha$ is not more difficult than finding it for $\alpha_i\in\{1,|P_i|\}$. There might thus be a difference in the computational complexity of \aHS and \swsHS.
Similarly, we also do not know whether $\alpha$-\GridUSO (searching for a vertex with a specified number of outgoing edges per dimension) is not more difficult than \GridUSO (searching for a sink).
Note that in the case of \aHS, it is at least known that the problem is contained in \Comp{UEOPL}. This is not known for $\alpha$-\GridUSO.

\myparagraph{Semantics of the \GridUSO to \CubeUSO reduction.}
On the levels of USOs, we do not know the exact operations that the reductions from \pglcp to \plcp and \swsHS to \swsTwoHS perform. It would be very interesting to analyze whether these reductions actually perform the same operation as the \GridUSO to \CubeUSO reduction (\Cref{thm:generalizationIsUSO}), i.e., whether these reductions commute. It would also be interesting to study whether the \GridUSO to \CubeUSO preserves realizability, i.e., whether if there exists a \pglcp instance inducing a certain grid USO, there also exists a \plcp instance inducing the resulting cube USO.

\newpage
\bibliography{papers,USO}

\end{document}